\keywords{Curry-Howard correspondence, set theory, combinatory logic, lambda-calculus, axiom of choice}
\newcommand{\succcc}{\succ\hspace{-1em}\succ}
\newcommand{\HH}{\mathbb{H}}
\newcommand{\NN}{\mathbb{N}}
\newcommand{\RR}{\mathbb{R}}
\newcommand{\BBB}{\mbox{\sffamily B}}
\newcommand{\BBb}{\mbox{\bfseries\sffamily B}}
\newcommand{\CCC}{\mbox{\sffamily C}}
\newcommand{\CCb}{\mbox{\bfseries\sffamily C}}
\newcommand{\CF}{\mathfrak{C}}
\newcommand{\pF}{\mathfrak{p}}
\newcommand{\III}{\mbox{\sffamily I}}
\newcommand{\IIb}{\mbox{\bfseries\sffamily I}}
\newcommand{\HHH}{\mbox{\sffamily H}}
\newcommand{\KKK}{\mbox{\sffamily K}}
\newcommand{\KKb}{\mbox{\bfseries\sffamily K}}
\newcommand{\WWW}{\mbox{\sffamily W}}
\newcommand{\WWb}{\mbox{\bfseries\sffamily W}}
\newcommand{\VV}{\mbox{\sffamily\bf V}}
\newcommand{\ccb}{\mbox{\bfseries\sffamily cc}}
\newcommand{\aaa}{\mbox{\sffamily a}}
\newcommand{\ee}{\mbox{\rm\sffamily e}}
\newcommand{\hh}{\mbox{\sffamily h}}
\newcommand{\pp}{\mbox{\sffamily p}}
\newcommand{\kkk}{\mbox{\bf\sffamily k}}
\newcommand{\indi}{^{\mbox{\footnotesize int}}}
\newcommand{\PL}{\mbox{\sffamily PL}}
\newcommand{\forcec}{\raisebox{-0.2ex}{\small$\;\Vdash\,$}}
\newcommand{\nforcec}{\raisebox{-0.2ex}{\small$\;\nVdash\,$}}
\newcommand{\et}{{\scriptstyle\land}}
\newcommand{\lbr}{\langle}
\newcommand{\rbr}{\rangle}
\newcommand{\gl}{\gimel}
\newcommand{\ZFe}{ZF$_\varepsilon$}
\newcommand{\hto}{\hookrightarrow}
\newcommand{\llp}{(\hspace{-0.25em}(\hspace{-0.25em}(}
\newcommand{\rrp}{)\hspace{-0.25em})\hspace{-0.25em})}
\newcommand{\cerclefin}[1]{\textcircled{\raisebox{-0.6pt}{\footnotesize #1}}}
\newcommand{\cercle}[1]{\raisebox{0.7pt}{\cerclefin{#1}\hspace{-1.08em}\cerclefin{#1}}}
\begin{document}
\title{A program for the full axiom of choice}
\author{Jean-Louis Krivine}
\address{Université de Paris - C.N.R.S.}
\email{krivine@irif.fr}

\begin{abstract}
The theory of classical realizability is a framework for the Curry-Howard correspondence which enables to associate a program with each proof in Zermelo-Fraenkel set theory. But, almost all the applications of mathematics in physics, probability, statistics, etc.\ use Analysis i.e.\ the axiom of dependent choice (DC) or even the (full) axiom of choice (AC). It is therefore important to find explicit programs for these axioms. Various solutions have been found for DC, for instance the lambda-term called "bar recursion" or the instruction "quote" of LISP. We present here the first program for AC.
\end{abstract}

\maketitle

\section*{Introduction}
The Curry-Howard correspondence  enables to associate a program with each proof in classical natural deduction. But mathematical proofs not only use the rules of natural deduction, but also \emph{axioms}, essentially those of Zermelo-Fr\ae nkel set theory with axiom of choice. In order to transform these proofs into programs, we must therefore associate with each of these axioms suitable instructions, which is far from obvious.

The theory of \emph{classical realizability (c.r.)} solves this problem for all axioms of ZF\/, by means of a very rudimentary programming language:  the $\lbd_c$-calculus, that is to say the $\lbd$-calculus with a control instruction~\cite{griffin}.

The programs obtained in this way can therefore be written in practically any programming language. They are said to \emph{realize} the axioms of ZF\/.

But, almost all the applications of mathematics in physics, probability, statistics, etc.\ use \emph{Analysis}, that is to say \emph{the axiom of dependent choice}. The first program for this axiom, known since 1998 \cite{bbc}, is a pure $\lbd$-term called \emph{bar recursion}~\cite{spec,BO,stre}. In fact, c.r.\ shows that it provides also a program for
the \emph{continuum hypothesis}~\cite{kri5}.

Nevertheless, this solution requires the programming language to be limited to $\lbd_c$-calculus, prohibiting any other instruction, which is a severe restriction.

Classical realizability provides other programs for this axiom~\cite{kri1,kri2,kri3}, which require an additional
instruction (clock, signature, \ldots\,, or as in this paper, introduction of \emph{fresh variables}). On the other hand,
this kind of solution is very flexible regarding the programming language.

There remained, however, the problem of the \emph{full} axiom of choice. It is solved here, by means of new instructions
which allow \emph{branching} or \emph{parallelism}\footnote{A detailed study of the effect of such instructions on the
characteristic Boolean algebra $\gl2$ is made in~\cite{geoffroy}.}.
Admittedly, the program obtained in this way is rather complicated and we may hope for new solutions as simple
as bar recursion or clock. However, it already shows how we can turn all proofs of ZFC into programs
(more precisions about this in Remark~\ref{finrem} at the end of the paper).
Note that \emph{the proof is constructive} i.e.\ it gives explicitly a program for AC: you ''merely'' need to formalize this proof itself in natural deduction and apply the proof-program correspondence
to the result.

\subsection*{Outline of the paper}
The framework of this article is the \emph{theory of classical realizability}, which is explained in detail
in~\cite{kri1, kri2, kri3}. For the sake of simplicity, we will often refer to these papers for definitions and standard notations.
\begin{description}
\item[Section 1] Axioms and properties of \emph{realizability algebras (r.a.)} which are first order structures, a generalization of both combinatory algebras and ordered sets of forcing conditions.
Intuitively, their elements are programs.
\item[Section 2] Each r.a.\ is associated with a \emph{realizability model (r.m.)},
a generalization of the generic model in the theory of forcing. This model satisfies a set theory \ZFe\
which is a conservative extension of ZF\/, with a strong non extensional membership relation $\veps$.
\item[Section 3] Generic extensions of realizability algebras and models. We explain how a generic extension of a given
r.m.\ can be handled by means of a suitable extension of the corresponding r.a. A little technical because it mixes forcing and realizability.
\item[Section 4] Definition and properties of a particular realizability algebra
$\mathfrak{A}_0$. It contains an \emph{instruction of parallelism} and the programming language allows the introduction of many other instructions. This is therefore, in fact, a class of algebras.

At this stage, we obtain a program for the \emph {axiom of well ordered choice (WOC): the product of a family of non-empty sets whose index set is well ordered is non-empty.}

We get, at the same time, a new proof that WOC is weaker than AC (joint work with Laura Fontanella;
cf.~\cite{TJAC} for the usual proof).
\item[Section 5] Construction of a generic extension $\mathfrak{A}_1$ of the algebra
$\mathfrak{A}_0$ which allows to \emph{realize the axiom of choice: every set can be well ordered.}
\end{description}
Sections~\ref{e_ge} and~\ref{alg_A0} are independent.

Thanks to Asaf Karagila for several fruitful discussions \cite{kara}. He observed that, by exactly the same method,
we obtain a program, not only for AC, but for every axiom which can be shown compatible with ZFC \emph{by forcing}
(Theorem~\ref{theta_AC+A}); for instance CH, $2^{\aleph_0}=\aleph_2$ + Martin's axiom, etc.\ (but this does not apply to V = L).
\nocite{*}

\section{Realizability algebras (r.a.)}\label{sect_ra}
It is a first order structure, which is defined in~\cite{kri2}. We recall here briefly this definition and some
essential properties.

A \emph{realizability algebra} ${\mathcal A}$ is made up of three sets:
$\Lbd$ (the set of \emph{terms}), $\Pi$ (the set of \emph{stacks}),
$\Lbd\star\Pi$ (the set of \emph{processes}) with the following operations:

\begin{itemize}
\item[]$(\xi,\eta)\mapsto(\xi)\eta$ from $\Lbd^2$ into $\Lbd$ (\emph{application});
\item[]$(\xi,\pi)\mapsto\xi\ps\pi$ from $\Lbd\fois\Pi$ into $\Pi$ (\emph{push});
\item[]$(\xi,\pi)\mapsto\xi\star\pi$ from $\Lbd\fois\Pi$ into $\Lbd\star\Pi$ (\emph{process});
\item[]$\pi\mapsto\kk_\pi$ from $\Pi$ into $\Lbd$ (\emph{continuation}).
\end{itemize}
There are, in $\Lbd$, distinguished elements $\BBB,\CCC,\III,\KKK,\WWW,\Ccc$, called
\emph{elementary combinators} or \emph{instructions}.

{\bfseries Notation.} The term $(\ldots(((\xi)\eta_1)\eta_2)\ldots)\eta_n$ will be also written
$(\xi)\eta_1\eta_2\ldots\eta_n$ or even $\xi\eta_1\eta_2\ldots\eta_n$.
For instance: \ $\xi\eta\zeta=(\xi)\eta\zeta=(\xi\eta)\zeta=((\xi)\eta)\zeta$. 

We define a preorder on $\Lbd\star\Pi$, denoted by $\succ$, which is called \emph{execution};

$\xi\star\pi\succ\xi'\star\pi'$ is read as:
\emph{the process $\xi\star\pi$ reduces to $\xi'\star\pi'$.}

It is the smallest reflexive and transitive binary relation, such that, for any $\xi,\eta,\zeta\in\Lbd$ and $\pi,\varpi\in\Pi$, we have:

\begin{itemize}
\item[]$(\xi)\eta\star\pi\succ\xi\star\eta\ps\pi$ (\emph{push}).
\item[]$\III\star\xi\ps\pi\succ\xi\star\pi$ (\emph{no operation}).
\item[]$\KKK\star\xi\ps\eta\ps\pi\succ\xi\star\pi$ (\emph{delete}).
\item[]$\WWW\star\xi\ps\eta\ps\pi\succ\xi\star\eta\ps\eta\ps\pi$ (\emph{copy}).
\item[]$\CCC\star\xi\ps\eta\ps\zeta\ps\pi\succ\xi\star\zeta\ps\eta\ps\pi$ (\emph{switch}).
\item[]$\BBB\star\xi\ps\eta\ps\zeta\ps\pi\succ\xi\star(\eta)\zeta\ps\pi$ (\emph{apply}).
\item[]$\Ccc\star\xi\ps\pi\succ\xi\star\kk_\pi\ps\pi$ (\emph{save the stack}).
\item[]$\kk_\pi\star\xi\ps\varpi\succ\xi\star\pi$ (\emph{restore the stack}).
\end{itemize}

We are also given a subset $\bbot$ of $\Lbd\star\Pi$, called ''the pole'', such that:

\centerline{$\xi\star\pi\succ\xi'\star\pi'$, $\xi'\star\pi'\in\bbot$ \ $\Fl$ \ $\xi\star\pi\in\bbot$.}

Given two processes $\xi\star\pi,\xi'\star\pi'$, the notation $\xi\star\pi\succcc\xi'\star\pi'$ means:

\centerline{$\xi\star\pi\notin\bbot\Fl\xi'\star\pi'\notin\bbot$.}

Therefore, obviously, \ $\xi\star\pi\succ\xi'\star\pi'\;\Fl\;\xi\star\pi\succcc\xi'\star\pi'$.

Finally, we choose a set of terms \ \PL$_{\mathcal A}\subset\Lbd$, containing the elementary combinators:
$\BBB,\CCC,\III,\KKK,\WWW,\Ccc$ and closed by application. They are called the
\emph{proof-like terms of the algebra~${\mathcal A}$}. We write also \PL\ instead of
\PL$_{\mathcal A}$ if there is no ambiguity about ${\mathcal A}$.

The algebra ${\mathcal A}$ is called \emph{coherent} if, for every proof-like term \
$\theta\in\mbox{\PL}_{\mathcal A}$, there exists a stack $\pi$ such that $\theta\star\pi\notin\bbot$.

\begin{rem} A \emph{set of forcing conditions} can be considered as a degenerate case of realizability
algebra, if we present it in the following way:

An inf-semi-lattice $P$, with a greatest element $\1$ and an
initial segment $\bbot$ of $P$ (the set of \emph{false} conditions). Two conditions $p,q\in P$ are called
\emph{compatible} if their g.l.b.\ $p\et q$ is not in $\bbot$.

We get a realizability algebra if we set $\Lbd=\Pi=\Lbd\star\Pi=P$; $\BBB=\CCC=\III=\KKK=\WWW=\Ccc=\1$ and \PL$=\{\1\}$;
$(p)q=p\ps q=p\star q=p\et q$ and $\kk_p=p$. The preorder $p\succ q$ is defined as $p\le q$, i.e.\ $p\et q=p$. The condition of coherence is $\1\notin\bbot$.
\end{rem}

We can translate $\lbd$-terms into terms of $\Lbd$ built with the combinators $\BBB,\CCC,\III,\KKK,\WWW$
in such a way that weak head reduction is valid:

\centerline{$\lbd x\,t[x]\star u\ps\pi\succ t[u/x]\star\pi$}

where $\lbd x\,t[x],u$ are terms and $\pi$ is a stack.

This is done in~\cite{kri2}. Note that the usual $(\KKK$,{\sffamily S})-translation does not work.

In order to write programs, $\lbd$-calculus \nocite{howard} is much more intuitive than combinatory algebra, so that
we use it extensively in the following. But combinatory algebra \cite{curry} is better for theory, in particular
because it is a first order structure.

\section{Realizability models (r.m.)}\label{sect_rm}
The framework is very similar to that of forcing, which is anyway a particular case.

We use a first order language with three binary symbols $\neps,\notin,\subset$
($\eps$ is intended to be a strong, non extensional membership relation; $\in$ and $\subset$ have their usual meaning in ZF).

First order formulas are written with the only logical symbols $\to,\pt,\bot,\top$.

The symbols $\neg,\land,\lor,\dbfl,\ex$ are defined with them in the usual way.

Given a  realizability algebra, we get a \emph{realizability model (r.m.)} as follows:

We start with a model ${\mathcal M}$ of ZFC (or even ZFL) called the \emph{ground model}.
The axioms of ZFL are written with the sublanguage $\{\notin,\subset\}$.

We build a model ${\mathcal N}$ of a new set theory \ZFe, in the language
$\{\neps,\notin,\subset\}$, the axioms of which are given in~\cite{kri1}. We recall them below, using
the following rather standard abbreviations:

$F_1\to(F_2\to\ldots(F_n\to G)\ldots)$ is written $F_1,F_2,\ldots,F_n\to G$ or even $\vec{F}\to G$.

We use the notation $\ex x\{F_1,F_2,\ldots,F_n\}$ for $\pt x(F_1\to(F_2\to\cdots\to(F_n\to\bot)\cdots))\to\bot$.

Of course, $x\eps y$ and $x\in y$ are the formulas $x\neps y\to\bot$ and $x\notin y\to\bot$.

The notation $x=_\in\!y\to F$ means $x\subset y,y\subset x\to F$. Thus $x=_\in\!y$,
which represents the usual (extensional) equality of sets, is the pair of formulas $\{x\subset y,y\subset x\}$.

We use the notations:

$(\pt x\eps a)F(x)$ for $\pt x(\neg F(x)\to x\neps a)$ and
$(\ex x\eps a)\vec{F}(x)$ for $\neg\pt x(\vec{F}(x)\to x\neps a)$.

For instance,\ $(\ex x\eps y)(t=_\in\!u)$ is the formula
$\neg\pt x(t\subset u,u\subset t\to x\neps y)$.

The axioms of \ZFe\ are the following:

\begin{enumerate}
    \setcounter{enumi}{-1}
\item Extensionality axioms.

$\pt x\pt y(x\in y\dbfl(\ex z\eps y)x=_\in\!z)$;
$\pt x\pt y(x\subset y\dbfl(\pt z\eps x)z\in y)$.
\item Foundation scheme.

$\pt\vec{a}(\pt x((\pt y\eps x)F(y,\vec{a})\to F(x,\vec{a}))\to\pt x\,F(x,\vec{a}))$
for every formula $F(x,a_1,\ldots,a_n)$.
\end{enumerate}

The intuitive meaning of these axioms is that $\varepsilon$ is a well
founded relation and the relation~$\in$ is obtained by ``~collapsing~''
$\varepsilon$ into an extensional relation.

The following axioms essentially express that the relation $\eps$
satisfies the Zermelo-Fraenkel axioms {\em except extensionality}.
\begin{enumerate}
    \setcounter{enumi}{1}
\item Comprehension scheme.

$\pt\vec{a}\pt x\ex y\pt z(z\eps y\dbfl(z\eps x\land F(z,\vec{a}))$
for every formula $F(z,\vec{a})$.
\item Pairing axiom.

$\pt a\pt b\ex x\{a\eps x, b\eps x\}$.
\item Union axiom.

$\pt a\ex b(\pt x\eps a)(\pt y\eps x)\,y\eps b$.
\item Power set axiom.

$\pt a\ex b\pt x(\ex y\eps b)
\pt z(z\eps y\dbfl(z\eps a\land z\eps x))$.
\item Collection scheme.

$\pt\vec{a}\pt x\ex y(\pt u\eps x)(\ex v\,F(u,v,\vec{a})\to(\ex v\eps y)F(u,v,\vec{a}))$
for every formula $F(u,v,\vec{a})$.
\item Infinity scheme.

$\pt\vec{a}\pt x\ex y\{x\eps y,(\pt u\eps y)
(\ex v\,F(u,v,\vec{a})\to(\ex v\eps y)F(u,v,\vec{a}))\}$
for every formula $F(u,v,\vec{a})$.
\end{enumerate}

It is shown in~\cite{kri1} that \ZFe\ is a \emph{conservative extension of ZF\/.}

For each formula $F(\vec{a})$ of \ZFe\ (i.e.\ written with $\neps,\notin,\subset$) with parameters $\vec{a}$
in the ground model ${\mathcal M}$ we define, in ${\mathcal M}$, a \emph{falsity value} $\|F(\vec{a})\|$ which is
a subset of $\Pi$ and a \emph{truth value} $|F(\vec{a})|$ which is a subset of $\Lbd$.

The notation $t\force F(\vec{a})$ (read ``$t$ realizes $F(\vec{a})$'' or ``$t$ forces $F(\vec{a})$''
in the particular case of forcing) means $t\in|F(\vec{a})|$.

We set first $|F(\vec{a})|=\{t\in\Lbd\;;\;(\pt\pi\in\|F(\vec{a})\|)(t\star\pi\in\bbot)\}$ so that
we only need to define $\|F(\vec{a})\|$, which we do by induction on $F$:

\begin{enumerate}
\item Definition of $\|a\neps b\|$:

$\|a\neps b\|=\{\pi\in\Pi\;;\;(a,\pi)\in b\}$; $\|\bot\|=\Pi$; $\|\top\|=\vide$;
\item Definition of $\|a\subset b\|$ and $\|a\notin b\|$ by induction on
$(\mbox{rk}(a)\cup\mbox{rk}(b),\mbox{rk}(a)\cap\mbox{rk}(b))$:

$\|a\subset b\|=\bigcup_c\{t\ps\pi\in\Pi\;;\;t\force c\notin b,(c,\pi)\in a\}$;

$\|a\notin b\|=\bigcup_c\{t\ps u\ps\pi\in\Pi\;;\;t\force c\subset a,u\force a\subset c,
(c,\pi)\in b\}$.
\item Definition of $\|F\|$ for a non atomic formula $F$, by induction on the length:

$\|F\to F'\|=\{t\ps\pi\in\Pi\;;\;t\force F,\pi\in\|F'\|\}$;
$\|\pt x\,F(x)\|=\bigcup_a\|F(a)\|$.
\end{enumerate}

This notion of realizability has two essential properties given by theorems~\ref{adeq}
and~\ref{ZFe} below. They are proved in~~\cite{kri1}.
\begin{thm}[Adequacy lemma]\label{adeq}
$\force{}$ is compatible with classical natural deduction, i.e.:

If $t_1,\ldots,t_n,t$ are $\lbd_c$-terms such that $t_1:F_1,\ldots,t_n:F_n\vdash t:F$ in classical natural deduction,
then $t_1\force F_1,\ldots,t_n\force F_n\To t\force F$.

In particular, any valid formula is realized by a proof-like term.
\end{thm}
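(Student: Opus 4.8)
The plan is to argue by induction on the derivation of the sequent $x_1:F_1,\ldots,x_n:F_n\vdash t:F$ in classical natural deduction, proving the substitution form of the statement: whenever $\xi_i\force F_i$ for $1\le i\le n$, the term $t[\xi_1/x_1,\ldots,\xi_n/x_n]$ realizes $F$. The single tool used in every case is the closure of the pole under anti-reduction, $\xi\star\pi\succ\xi'\star\pi'$ and $\xi'\star\pi'\in\bbot\Fl\xi\star\pi\in\bbot$, combined with the definition $|F|=\{t\;;\;(\pt\pi\in\|F\|)\,t\star\pi\in\bbot\}$. Together these give the working principle I would invoke throughout: to show $\xi\force F$ it suffices, for each $\pi\in\|F\|$, to reduce $\xi\star\pi$ to some process already known to lie in $\bbot$.

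For the logical rules this principle applies directly. The axiom rule is immediate from the hypothesis $\xi_i\force F_i$. For $\to$-introduction, a stack in $\|A\to B\|$ has the form $u\ps\rho$ with $u\force A$ and $\rho\in\|B\|$; since $\lbd x\,t\star u\ps\rho\succ t[u/x]\star\rho$ and the induction hypothesis (adding $u\force A$) gives $t[u/x]\force B$, anti-reduction closes the case. For $\to$-elimination one uses instead $(t)u\star\rho\succ t\star u\ps\rho$ together with $u\ps\rho\in\|A\to B\|$. The two quantifier rules are handled through the identity $\|\pt x\,F(x)\|=\bigcup_a\|F(a)\|$: introduction works because the variable condition makes the realizers of the context independent of $x$, so the induction hypothesis holds uniformly in the parameter $a$, while elimination follows from the inclusion $\|F(a)\|\subseteq\|\pt x\,F(x)\|$.

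The crux, and the only step that goes beyond intuitionistic logic, is the classical rule, which is realized by $\Ccc$; it suffices to check that $\Ccc$ realizes Peirce's law $((A\to B)\to A)\to A$. Take a stack $t\ps\rho$ in its falsity value, so $t\force(A\to B)\to A$ and $\rho\in\|A\|$. The \emph{save the stack} rule gives $\Ccc\star t\ps\rho\succ t\star\kk_\rho\ps\rho$, so by anti-reduction it is enough to prove $\kk_\rho\ps\rho\in\|(A\to B)\to A\|$, i.e.\ $\kk_\rho\force A\to B$. For this, a stack in $\|A\to B\|$ is $u\ps\varpi$ with $u\force A$; the \emph{restore the stack} rule gives $\kk_\rho\star u\ps\varpi\succ u\star\rho$, which lies in $\bbot$ because $u\force A$ and $\rho\in\|A\|$ (note that $\varpi\in\|B\|$ is never used, as one expects for Peirce's law). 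This is the step I expect to be the main obstacle: the continuation constant $\kk_\rho$ discards the current stack $\varpi$ and reinstalls the saved stack $\rho$, so the argument succeeds only because the falsity-value definition of $\to$ meshes exactly with the reduction rules for $\Ccc$ and $\kk_\pi$, and it is here that the control feature of the $\lbd_c$-calculus is indispensable.

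Finally, the ``in particular'' clause follows by taking the empty context: a closed proof in classical natural deduction compiles into a closed $\lbd_c$-term built solely from the elementary combinators $\BBB,\CCC,\III,\KKK,\WWW,\Ccc$, hence a proof-like term, and adequacy with no hypotheses gives at once that it realizes the proved formula.
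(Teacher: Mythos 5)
Your proof is correct and follows exactly the standard argument (the paper itself does not spell out a proof but defers to \cite{kri1}, where the same induction on derivations is carried out): closure of the pole under anti-reduction drives every case, weak head reduction handles the $\lbd$-rules, $\|\pt x\,F(x)\|=\bigcup_a\|F(a)\|$ handles the quantifier rules, and $\Ccc$ together with $\kk_\rho\force A\to B$ realizes Peirce's law. Nothing to add.
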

\begin{rem}\label{remadeq} The proof of Theorem~\ref{adeq} uses only item~3 in the above definition of $\|F\|$.
In other words, the values of $\|F\|$ for atomic formulas $F$ are arbitrary. This will be used in Sections~\ref{e_ge}
and~\ref{alg_A1}.
\end{rem}

\begin{thm}\label{ZFe}
The axioms of \ZFe\ are realized by proof-like terms.
\end{thm}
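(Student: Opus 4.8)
The plan is to exhibit, for each axiom of \ZFe, an explicit proof-like term that realizes it, working inside the ground model $\mathcal{M}$ where all the falsity values live. The organising remark is that, by the Adequacy Lemma (Theorem~\ref{adeq}), every formula derivable in classical natural deduction from the \emph{defining equivalences} of the atomic predicates is already realized by a proof-like term; so the task reduces to (i) reading those equivalences off the inductive clauses defining $\|\cdot\|$, and (ii) constructing in $\mathcal{M}$, for each existential axiom, a name that witnesses it. I would split the schemes into the non-extensional ZF-axioms for $\eps$, the extensionality axioms, and the foundation scheme.

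For the first group — comprehension, pairing, union, power set, collection, infinity — the decisive clause is $\|a\neps b\|=\{\pi\in\Pi;\,(a,\pi)\in b\}$: a name $b$ is literally a set of pairs in $\mathcal{M}$, and $\eps$-membership is governed pointwise by this set. For comprehension, given $x$ and $F$, I would build in $\mathcal{M}$ the name $y$ whose pairs $(z,\pi)$ make $z\neps y$ equivalent to $\neg(z\eps x\land F(z,\vec a))$, and check that a short fixed combinator (morally the identity $\III$, modulo the bookkeeping forced by $\to$) realizes the biconditional uniformly in $z$. Pairing, union, collection and infinity go the same way: the witness is an explicitly definable set of pairs, and the bounded-quantifier shape of each axiom, built from the abbreviations $(\pt u\eps x)$ and $(\ex v\eps y)$, lets one fixed proof-like term thread the realizers through. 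Power set is the subtlest, since its witness $b$ must supply a name for \emph{every} subset of $a$; here I would take a ground-model set large enough that each name $x$ has, inside $b$, a name behaving like $x\cap_\eps a$, whose $\eps$-content depends only on how $x$ meets the set-many $\eps$-members of $a$.

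The extensionality axioms are essentially forced by the mutually recursive clauses defining $\|a\subset b\|$ and $\|a\notin b\|$, which were tailored precisely so that $x\in y\dbfl(\ex z\eps y)\,x=_\in z$ and $x\subset y\dbfl(\pt z\eps x)\,z\in y$ hold. I would verify these two biconditionals by unfolding the definitions along the induction on $(\mbox{rk}(a)\cup\mbox{rk}(b),\,\mbox{rk}(a)\cap\mbox{rk}(b))$ and again appeal to Adequacy. Finally, the foundation scheme is $\eps$-induction: to realize $\pt\vec a(\pt x(((\pt y\eps x)F(y,\vec a))\to F(x,\vec a))\to\pt x\,F(x,\vec a))$, I would fix a recursion (fixpoint) term $\theta$ and, given a realizer $H$ of the induction step, prove by induction on $\mbox{rk}(a)$ \emph{in} $\mathcal{M}$ that $(\theta)H\force F(a,\vec a)$ for every name $a$ — using that each $b$ with $b\eps a$ has strictly smaller rank, so the external recursion is well founded even though $\theta$ itself is an unrestricted fixpoint.

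I expect the foundation scheme to be the main obstacle: it is the one axiom whose realizer is not mere logical bookkeeping over a definable witness, and whose correctness genuinely rests on the well-foundedness of $\in$ in $\mathcal{M}$ through a rank induction carried out outside the calculus. The next most delicate point is matching the double induction in the definition of $\subset$ and $\notin$ to the extensionality biconditionals, where one must check that the recursion in the falsity values bottoms out correctly.
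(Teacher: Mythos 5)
The paper gives no proof of this theorem: it explicitly defers to the reference \cite{kri1}, where the standard argument is exactly the one you outline (explicit ground-model names as witnesses for the existential axioms, with a set-of-subnames cardinality argument for power set; the tailored mutual recursion on ranks for the two extensionality biconditionals; and a fixpoint combinator validated by an external induction on rank for the foundation scheme, as the paper itself replays in miniature in Lemma~\ref{wellf}). Your sketch is a correct plan for that proof, including the right identification of foundation and the $\subset$/$\notin$ recursion as the delicate points; the only part left genuinely underspecified is that the witnesses for collection and infinity require a choice function over stacks in ${\mathcal M}$ (in the spirit of Lemma~\ref{choix_pi}), not just "an explicitly definable set of pairs".
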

It follows that \emph{every closed formula which is consequence of \ZFe\ and, in particular,
every consequence of ZF\/, is realized by a proof-like term}.

In the following, we shall simply say ``the formula $F$ is realized'' instead of ``realized by a proof-like term''
and use the notation $\force F$.

Theorem~\ref{ZFe} is valid \emph{for every r.a.} The aim of this paper is to realize the
full axiom of choice AC in some particular r.a.\ \emph{suitable for programming}.
\begin{rem} Note that AC is realized for any r.a.\ associated with
a set of forcing conditions (generic extension of ${\mathcal M}$). But in this case, there is only one
proof-like term which is the greatest element~$\1$.
\end{rem}

We define the strong (Leibnitz) equality $a=b$ by $\pt z(a\neps z\to b\neps z)$. It is trivially
transitive and it is symmetric by comprehension. This equality satisfy the first order axioms of equality
$\pt x\pt y(x=y\to(F(x)\to F(y)))$ (by comprehension scheme of \ZFe) and is therefore the equality in
the r.m.\ ${\mathcal N}$. 

\begin{lem}\label{eq}
$\|a=b\|=\|\top\to\bot\|=\{\xi\ps\pi\;;\;\xi\in\Lbd,\pi\in\Pi\}$ if $a\ne b$;

$\|a=a\|=\|\bot\to\bot\|=\{\xi\ps\pi\;;\;\xi\in\Lbd,\pi\in\Pi,\xi\star\pi\in\bbot\}$.
\end{lem}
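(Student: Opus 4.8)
The plan is to unfold the Leibniz equality, which by definition is the formula $a=b\;:\equiv\;\pt z(a\neps z\to b\neps z)$, and to read off its falsity value directly from the inductive clauses $\|\pt z\,G\|=\bigcup_z\|G\|$ and $\|F\to F'\|=\{t\ps\pi\;;\;t\force F,\;\pi\in\|F'\|\}$. Inserting the atomic clause $\|c\neps z\|=\{\rho\;;\;(c,\rho)\in z\}$ gives $\|a=b\|=\bigcup_z\{t\ps\pi\;;\;t\force a\neps z,\;(b,\pi)\in z\}$. In particular every element of $\|a=b\|$ is a push-stack, so in both cases $\|a=b\|\subseteq\{\xi\ps\pi\;;\;\xi\in\Lbd,\pi\in\Pi\}$; the whole content is to decide, for a given $\xi\ps\pi$, whether a suitable name $z$ exists in the union.

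For $a\ne b$ I would produce such a $z$ for \emph{every} $\xi\ps\pi$. Take the singleton name $z=\{(b,\pi)\}$. Then $(b,\pi)\in z$ yields $\pi\in\|b\neps z\|$, while $\|a\neps z\|=\{\rho\;;\;(a,\rho)=(b,\pi)\}=\vide$ precisely because $a\ne b$; hence $|a\neps z|=\Lbd$ and $\xi\force a\neps z$ holds vacuously. Thus $\xi\ps\pi\in\|a=b\|$ for arbitrary $\xi,\pi$, so $\|a=b\|=\{\xi\ps\pi\;;\;\xi\in\Lbd,\pi\in\Pi\}$. This is exactly $\|\top\to\bot\|$, since $\|\top\|=\vide$ forces $|\top|=\Lbd$ while $\|\bot\|=\Pi$.

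For the diagonal case the same unfolding gives $\|a=a\|=\bigcup_z\{t\ps\pi\;;\;(a,\pi)\in z,\;\pt\rho\,((a,\rho)\in z\to t\star\rho\in\bbot)\}$, and here I expect the only real subtlety: the bound variable $z$ occurs in both the hypothesis and the conclusion of the implication, so the same name simultaneously controls $t\force a\neps z$ (which must kill the \emph{entire} falsity value $\|a\neps z\|$) and the membership $\pi\in\|a\neps z\|$, and one must not treat the two occurrences of $a\neps z$ as independent. The forward inclusion is then immediate, since any element has $(a,\pi)\in z$ and instantiating $\rho:=\pi$ forces $t\star\pi\in\bbot$. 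The decisive move for the converse is again the singleton-name trick: if $\xi\star\pi\in\bbot$, take $z=\{(a,\pi)\}$, so that $\|a\neps z\|=\{\pi\}$ collapses the quantifier over $\rho$ to the single stack $\pi$ and both conditions reduce to $\xi\star\pi\in\bbot$. Hence $\|a=a\|=\{\xi\ps\pi\;;\;\xi\in\Lbd,\pi\in\Pi,\;\xi\star\pi\in\bbot\}$, matching the label $\|\bot\to\bot\|$ of the statement (the canonical ``true'' proposition), with the two bottoms read as the common diagonal instance $a\neps z$ rather than as the literal $\bot$ with $\|\bot\|=\Pi$.
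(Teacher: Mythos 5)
Your proof is correct and follows the same basic route as the paper's: unfold $\|\pt z(a\neps z\to b\neps z)\|$ as a union over names $z$ and exhibit witnessing names. The difference is in the choice of witnesses and in how much of the claimed equalities is actually checked. The paper takes the single name $z=\{b\}\fois\Pi$ in both cases; for $a\ne b$ this is enough (that one term of the union is already the set of all $\xi\ps\pi$, and the union cannot be larger), but in the diagonal case this choice literally yields only $\|\bot\to\bot\|=\{\xi\ps\pi\;;\;\xi\force\bot,\ \pi\in\Pi\}$, i.e.\ the inclusion $\|a=a\|\supseteq\|\bot\to\bot\|$, which does not by itself give the set $\{\xi\ps\pi\;;\;\xi\star\pi\in\bbot\}$ appearing in the statement. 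Your singleton names $z=\{(a,\pi)\}$ for the lower bound, together with the instantiation $\rho:=\pi$ for the upper bound, are exactly what is needed to establish the full characterization $\|a=a\|=\{\xi\ps\pi\;;\;\xi\star\pi\in\bbot\}$, which is the form actually used later in the paper. You are also right to flag that this set coincides with the literal $\|\bot\to\bot\|$ only up to interderivability (each implies the other via $\III$), not as subsets of $\Pi$; the paper silently identifies the two. So your argument is the same in spirit but slightly more careful on the second equality.
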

\begin{proof}
Let $z=\{b\}\fois\Pi$ so that $\|b\neps z\|=\Pi$. If $a\ne b$ then $\|a\neps z\|=\vide$ and therefore:

$\|a\neps z\to b\neps z\|=\|\top\to\bot\|$.

If $a=b$, then $\|a\neps z\|=\Pi$ and therefore $\|a=b\|=\|\bot\to\bot\|$.
\end{proof}
Finally, it is convenient to define first $\ne$ by $\|a\ne a\|=\|\bot\|=\Pi$; $\|a\ne b\|=\|\top\|=\vide$ if $a\ne b$;
and to define $a=b$ as $a\ne b\to\bot$.

We define a preorder $\le$ on the set ${\mathcal P}(\Pi)$ of ``falsity values'' by setting:

$X\le Y\Dbfl$ there exists a proof-like term $\theta\force X\to Y$. By Theorem~\ref{adeq}, we easily see \cite{kri1}
that $({\mathcal P}(\Pi),\le)$ is a Boolean algebra $\mathfrak{B}_{\mathcal A}$ \emph{if the r.a.\ ${\mathcal A}$ is coherent}.
Every formula $F(\vec{a})$ of \ZFe\ with parameters in the ground model ${\mathcal M}$ has a value
$\|F(\vec{a})\|$ in this Boolean algebra.

By means of any ultrafilter on $\mathfrak{B}_{\mathcal A}$, we thus obtain a complete consistent theory
in the language $\{\neps,\notin,\subset\}$ with parameters in ${\mathcal M}$. We take any model ${\mathcal N}$
of this theory and call it \emph{the realizability model (r.m.)} of the realizability algebra
${\mathcal A}$.

Therefore, ${\mathcal N}$ is a model of \ZFe, and in particular, a model of ZF\/, that we will
call ${\mathcal N}_\in$.

Thus ${\mathcal N}_\in$ is simply the model ${\mathcal N}$ restricted to the language $\{\notin,\subset\}$.

\begin{rem}
The ground model ${\mathcal M}$ is contained in ${\mathcal N}$ since every
element of it is a symbol of constant. But ${\mathcal M}$ is not a submodel of ${\mathcal N}$ for the common
language $\{\notin,\subset\}$; and, except in the case of forcing, not every element of ${\mathcal N}$
``has a name'' in ${\mathcal M}$.

When $F$ is a closed formula of \ZFe, the two assertions ${\mathcal N}\models F$ and $\force F$ have essentially the same
meaning, since ${\mathcal N}$ represents \emph{any r.m.} for the given r.a. But the second formulation requires a formal
proof.
\end{rem}

\subsection*{Functionals}
\emph{A functional relation defined in ${\mathcal N}$} is given by a formula $F(x,y)$ of \ZFe\ such that~~
${\mathcal N}\models\pt x\pt y\pt y'(F(x,y),F(x,y')\to y=y')$.

A \emph{function} is a functional relation which is a set.

We define now some special functional relations on ${\mathcal N}$ which we call \emph{functionals
defined in~${\mathcal M}$} or \emph{functional symbols}:

For each functional relation $f:{\mathcal M}^k\to{\mathcal M}$ defined in the ground model ${\mathcal M}$,
we add the functional symbol $f$ to the language of \ZFe. The application of $f$ to an argument $a$ will be
denoted $f[a]$. Therefore $f$ is also defined in ${\mathcal N}$.

We call this (trivial) operation \emph{the extension to ${\mathcal N}$ of the functional $f$} defined in the ground model. It is a fundamental tool in all that follows.

Note the use of brackets for $f[a]$ in this case.

\begin{thm}\label{idterm}
Let $t_1,u_1\ldots,t_n,u_n,t,u$ be $k$-ary terms built with functional symbols, such that \
${\mathcal M}\models\pt\vec{x}(t_1[\vec{x}]=u_1[\vec{x}],\ldots,t_n[\vec{x}]=u_n[\vec{x}]\to t[\vec{x}]=u[\vec{x}])$.

Then $\lbd x_1\ldots\lbd x_n\lbd x(x_1)\ldots(x_n)x
\force\pt\vec{x}(t_1[\vec{x}]=u_1[\vec{x}],\ldots,t_n[\vec{x}]=u_n[\vec{x}]\to t[\vec{x}]=u[\vec{x}])$.
\end{thm}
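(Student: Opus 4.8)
The plan is to unfold the definition of realizability for this universally quantified implication, turn the goal into a statement about processes, and read off the falsity values of the atomic equalities from Lemma~\ref{eq}. Since $\|\pt x\,F(x)\|=\bigcup_a\|F(a)\|$, one term realizes a universal statement exactly when it realizes every instance, so I would fix parameters $\vec a\in{\mathcal M}^k$ and show that $\theta:=\lbd x_1\ldots\lbd x_n\lbd x\,(x_1)\ldots(x_n)x$ realizes the instance $t_1[\vec a]=u_1[\vec a],\ldots,t_n[\vec a]=u_n[\vec a]\to t[\vec a]=u[\vec a]$. Each of $t_i[\vec a],u_i[\vec a],t[\vec a],u[\vec a]$ is a genuine element of ${\mathcal M}$, so Lemma~\ref{eq} applies to every equality with those elements as parameters. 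A stack in the falsity value of this implication has the form $\xi_1\ps\cdots\ps\xi_n\ps\rho$ with $\xi_i\force t_i[\vec a]=u_i[\vec a]$ and $\rho\in\|t[\vec a]=u[\vec a]\|$; by Lemma~\ref{eq} such a $\rho$ is itself a push-stack, say $\rho=\xi\ps\pi$. Applying the weak head reduction rule $\lbd x\,v[x]\star w\ps\varpi\succ v[w/x]\star\varpi$ repeatedly and then the push rule, one computes
\[\theta\star\xi_1\ps\cdots\ps\xi_n\ps\xi\ps\pi\;\succ\;\xi_1\star\xi_2\ps\cdots\ps\xi_n\ps\xi\ps\pi\,,\]
so, $\bbot$ being closed under anti-reduction, it suffices to place the right-hand process in $\bbot$.

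I would then argue by induction on $n$ (at the fixed $\vec a$), using the two-way case split of Lemma~\ref{eq} at the first equality. Since $\xi_1\force t_1[\vec a]=u_1[\vec a]$, the target process lies in $\bbot$ as soon as the remaining stack $\xi_2\ps\cdots\ps\xi_n\ps\xi\ps\pi$ belongs to $\|t_1[\vec a]=u_1[\vec a]\|$. If $t_1[\vec a]\ne u_1[\vec a]$ in ${\mathcal M}$, Lemma~\ref{eq} gives $\|t_1[\vec a]=u_1[\vec a]\|=\{\eta\ps\varpi\,;\,\eta\in\Lbd,\varpi\in\Pi\}$, the set of \emph{all} push-stacks, and we are done at once. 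If $t_1[\vec a]=u_1[\vec a]$ in ${\mathcal M}$, Lemma~\ref{eq} gives $\|t_1[\vec a]=u_1[\vec a]\|=\{\eta\ps\varpi\,;\,\eta\star\varpi\in\bbot\}$, so membership reduces to $\xi_2\star\xi_3\ps\cdots\ps\xi_n\ps\xi\ps\pi\in\bbot$ — a process of the same shape with one fewer hypothesis. Moreover, from $t_1[\vec a]=u_1[\vec a]$ together with the ground-model assumption we get ${\mathcal M}\models(t_2[\vec a]=u_2[\vec a],\ldots,t_n[\vec a]=u_n[\vec a]\to t[\vec a]=u[\vec a])$, which is the hypothesis of the induction at rank $n-1$ for this same $\vec a$, so the induction hypothesis applies.

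The base case $n=0$ is exactly where the ground-model assumption does its real work: having passed through every hypothesis (all of which were true equalities in ${\mathcal M}$), we have ${\mathcal M}\models t_i[\vec a]=u_i[\vec a]$ for all $i$, whence ${\mathcal M}\models t[\vec a]=u[\vec a]$ by hypothesis, and Lemma~\ref{eq} gives $\|t[\vec a]=u[\vec a]\|=\{\xi\ps\pi\,;\,\xi\star\pi\in\bbot\}$. The remaining process to put in $\bbot$ is precisely $\xi\star\pi$, and $\xi\ps\pi\in\|t[\vec a]=u[\vec a]\|$ means exactly $\xi\star\pi\in\bbot$, closing the induction.

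I do not expect a genuine obstacle: the whole argument is an adequacy-style computation, and the one thing demanding care is the bookkeeping of the Lemma~\ref{eq} dichotomy — namely that a \emph{true} equality in ${\mathcal M}$ produces the ``hard'' falsity value $\{\eta\ps\varpi\,;\,\eta\star\varpi\in\bbot\}$ that propagates control to the next argument, whereas a \emph{false} equality produces the ``easy'' falsity value of all push-stacks that halts the computation in $\bbot$ immediately, together with checking that at each step the current stack really is a push-stack (guaranteed by the $\xi\ps\pi$ tail inherited from $\|t[\vec a]=u[\vec a]\|$). One can also avoid the explicit induction by letting $m$ be the least index with $t_m[\vec a]\ne u_m[\vec a]$ in ${\mathcal M}$ (and $m=n{+}1$ if there is none) and jumping directly to the $m$-th argument.
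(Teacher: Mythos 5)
Your proof is correct, and it is precisely the argument the paper compresses into the single line ``this easily follows from the definition of $\|a=b\|$'': unfold the universal quantifier, observe via Lemma~\ref{eq} that every stack in the falsity value of the implication has the form $\xi_1\ps\cdots\ps\xi_n\ps\xi\ps\pi$, and use the true/false dichotomy of Lemma~\ref{eq} to either stop immediately or pass control to the next argument, the ground-model hypothesis being consumed exactly at the base case. The one detail to fix is the parsing of the realizer: in Krivine's notation $(x_1)(x_2)\ldots(x_n)x$ is \emph{right}-nested (compare $((\theta)(k)x)(k)y$ in Lemma~\ref{intersec}), so the head reduct of $\theta\star\xi_1\ps\cdots\ps\xi_n\ps\xi\ps\pi$ is $\xi_1\star\bigl((\xi_2)\cdots(\xi_n)\xi\bigr)\ps\pi$ rather than $\xi_1\star\xi_2\ps\cdots\ps\xi_n\ps\xi\ps\pi$; your induction and the Lemma~\ref{eq} case split go through verbatim on the processes $(\xi_j)\cdots(\xi_n)\xi\star\pi$, ending at $\xi\star\pi$, and the left-nested term you implicitly analyzed is in fact also a correct realizer by the same computation.
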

\begin{proof}
This easily follows from the definition above of $\|a=b\|$.
\end{proof}
As a first example let the unary functional $\Phi_F[X]$ be defined in ${\mathcal M}$ by:

\centerline{$\Phi_F[X]=\{(x,\xi\ps\pi)\;;\;\xi\force F(x),(x,\pi)\in X\}$.}

We shall denote it by $\{x\eps X\;;\;F(x)\}$ (in this notation, $x$ is a bound variable) because it corresponds
to the comprehension scheme in the model ${\mathcal N}$. Note that the use of $\veps$ reminds that this expression
must be interpreted in ${\mathcal N}$.

We define now in ${\mathcal M}$ the unary functional $\gl X=X\fois\Pi$, so that we have:

$\|x\neps\gl X\|=\Pi$ if $x\in X$ and $\|x\neps\gl X\|=\vide$ if $x\notin X$.

For any $X$ in ${\mathcal M}$, we define the quantifier $\pt x^{\gl X}$ by setting
$\|\pt x^{\gl X}F(x)\|=\bigcup_{x\in X}\|F(x)\|$.

\begin{lem}\label{glX}
$\force\pt x^{\gl X}F(x)\dbfl\pt x(x\eps\gl X\to F(x))$.
\end{lem}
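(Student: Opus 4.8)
The plan is to unfold both formulas into falsity values and produce a proof-like realizer for each direction of the equivalence; since $\dbfl$ abbreviates a conjunction of two implications, the bi-implication then follows from its two halves by the adequacy lemma (Theorem~\ref{adeq}). First I would record the behaviour of the guard $x\eps\gl X$. As $\gl X=X\fois\Pi$, we have $\|x\neps\gl X\|=\Pi$ when $x\in X$ and $\|x\neps\gl X\|=\vide$ when $x\notin X$; hence $x\eps\gl X=(x\neps\gl X\to\bot)$ has exactly the falsity value of $\bot\to\bot$ in the first case and of $\top\to\bot$ in the second. In particular $\III\force x\eps\gl X$ precisely when $x\in X$, while for $x\notin X$ every realizer of $x\eps\gl X$ realizes $\top\to\bot$, i.e.\ sends every push $\zeta\ps\rho$ into $\bbot$. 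I shall also use that $\|\pt x^{\gl X}F(x)\|=\bigcup_{x\in X}\|F(x)\|$, so that any realizer of $\pt x^{\gl X}F(x)$ automatically realizes $F(a)$ for each $a\in X$.

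For the implication $\pt x(x\eps\gl X\to F(x))\to\pt x^{\gl X}F(x)$ I would take $\theta_2=\lbd f\,(f)\III$. Given $\xi\force\pt x(x\eps\gl X\to F(x))$ and $\pi\in\|F(a)\|$ for some $a\in X$, the process $\theta_2\star\xi\ps\pi$ reduces to $\xi\star\III\ps\pi$; since $\III\force a\eps\gl X$ and $\pi\in\|F(a)\|$, the stack $\III\ps\pi$ lies in $\|a\eps\gl X\to F(a)\|$, whence $\xi\star\III\ps\pi\in\bbot$, and we conclude by closure of $\bbot$ under anti-reduction.

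The converse $\pt x^{\gl X}F(x)\to\pt x(x\eps\gl X\to F(x))$ is the delicate point, and where I expect the real work to lie. A single proof-like term must handle, without testing which case holds, both the stacks coming from $x\in X$ (where the guard behaves like $\bot\to\bot$) and those coming from $x\notin X$ (where it behaves like $\top\to\bot$). The obstacle is that when $x\in X$ the realizer $t$ of the guard expects to be fed a realizer of $\bot$, whereas the hypothesis $\xi\force\pt x^{\gl X}F(x)$ only gives $\xi\star\pi\in\bbot$ for the particular stack $\pi\in\|F(x)\|$ at hand. I would overcome this by capturing $\pi$ with $\Ccc$ and manufacturing the required realizer of $\bot$ on the fly, putting
\[
\theta_1=\lbd f\lbd g\,(\Ccc)\bigl(\lbd k\,(g)((k)f)\bigr),
\]
so that $\theta_1\star\xi\ps t\ps\pi$ reduces to $\Ccc\star(\lbd k\,(t)((k)\xi))\ps\pi$ and then, after saving the stack, to $t\star\zeta\ps\pi$ with $\zeta=(\kk_\pi)\xi$.

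The key is that $\zeta\star\varpi\succ\xi\star\pi$ for every $\varpi$, so $\zeta\force\bot$ precisely because $\xi\star\pi\in\bbot$ (using $\pi\in\|F(x)\|\subseteq\|\pt x^{\gl X}F(x)\|$). Thus if $x\in X$ then $t\force\bot\to\bot$ accepts the stack $\zeta\ps\pi$, while if $x\notin X$ then $t\force\top\to\bot$ accepts the push $\zeta\ps\pi$ trivially; in both cases $t\star\zeta\ps\pi\in\bbot$, hence $\theta_1\star\xi\ps t\ps\pi\in\bbot$. Both $\theta_1$ and $\theta_2$ are closed $\lbd_c$-terms built from the combinators, hence proof-like, and combining them through a standard proof-like realizer of $\land$-introduction gives, by adequacy, a proof-like realizer of the full equivalence.
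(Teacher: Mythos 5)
Your proof is correct, but it takes a different route from the paper's, and in fact it realizes a slightly different (classically equivalent) rendering of the right-hand side. You read $\pt x(x\eps\gl X\to F(x))$ literally, as $\pt x((x\neps\gl X\to\bot)\to F(x))$, and argue by cases on the guard: since $\|x\neps\gl X\|$ is $\Pi$ or $\vide$ according to whether $x\in X$, a realizer of the guard realizes $\bot\to\bot$ or $\top\to\bot$. The direction into $\pt x^{\gl X}F(x)$ is then handled by feeding $\III$ to the hypothesis, and the converse by the continuation trick $(\kk_\pi)\xi$, which converts the fact $\xi\star\pi\in\bbot$ into a realizer of $\bot$ so that the guard, in either of its two guises, accepts the stack $(\kk_\pi)\xi\ps\pi$; all the reductions check out and both terms are proof-like. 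The paper instead reads the right-hand side through its official bounded-quantifier convention, i.e.\ as $\pt x(\neg F(x)\to x\neps\gl X)$, observes that this formula has \emph{exactly the same falsity value} as $\pt x^{\gl X}\neg\neg F(x)$, and so reduces the lemma to double-negation introduction and elimination relativized to $\pt x^{\gl X}$ (the elimination again via $\Ccc$). That computation is shorter and makes the content transparent --- the bounded quantifier is the restricted one up to double negation --- but it silently substitutes the encoding $\neg F(x)\to x\neps\gl X$ for the implication as displayed; your version supplies explicit realizers for the implication as literally written, at the cost of the extra continuation gymnastics in $\theta_1$. Since the two readings are interderivable by proof-like terms (adequacy), either argument establishes the lemma in the form in which it is used later.
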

\begin{proof}
In fact, we have $\|\pt x(\neg F(x)\to x\neps\gl X)\|=\|\pt x^{\gl X}\neg\neg F(x)\|$.

Now we have trivially:

$\lbd x(x)\III\,\force\pt x^{\gl X}F(x)\to\pt x^{\gl X}\neg\neg F(x)$
and $\Ccc\force\pt x^{\gl X}\neg\neg F(x)\to\pt x^{\gl X}F(x)$.
\end{proof}
\begin{lem}\label{fun-gl}
Let $f$ be a functional $k$-ary symbol defined in ${\mathcal M}$ such that $f:X_1\fois\cdots\fois X_k\to X$.
Then its extension to ${\mathcal N}$ is such that $f:\gl X_1\fois\cdots\fois\gl X_k\to\gl X$.
\end{lem}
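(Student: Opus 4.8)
The statement $f:\gl X_1\fois\cdots\fois\gl X_k\to\gl X$ is read in ${\mathcal N}$ as the assertion that $f[\vec x]$ lies in $\gl X$ whenever each $x_i$ lies in $\gl X_i$, so the plan is to produce a proof-like term realizing
\[\pt x_1\cdots\pt x_k\,(x_1\eps\gl X_1,\ldots,x_k\eps\gl X_k\to f[\vec x]\eps\gl X).\]
This prenex formula has the same falsity value as the nested form $\pt x_1(x_1\eps\gl X_1\to\cdots\to\pt x_k(x_k\eps\gl X_k\to f[\vec x]\eps\gl X))$ (in both cases the stacks are $t_1\ps\cdots\ps t_k\ps\rho$), so it suffices to treat the latter. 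First I would peel off the premises one at a time with Lemma~\ref{glX}: at the outer layer it replaces $\pt x_1(x_1\eps\gl X_1\to G(x_1))$ by the bounded quantifier $\pt x_1^{\gl X_1}G(x_1)$, and iterating this reduces the goal to realizing
\[\pt x_1^{\gl X_1}\cdots\pt x_k^{\gl X_k}\,(f[\vec x]\eps\gl X).\]

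Passing to the bounded form is exactly what makes the argument trivial. By the definition of $\pt x^{\gl X}$, the falsity value of this last formula is $\bigcup_{\vec x}\|f[\vec x]\eps\gl X\|$ with $\vec x$ ranging \emph{only} over $X_1\fois\cdots\fois X_k$, so the ``bad'' parameters outside the $X_i$ never contribute. For such $\vec x$ the extension of $f$ coincides with the ground-model function, and since $f:X_1\fois\cdots\fois X_k\to X$ holds in ${\mathcal M}$ we get $f[\vec x]\in X$, whence $\|f[\vec x]\neps\gl X\|=\Pi$. Then
$\|f[\vec x]\eps\gl X\|=\|f[\vec x]\neps\gl X\to\bot\|=\{w\ps\tau\;;\;w\force f[\vec x]\neps\gl X,\ \tau\in\Pi\}$; because $\|f[\vec x]\neps\gl X\|=\Pi$, every such $w$ satisfies $w\star\tau\in\bbot$, so $\III\star w\ps\tau\succ w\star\tau\in\bbot$. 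Hence $\III\force f[\vec x]\eps\gl X$ uniformly in $\vec x$, and therefore $\III$ realizes the whole bounded formula at once.

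Finally I would push this realizer back through the reduction of the first paragraph. The one point deserving attention is the lifting of Lemma~\ref{glX} through the outer bounded quantifiers: the proof-like terms it supplies do not depend on the witness, and a term realizing $A(x)\to B(x)$ uniformly for all $x\in X_i$ automatically realizes $\pt x^{\gl X_i}A(x)\to\pt x^{\gl X_i}B(x)$, precisely because the falsity value of a bounded quantifier is a union. Composing the $k$ equivalences of Lemma~\ref{glX} with the term $\III$ obtained above yields a proof-like realizer of the nested formula, hence (same falsity value) of the prenex one, which is the claim. The only genuine mathematical content is the observation that on ground-model arguments $f$ honestly maps into $X$; everything else is combinator bookkeeping, so I expect this lifting step to be the sole place requiring care rather than a real obstacle.
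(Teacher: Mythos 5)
Your proof is correct and is just the honest unwinding of what the paper dismisses with the single word ``Trivial'': interpret $f:\gl X_1\fois\cdots\fois\gl X_k\to\gl X$ via the bounded quantifiers $\pt x_i^{\gl X_i}$, note that the union defining their falsity value only ranges over ground-model tuples $\vec{x}\in X_1\fois\cdots\fois X_k$, for which $f[\vec{x}]\in X$ and hence $\|f[\vec x]\neps\gl X\|=\Pi$, so that $\III$ (or any realizer of $\neg\neg$-elimination composed with the Lemma~\ref{glX} equivalences) realizes the statement. No gap; this is the same approach the paper intends.
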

\begin{proof}
Trivial.
\end{proof}
By Theorem~\ref{idterm} and Lemma~\ref{fun-gl}, the algebra operations on the Boolean algebra $2=\{0,1\}$
extended to $\gl2$, turn it into a Boolean algebra which we call \emph{the characteristic Boolean algebra}
of the r.m.\ ${\mathcal N}$. In the ground model ${\mathcal M}$, we define the functional $(a,x)\mapsto ax$
from $2\fois{\mathcal M}$ into ${\mathcal M}$ by $0x=\vide$ and $1x=x$. It extends to ${\mathcal N}$ into a functional
$\gl2\fois{\mathcal N}\to{\mathcal N}$ such that $(ab)x=a(bx)$ for $a,b\eps\gl2$ and every $x$ in ${\mathcal N}$.

\begin{lem}\label{fax-afx}
$\III\,\force\pt\vec{x}\pt\vec{y}\pt a^{\gl2}(af[\vec{x},\vec{y}]=af[a\vec{x},\vec{y}])$ for every functional symbol
$f$ defined in~${\mathcal M}$.
\end{lem}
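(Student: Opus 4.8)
The plan is to reduce the claim to a plain equality holding in the ground model ${\mathcal M}$ and then read off its falsity value from Lemma~\ref{eq}. First I would unfold the quantifiers: using the clause $\|\pt x\,F(x)\|=\bigcup_a\|F(a)\|$ for the two unrestricted quantifiers and the definition $\|\pt x^{\gl X}F(x)\|=\bigcup_{x\in X}\|F(x)\|$ for the relativised one, I obtain
\[\|\pt\vec{x}\pt\vec{y}\pt a^{\gl2}(af[\vec{x},\vec{y}]=af[a\vec{x},\vec{y}])\|=\bigcup_{\vec{x},\vec{y}}\,\bigcup_{a\in\{0,1\}}\|af[\vec{x},\vec{y}]=af[a\vec{x},\vec{y}]\|.\]
The point of keeping the relativised quantifier $\pt a^{\gl2}$ is that the inner union runs only over $a\in\{0,1\}$, so a two-case inspection will suffice.

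Next I would check, for fixed $\vec{x},\vec{y}$, that in each of the two cases the two sides name the \emph{same} element of ${\mathcal M}$. For $a=0$ both sides equal $\vide$, since $0y=\vide$ makes $0f[\cdots]=\vide$ independently of the argument. For $a=1$ the left side is $f[\vec{x},\vec{y}]$ and, because $1x=x$ holds componentwise, $1\vec{x}=\vec{x}$, so the right side is $f[1\vec{x},\vec{y}]=f[\vec{x},\vec{y}]$ as well. Hence the inner formula is always of the shape $b=b$ for the common value $b$.

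By the true-equality case of Lemma~\ref{eq}, $\|b=b\|=\{\xi\ps\pi\;;\;\xi\star\pi\in\bbot\}$, and this set does not depend on $\vec{x},\vec{y},a$; therefore the whole union displayed above is contained in $\{\xi\ps\pi\;;\;\xi\star\pi\in\bbot\}$. It then remains to verify that $\III$ sends every such stack into the pole: since $\III\star\xi\ps\pi\succ\xi\star\pi$ and $\xi\star\pi\in\bbot$, the closure condition on $\bbot$ yields $\III\star\xi\ps\pi\in\bbot$. This is exactly the condition $\III\force\pt\vec{x}\pt\vec{y}\pt a^{\gl2}(af[\vec{x},\vec{y}]=af[a\vec{x},\vec{y}])$.

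There is essentially no serious obstacle here; the one thing to get right is to use the relativised quantifier so that the case analysis stays confined to $a\in\{0,1\}$, so that no value of $ax$ outside $2$ is ever needed, and to recognise that the identity combinator works precisely because its reduction rule pops the realiser $\xi$ off the stack and leaves $\xi\star\pi$, which the true-equality clause of Lemma~\ref{eq} has already placed in $\bbot$.
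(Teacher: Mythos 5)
Your proof is correct and follows essentially the same route as the paper: the paper's one-line argument ("immediate since $\xi\force\pt a^{\gl2}F(a)$ means $(\xi\force F(0))\land(\xi\force F(1))$") is exactly your case split on $a\in\{0,1\}$, reducing each instance to a true identity $b=b$ in ${\mathcal M}$ realized by $\III$ via Lemma~\ref{eq}. You have merely spelled out the details the paper leaves implicit.
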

\begin{proof}
Immediate since $\xi\force\pt a^{\gl2}F(a)$ means $(\xi\force F(0))\land(\xi\force F(1))$.
\end{proof}
For any formula $F(\vec{x})$ of ZF we define, in ${\mathcal M}$, a functional $\lbr F(\vec{x})\rbr$
with value in $\{0,1\}$ which is the truth value of this formula in ${\mathcal M}$.\footnote{The formal definition in ZF
is: $\pt\vec{x}((F(\vec{x})\to\lbr F(\vec{x})\rbr=1)\land(\neg F(\vec{x})\to\lbr F(\vec{x})\rbr=0))$.}
The extension of this functional to the model ${\mathcal N}$ takes its values in the Boolean algebra
$\gl2$~(cf.~\cite{kri4}).

The binary functionals $\lbr x\notin y\rbr$ and $\lbr x\subset y\rbr$ define on the r.m.\ ${\mathcal N}$
a structure of \emph{Boolean model} on the Boolean algebra $\gl2$, that we denote by ${\mathcal M}_{\gl2}$.
It is an elementary extension of~${\mathcal M}$ since the truth value of every closed formula of ZF with
parameters in ${\mathcal M}$ is the same in~${\mathcal M}$ and~${\mathcal M}_{\gl2}$.

Any ultrafilter ${\mathcal U}$ on $\gl2$ would therefore give a (two-valued) model ${\mathcal M}_{\gl2}/{\mathcal U}$ which is an
elementary extension of ${\mathcal M}$. In~\cite{kri4}, it is shown that there exists one and only one
ultrafilter ${\mathcal D}$ on~$\gl2$ such that the model ${\mathcal M}_{\gl2}/{\mathcal D}$, which we shall denote
as ${\mathcal M}_{\mathcal D}$, is well founded (in ${\mathcal N}$). The binary relations $\notin,\subset$ of
${\mathcal M}_{\mathcal D}$ are thus defined by $\lbr x\notin y\rbr\eps{\mathcal D}$ and
$\lbr x\subset y\rbr\eps{\mathcal D}$.

Moreover, ${\mathcal M}_{\mathcal D}$ is isomorphic with a transitive submodel of ${\mathcal N}_\in$ with the
same ordinals. In fact, if we start with a ground model ${\mathcal M}$ which satisfies V = L, then
${\mathcal M}_{\mathcal D}$ is isomorphic with the constructible class of ${\mathcal N}_\in$.

\begin{rem}\label{mod_N} We have defined \emph{four first order structures} on the model ${\mathcal N}$:
\begin{itemize}
\item The realizability model ${\mathcal N}$ itself uses the language $\{\neps,\notin,\subset\}$ of \ZFe.

The equality on ${\mathcal N}$ is the Leibnitz equality =, which is the strongest possible.
\item The model ${\mathcal N}_\in$ of ZF is restricted to the language $\{\notin,\subset\}$.

The equality on ${\mathcal N}_\in$ is the extensional equality~~$=_\in$.
\item The Boolean model ${\mathcal M}_{\gl2}$ with the language $\{\notin,\subset\}$  of ZF and with truth
values in $\gl2$; it is an elementary extension of the ground model ${\mathcal M}$.

The equality on ${\mathcal M}_{\gl2}$
is $\lbr x=y\rbr=1$ which is the same as Leibnitz equality.
\item The model ${\mathcal M}_{\mathcal D}$ with the same language, also an elementary extension of ${\mathcal M}$;
if $F(\vec{a})$ is a closed formula of ZF with parameters (in ${\mathcal N}$), then 
${\mathcal M}_{\mathcal D}\models F(\vec{a})$ iff ${\mathcal N}\models\lbr F(\vec{a})\rbr\eps{\mathcal D}$.

The equality on ${\mathcal M}_{\mathcal D}$ is given by $\lbr x=y\rbr\eps{\mathcal D}$.
\end{itemize}
\end{rem}

The proof of existence of the ultrafilter ${\mathcal D}$ in~\cite{kri4} is not so simple. But it is
useless in the present paper, because $\gl2$ will be the four elements algebra, with two atoms $a_0,a_1$
which give the two trivial ultrafilters on $\gl2$. It is easily seen (Lemma~\ref{wellf}) that one of them,
say~$a_0$ gives a well founded model denoted by~${\mathcal M}_{a_0}$ which is the class $a_0{\mathcal N}={\mathcal M}_{\mathcal D}$.
The class ${\mathcal M}_{a_1}=a_1{\mathcal N}$ is also an elementary extension of ${\mathcal M}$ (but not well founded,
cf.~Remark~\ref{notwf}).
Finally we have ${\mathcal M}_{\gl2}={\mathcal N}=a_0{\mathcal N}\fois a_1{\mathcal N}$ since the Boolean model ${\mathcal M}_{\gl2}$
is simply a product in this case, and equality is the same on ${\mathcal M}_{\gl2}$ and ${\mathcal N}$ (Remark~\ref{mod_N}).

\subsection*{Integers}
We define, in the ground model ${\mathcal M}$, the functional $x\mapsto x^+=x\cup\{x\}$ and extend it to the r.m.~${\mathcal N}$.
It is injective in ${\mathcal M}$ and therefore also in ${\mathcal N}$.

For each $n\in\NN$, we define $\ul{n}\in$ \PL\ by induction: $\ul{0}=\lbd x\lbd y\,y=\KKK\III$; $\ul{n}^+=s\ul{n}$
where $s=\lbd n\lbd f\lbd x(nf)(f)x=(\BBB\WWW)(\BBB)\BBB$ (and $n^+$ is $n+1$).

We define  $\wt{\NN}=\{(n,\ul{n}\ps\pi)\;;\;n\in\NN,\pi\in\Pi\}$. We can use it as  the set of integers of the model~${\mathcal N}$
as shown by the following Theorem~\ref{entiers}.

We define the quantifier $\pt n\indi$ by setting $\|\pt n\indi F(n)\|=\{\ul{n}\ps\pi\;;\;n\in\NN,\pi\in\|F(n)\|\}$.

\begin{thm}\label{entiers}
For every formula $F(x)$ of \ZFe, the following formulas are realized:

i)~~$0\eps\wt{\NN}$; $\pt n(n\eps\wt{\NN}\to n^+\eps\wt{\NN})$;

ii)~~$F(0),\pt n(F(n)\to F(n^+))\to(\pt n\eps\wt{\NN})F(n)$;

iii)~~$\pt n\indi F(n)\dbfl(\pt n\eps\wt{\NN})F(n)$.
\end{thm}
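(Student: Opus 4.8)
The plan is to read off the falsity values attached to $\wt{\NN}$ and then exhibit explicit proof-like realizers for each item, using the Church-numeral structure of $\ul{n}$ together with the storage role of $\Ccc$. First I would record the basic computation: from $\|a\neps b\|=\{\pi\;;\;(a,\pi)\in b\}$ and the definition of $\wt{\NN}$ we get $\|n\neps\wt{\NN}\|=\{\ul{n}\ps\pi\;;\;\pi\in\Pi\}$ for $n\in\NN$, while $\|m\neps\wt{\NN}\|=\vide$ for $m\notin\NN$. The empty case makes every relativized quantifier trivially realized outside $\NN$, so throughout one only has to argue for genuine integers $n\in\NN$.

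For item~(i) I would check directly that $\lbd x(x)\ul{0}\force 0\eps\wt{\NN}$: unfolding $0\eps\wt{\NN}=(0\neps\wt{\NN}\to\bot)$, a realizer receives some $t\force 0\neps\wt{\NN}$ and a stack $\pi$, and $\lbd x(x)\ul{0}\star t\ps\pi\succ t\star\ul{0}\ps\pi\in\bbot$ since $\ul{0}\ps\pi\in\|0\neps\wt{\NN}\|$. For the successor clause I would build, from any $u\force n^+\neps\wt{\NN}$, the term $v=\lbd w(u)(s)w$, which realizes $n\neps\wt{\NN}$ because $(s)\ul{n}=\ul{n}^+$ gives $v\star\ul{n}\ps\rho\succ u\star(s)\ul{n}\ps\rho\in\bbot$; feeding $v$ to a realizer $t$ of $n\eps\wt{\NN}$ produces the pole. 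Thus $\lbd t\lbd u(t)(\lbd w(u)(s)w)\force\pt n(n\eps\wt{\NN}\to n^+\eps\wt{\NN})$, the case $n\notin\NN$ being covered by the empty-value remark.

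The bridge (iii) is where $\Ccc$ is genuinely needed, and it is to my mind the crux. Comparing falsity values, $\|(\pt n\eps\wt{\NN})F(n)\|=\{e\ps\ul{n}\ps\pi\;;\;n\in\NN,\,e\force\neg F(n),\,\pi\in\Pi\}$ whereas $\|\pt n\indi F(n)\|=\{\ul{n}\ps\pi\;;\;n\in\NN,\,\pi\in\|F(n)\|\}$, so the two are not literally equal and a control operator must mediate. For $\pt n\indi F(n)\to(\pt n\eps\wt{\NN})F(n)$ I would use that if $\xi\force\pt n\indi F(n)$ then $(\xi)\ul{n}\force F(n)$ for each $n\in\NN$; hence, given also $e\force\neg F(n)$, the term $\lbd\xi\lbd e\lbd\nu(e)(\xi)\nu$ sends $\xi\ps e\ps\ul{n}\ps\pi$ to $e\star(\xi)\ul{n}\ps\pi\in\bbot$. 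For the converse I would capture the stack: against $\eta\force(\pt n\eps\wt{\NN})F(n)$ and a stack $\ul{n}\ps\pi$ with $\pi\in\|F(n)\|$, the continuation $\kk_\pi$ realizes $\neg F(n)$, since $\kk_\pi\star w\ps\tau\succ w\star\pi\in\bbot$ whenever $w\force F(n)$; therefore $\lbd\eta\lbd\nu(\Ccc)(\lbd k(\eta k)\nu)$ reduces $\eta\ps\ul{n}\ps\pi$ to $\eta\star\kk_\pi\ps\ul{n}\ps\pi\in\bbot$. These two proof-like terms give (iii).

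Finally, for item~(ii) I would first realize $F(0),\pt n(F(n)\to F(n^+))\to\pt n\indi F(n)$ by iteration and then compose with the term from (iii). The realizer is $\lbd b\lbd g\lbd\nu((\nu)g)b$: on a stack $b\ps g\ps\ul{n}\ps\pi$ with $\pi\in\|F(n)\|$ it reduces, by the computation rule of the Church numeral $\ul{n}$ (itself checked by induction from $\ul{0}=\lbd x\lbd y\,y$ and $\ul{n}^+=s\ul{n}$), to $(g)^{n}b\star\pi$, where $(g)^{n}b$ is the $n$-fold application of $g$ to $b$. The key auxiliary is a meta-level induction in ${\mathcal M}$ showing $(g)^{n}b\force F(n)$: the base is $b\force F(0)$, and the step uses that $g\force\pt m(F(m)\to F(m^+))$ yields $(g)w\force F(m^+)$ from $w\force F(m)$. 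Since $\pi\in\|F(n)\|$ we get $(g)^{n}b\star\pi\in\bbot$, and closure of $\bbot$ under anti-reduction concludes. I expect the main obstacles to be exactly these two points: arranging the $\Ccc$-reduction in (iii) so that $\kk_\pi$ lands in the $\neg F(n)$ position beneath the numeral, and dovetailing the Church-numeral reduction in (ii) with the external induction that $(g)^{n}b\force F(n)$.
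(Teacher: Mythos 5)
Your proof is correct. For items (i) and (iii) it is essentially the paper's own argument: the same computation of $\|n\neps\wt{\NN}\|$, the same realizer $\lbd x(x)\ul{0}$ for $0\eps\wt{\NN}$, the successor clause via a term realizing $n^+\neps\wt{\NN}\to n\neps\wt{\NN}$ (you merely carry out the contraposition explicitly), and your $\Ccc$-term $\lbd\eta\lbd\nu(\Ccc)\lbd k(\eta k)\nu$ for $(\pt n\eps\wt{\NN})F(n)\to\pt n\indi F(n)$ is the paper's $\lbd x\lbd n(\Ccc)\lbd k(x)kn$ up to renaming; the only cosmetic difference is that the paper routes the negation through the extended formula ${}^{\neg}F=\{\kk_\pi\;;\;\pi\in\|F\|\}$ of Lemma~\ref{neg}, where you argue directly with arbitrary realizers of $\neg F(n)$. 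The genuine divergence is in (ii). The paper proves it directly and independently of (iii), by realizing the contrapositive scheme $\neg F(0),\pt n(F(n^+)\to F(n))\to\pt m(F(m)\to m\neps\wt{\NN})$ with $\lbd x\lbd y\lbd z\lbd n(x)(n)yz$: its external induction shows $\ul{m}\eta\zeta\force F(0)$ whenever $\eta$ realizes the predecessor step and $\zeta\force F(m)$, a \emph{downward} iteration from $F(m)$ to $F(0)$ that is then refuted by $\xi\force\neg F(0)$ (the stated form follows by substituting $\neg F$ for $F$ and adequacy). You iterate \emph{upward}, proving $(g)^{n}b\force F(n)$ by external induction, which realizes induction in the form $F(0),\pt n(F(n)\to F(n^+))\to\pt n\indi F(n)$, and you then transfer to $(\pt n\eps\wt{\NN})F(n)$ through (iii). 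Both routes rest on the same two ingredients --- the weak-head behaviour of $\ul{n}$ as an iterator and a meta-level induction in ${\mathcal M}$ on realizers --- so neither is more elementary; the paper's version keeps the three items independent and needs no storage step inside (ii), while yours isolates the ``pure'' induction over $\pt n\indi$ and obtains the relativized form as a corollary of (iii), at the price of establishing (iii) first. Your observation that $\|m\neps\wt{\NN}\|=\vide$ for $m\notin\NN$ correctly disposes of the non-integer instances of the unrestricted quantifiers, a point the paper leaves tacit.
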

\begin{proof}
i)~~Let $\xi\force0\neps\wt{\NN}$; then $\xi\star\ul{0}\ps\pi\in\bbot$ for all $\pi\in\Pi$.
Therefore $\lbd x(x)\ul{0}\force0\neps\wt{\NN}\to\bot$.

Let $\xi\force n^+\neps\wt{\NN}$ and $\ul{n}\ps\pi\in\|n\neps\wt{\NN}\|$.
We have $\xi\star s\ul{n}\ps\pi\in\bbot$ and therefore $\theta\star\xi\ps\ul{n}\ps\pi\in\bbot$ with
$\theta=\lbd x\lbd n(x)(s)n$. Thus $\theta\force\pt n(n^+\neps\wt{\NN}\to n\neps\wt{\NN})$.

ii)~~We show $\lbd x\lbd y\lbd z\lbd n(x)(n)yz\force\neg F(0),\pt n(F(n^+)\to F(n))\to\pt m(F(m)\to m\neps\wt{\NN})$.

Let $\xi\force\neg F(0),\eta\force\pt n(F(n^+)\to F(n)),\zeta\force F(m)$ and $\ul{m}\ps\pi\in\|m\neps\wt{\NN}\|$.

Let us show that $\ul{m}\eta\zeta\force F(0)$ by induction on $m$:

This is clear if $m=0$. Now, $\ul{m}^+\eta\zeta=s\ul{m}\eta\zeta\succ(m\eta)(\eta)\zeta$ and $\eta\zeta\force F(m)$
since $\zeta\force F(m^+)$. Therefore $(m\eta)(\eta)\zeta\force F(0)$ by the induction hypothesis.

It follows that $(\xi)(\ul{m})\eta\zeta\force\bot$, hence the result.

iii)~~Let us use Lemma~\ref{neg}. We have:

$\|\pt n(^\neg F(n)\to n\neps\wt{\NN})\|=
\{\kk_\pi\ps\ul{n}\ps\varpi\;;\;n\in\NN,\pi\in\|F(n)\|,\varpi\in\Pi\}$ and by definition:

$\|\pt n\indi F(n)\|=\{\ul{n}\ps\pi\;;\;n\in\NN,\pi\in\|F(n)\|\}$. It follows easily that:

$\lbd x\lbd n(\Ccc)\lbd k(x)kn\force\pt n(^\neg F(n)\to n\neps\wt{\NN})\to\pt n\indi F(n)$

$\lbd x\lbd k\lbd n(k)(x)n\force\pt n\indi F(n)\to\pt n(^\neg F(n)\to n\neps\wt{\NN})$.
\end{proof}

\subsection*{Some useful notations}
For every set of terms $X\subset\Lbd$ and every closed formula $F$ we can define an ``extended formula''
$X\to F$ by setting $\|X\to F\|=\{\xi\ps\pi\;;\;\xi\in X,\pi\in\|F\|\}$.

For instance, for every formula $F$, we define $^\neg F=\{\kk_\pi\;;\;\pi\in\|F\|\}$. It is a useful
equivalent of $\neg F$ by the following:

\begin{lem}\label{neg}
$\force{}^\neg F\dbfl\neg F$.
\end{lem}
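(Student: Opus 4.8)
The plan is to treat $\force {}^\neg F\dbfl\neg F$ as the statement that the set of terms $^\neg F$ and the formula $\neg F$ are interchangeable for realizability, and to give one proof-like term for each direction. Everything rests on a single computation: for every $\pi\in\|F\|$ one has $\kk_\pi\force\neg F$. Indeed $\|\neg F\|=\|F\to\bot\|=\{\eta\ps\varpi\;;\;\eta\force F,\varpi\in\Pi\}$, and for such a stack the \emph{restore} rule gives $\kk_\pi\star\eta\ps\varpi\succ\eta\star\pi$; since $\eta\force F$ and $\pi\in\|F\|$ this last process lies in $\bbot$, hence so does $\kk_\pi\star\eta\ps\varpi$ by closure of the pole under anti\-/reduction. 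Thus every element of $^\neg F$ realizes $\neg F$, i.e. $^\neg F\subseteq|\neg F|$.

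From this the direction $^\neg F\to\neg F$ is immediate. Reading it as the extended formula whose antecedent is the set $^\neg F$, its falsity value is $\{\kk_\pi\ps\tau\;;\;\pi\in\|F\|,\ \tau\in\|\neg F\|\}$, and $\III$ realizes it: $\III\star\kk_\pi\ps\tau\succ\kk_\pi\star\tau\in\bbot$ for every $\tau\in\|\neg F\|$, using $\kk_\pi\force\neg F$ just proved. So $\III$ (equivalently $\lbd x\,x$) is the witness here.

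For the converse I would use the control instruction $\Ccc$. The key point is that a term belonging to $^\neg F$ must be \emph{literally} a continuation $\kk_\sigma$ with $\sigma\in\|F\|$, so to drive a context $w$ that expects such an element I must synthesise one. The device is to hand the hypothesis $s\force\neg F$ a \emph{dummy} argument $Q=\Ccc(\lbd k\,(w)k)$ and check that $Q$ realizes $F$: for $\sigma\in\|F\|$ the process $Q\star\sigma$ reduces, by the $\Ccc$ rule, to $(w)\kk_\sigma\star\sigma$ and then to $w\star\kk_\sigma\ps\sigma$; since $\sigma\in\|F\|$ we have $\kk_\sigma\in{}^\neg F$, so this is in $\bbot$. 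Hence $Q\force F$, and therefore $s\star Q\ps\rho\in\bbot$ for every $\rho$. Thus the proof-like term $\lbd w\lbd s\,(s)(\Ccc(\lbd k\,(w)k))$ realizes the converse, a second $\Ccc$ being inserted to restore the conclusion stack if one states the lemma with a general conclusion $G$ in place of $\bot$.

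I expect the main obstacle to be exactly this converse. The first direction is a one\-/line reduction, but realizability of $^\neg F$ yields no information about arbitrary terms, only about syntactic continuations indexed by $\|F\|$; so the whole computational content is in using $\Ccc$ to manufacture such a continuation out of a realizer of $\neg F$, and in verifying that the stack captured by $\Ccc$ genuinely lies in $\|F\|$. It is precisely this constraint that forces the argument to go \emph{through} $s$ — evaluating the dummy $Q$ against an arbitrary $\sigma\in\|F\|$ — rather than producing the continuation out of thin air.
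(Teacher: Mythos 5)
Your proof is correct and follows essentially the same route as the paper: the first direction is exactly the paper's (the computation $\kk_\pi\force\neg F$ via the restore rule, hence $\III\force{}^\neg F\to\neg F$), and your converse rests on the same key idea of using $\Ccc$ to capture the current stack $\sigma\in\|F\|$ as a literal continuation $\kk_\sigma\in{}^\neg F$. The only difference is that the paper applies $\Ccc$ bare, getting the sharper $\Ccc\force\neg{}^\neg F\to F$ directly (since $\Ccc\star\xi\ps\pi\succ\xi\star\kk_\pi\ps\pi$ and $\kk_\pi\ps\pi\in\|{}^\neg F\to\bot\|$ when $\pi\in\|F\|$), whereas your term $\lbd w\lbd s\,(s)(\Ccc\,\lbd k\,(w)k)$ lands on the doubly negated form $\neg{}^\neg F\to\neg\neg F$ and needs the extra $\Ccc$ you already anticipate.
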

\begin{proof}
If $\pi\in\|F\|$, then $\kk_\pi\force\neg F$ and therefore $\III\,\force^\neg F\to\neg F$.

Conversely, if $\xi\force^\neg F\to\bot$, then $\xi\star\kk_\pi\ps\pi\in\bbot$ for every $\pi\in\|F\|$;

thus $\Ccc\force\neg^\neg F\to F$.
\end{proof}
If $t,u$ are terms of the language of ZF, built with functionals in ${\mathcal M}$,
we define another ``extended formula'' $t=u\hto F$ by setting:

\centerline{$\|t=u\hto F\|=\vide$ if $t\ne u$; $\|t=u\hto F\|=\|F\|$ if $t=u$.}

We write $(t_1=u_1),\ldots,(t_n=u_n)\hto F$ for $(t_1=u_1)\hto(\cdots\hto((t_n=u_n)\hto F)\cdots)$.

\begin{lem}
$\force((t=u\hto F)\dbfl(t=u\to F))$.
\end{lem}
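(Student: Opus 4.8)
The plan is to realize the two implications $(t=u\hto F)\to(t=u\to F)$ and $(t=u\to F)\to(t=u\hto F)$ separately by proof-like terms; by the adequacy lemma (Theorem~\ref{adeq}) these combine into a realizer of the full biconditional, since $A\dbfl B$ stands for $(A\to B)\land(B\to A)$ and the introduction rule for $\land$ belongs to classical natural deduction. Throughout, I would argue by cases on whether $t=u$ or $t\ne u$, a fact decided in the ground model $\mathcal M$, computing $\|t=u\hto F\|$ from its definition ($\vide$ if $t\ne u$, and $\|F\|$ if $t=u$) and $\|t=u\|$ from Lemma~\ref{eq}.

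For the first implication I propose the term $\lbd x\lbd y(y)x$. Its falsity value is $\{\xi\ps\zeta\ps\pi\;;\;\xi\force t=u\hto F,\ \zeta\force t=u,\ \pi\in\|F\|\}$, and since $\lbd x\lbd y(y)x\star\xi\ps\zeta\ps\pi\succ\zeta\star\xi\ps\pi$ by weak head reduction, it suffices to check $\xi\ps\pi\in\|t=u\|$. When $t=u$, the hypothesis $\xi\force t=u\hto F$ means $\xi\force F$, so $\xi\star\pi\in\bbot$, and Lemma~\ref{eq} (the case $a=a$) gives $\xi\ps\pi\in\|t=u\|$; when $t\ne u$, Lemma~\ref{eq} gives $\|t=u\|=\{\xi'\ps\pi'\}$, which already contains $\xi\ps\pi$. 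In both cases $\zeta\force t=u$ yields $\zeta\star\xi\ps\pi\in\bbot$, and closure of the pole under anti-reduction concludes.

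For the converse I propose $\lbd x(x)\III$, which reduces as $\lbd x(x)\III\star\xi\ps\pi\succ\xi\star\III\ps\pi$. When $t\ne u$ the target $t=u\hto F$ has empty falsity value, so the implication is realized trivially. When $t=u$, the relevant stacks are $\xi\ps\pi$ with $\xi\force t=u\to F$ and $\pi\in\|F\|$; here one only has to observe that $\III\force t=u$, which is immediate from Lemma~\ref{eq}: for $\xi'\ps\pi'\in\|t=u\|=\|\bot\to\bot\|$ one has $\xi'\star\pi'\in\bbot$ while $\III\star\xi'\ps\pi'\succ\xi'\star\pi'$. Then $\III\ps\pi\in\|t=u\to F\|$, so $\xi\star\III\ps\pi\in\bbot$.

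The argument is essentially bookkeeping of falsity values, so there is no deep obstacle. The single point that must be handled with care is the case split itself, and in particular the contrast it exposes: when $t=u$ the Leibnitz equality is realized by $\III$, whereas when $t\ne u$ the formula $t=u\hto F$ is vacuously realized because its falsity value is $\vide$. Everything else follows from weak head reduction and the closure of $\bbot$ under anti-reduction.
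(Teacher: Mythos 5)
Your proof is correct, and your converse direction is literally the paper's: $\lbd x(x)\III$ plus the observation that $\III\force t=u$ when $t=u$. The forward direction is where you diverge. The paper does not realize $(t=u\hto F)\to(t=u\to F)$ directly; it realizes the classically equivalent $\neg F,(t=u\hto F)\to t\ne u$ by $\III$ and leaves the conversion to the adequacy lemma. That route only ever inspects $\|t\ne u\|$, which is $\Pi$ or $\vide$, and never has to exhibit an element of $\|t=u\|$. Your route, with $\lbd x\lbd y(y)x$, instead feeds the stack $\xi\ps\pi$ to the realizer $\zeta$ of $t=u$, so it needs the precise description of $\|t=t\|$ as $\{\xi'\ps\pi'\;;\;\xi'\star\pi'\in\bbot\}$ given by Lemma~\ref{eq}. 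That description is established there for the Leibnitz equality $\pt z(t\neps z\to t\neps z)$; if one instead takes the paper's ``final'' definition of $t=u$ as $t\ne u\to\bot$, then $\|t=t\|$ is the smaller set $\{\xi'\ps\pi'\;;\;\xi'\force\bot,\ \pi'\in\Pi\}$, and your step ``$\xi\star\pi\in\bbot$, hence $\xi\ps\pi\in\|t=u\|$'' no longer applies as written; you would have to upgrade $\xi$ to a realizer of $\bot$ by saving the stack, e.g.\ with $\lbd x\lbd y(\Ccc)\lbd k(y)(k)x$, which sends $\xi\ps\zeta\ps\pi$ to $\zeta\star(\kk_\pi)\xi\ps\pi$. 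So your argument is sound under the reading of Lemma~\ref{eq} you invoke, and the paper's detour through $t\ne u$ is exactly what makes its version insensitive to which of the two (interderivable) equalities is meant.
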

\begin{proof}
We have immediately \ $\III\,\force\neg F,(t=u\hto F)\to t\ne u$.

Conversely $\lbd x(x)\,\III\force(t=u\to F)\to(t=u\hto F)$.
\end{proof}
For instance, the conclusion of Theorem~\ref{idterm} may be rewritten as:

$\III\,\force\pt\vec{x}(t_1[\vec{x}]=u_1[\vec{x}],\ldots,t_n[\vec{x}]=u_n[\vec{x}]\hto t[\vec{x}]=u[\vec{x}])$.

Lemmas~\ref{eps-Cl}, \ref {ax-in-aX-gea}, \ref{choix_pi} and Theorem~\ref{V_alpha} below will be used in the following sections.

\begin{lem}\label{eps-Cl}
$\force\pt x\pt y(x\eps y\to\lbr x\in\mbox{\rm Cl}[y]\rbr=1)$.
\end{lem}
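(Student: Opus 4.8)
The plan is to exhibit a single proof-like term $\theta$ realizing the formula, and to make it work uniformly regardless of whether the ground-model truth value $\lbr x\in\mbox{\rm Cl}[y]\rbr$ is $1$ or $0$. Unfolding $x\eps y$ as $x\neps y\to\bot$, a stack in $\|\pt x\pt y(x\eps y\to\lbr x\in\mbox{\rm Cl}[y]\rbr=1)\|$ has the shape $t\ps\zeta\ps\varpi$, where $t\force x\eps y$ and $\zeta\ps\varpi\in\|\lbr x\in\mbox{\rm Cl}[y]\rbr=1\|$. So the whole task is to find $\theta$ with $\theta\star t\ps\zeta\ps\varpi\in\bbot$ for all such $x,y,t,\zeta,\varpi$, and then check that $\theta$ is built only from $\lbd$-abstractions and $\Ccc$, hence proof-like.

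First I would split on the ground-model fact ``$x\in\mbox{\rm Cl}[y]$'' and read off the two possible shapes of $\|\lbr x\in\mbox{\rm Cl}[y]\rbr=1\|$ from Lemma~\ref{eq}. If $x\in\mbox{\rm Cl}[y]$, the conclusion is the equality $1=1$, so Lemma~\ref{eq} gives $\zeta\star\varpi\in\bbot$. If $x\notin\mbox{\rm Cl}[y]$, it is $0=1$, so $\zeta,\varpi$ are arbitrary; but here I would invoke the characterizing property of $\mbox{\rm Cl}$, namely that every first projection of $y$ lies in $\mbox{\rm Cl}[y]$. Its contrapositive says that $x\notin\mbox{\rm Cl}[y]$ forces $(x,\pi)\notin y$ for all $\pi$, i.e.\ $\|x\neps y\|=\vide$, so in this case \emph{every} term realizes $x\neps y$ vacuously.

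The key idea is to hand $t$ the argument $s=(\kk_\varpi)\zeta$, where $\kk_\varpi$ is the continuation saved by $\Ccc$ at the moment the tail of the stack is $\varpi$. Concretely I would take $\theta=\lbd t\lbd\zeta\,(\Ccc)\lbd k\,(t)((k)\zeta)$, which reduces as $\theta\star t\ps\zeta\ps\varpi\succ t\star s\ps\varpi$ with $s=(\kk_\varpi)\zeta$, and this $s$ satisfies $s\star\rho\succ\kk_\varpi\star\zeta\ps\rho\succ\zeta\star\varpi$ for \emph{every} stack $\rho$. The point is that $s\force x\neps y$ in both cases: when $x\in\mbox{\rm Cl}[y]$ we have $\zeta\star\varpi\in\bbot$, so $s$ sends every stack into $\bbot$ and therefore realizes any formula, in particular $x\neps y$; when $x\notin\mbox{\rm Cl}[y]$ we have $\|x\neps y\|=\vide$, so $s$ realizes $x\neps y$ trivially. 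Since $t\force x\neps y\to\bot$ and $\varpi\in\|\bot\|=\Pi$, I conclude $t\star s\ps\varpi\in\bbot$, and closure of $\bbot$ under anti-reduction yields $\theta\star t\ps\zeta\ps\varpi\in\bbot$.

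I expect the main obstacle to be exactly this uniformity. A term that reduces to $\zeta\star\varpi$ settles the true case but fails in the false one, whereas a term that merely applies $t$ to $\zeta$ settles the false case but would require $\zeta\force x\neps y$ in the true one, which is not available. Capturing the current continuation with $\Ccc$ is what reconciles the two, since it lets the argument fed to $t$ discard whatever stack $t$ confronts it with and fall back to $\zeta\star\varpi$; thus the crux of the proof is the control operator rather than any set-theoretic subtlety about $\mbox{\rm Cl}$.
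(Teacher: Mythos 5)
Your proof is correct, and its mathematical core is the same as the paper's: the only facts used are that $(x,\pi)\in y$ implies $x\in\mbox{Cl}(y)$ in the ground model, together with the case split on the value of $\lbr x\in\mbox{Cl}[y]\rbr$. Where you differ is in how the realizer is produced. The paper realizes the contrapositive $\pt x\pt y(\lbr x\in\mbox{Cl}[y]\rbr\ne1\to x\neps y)$, which costs nothing: for any stack $\xi\ps\pi$ in its falsity value we have $(x,\pi)\in y$, hence $\lbr x\in\mbox{Cl}[y]\rbr=1$, hence $\|\lbr x\in\mbox{Cl}[y]\rbr\ne1\|=\Pi$ and $\xi\force\bot$; so $\III$ realizes it, and the stated form then follows by the Adequacy Lemma, since $(P\to Q)\to(\neg Q\to\neg P)$ is intuitionistically derivable. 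You instead carry out this contraposition by hand at the level of processes, using $\Ccc$ to build the argument $(\kk_\varpi)\zeta$ handed to $t$. That works, but note two things. First, under the convention the paper's own proof uses ($\|1\ne1\|=\Pi$, with $a=b$ defined as $a\ne b\to\bot$), in the true case $\zeta$ realizes $\bot$ outright and hence realizes $x\neps y$, so the plain term $\lbd t\lbd\zeta(t)\zeta$ already suffices; your closing claim that $\zeta\force x\neps y$ ``is not available'' is accurate only for the Leibnitz reading of Lemma~\ref{eq}, where one knows merely $\zeta\star\varpi\in\bbot$ for that particular $\varpi$ --- there your continuation trick is genuinely doing work. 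Second, as to what each route buys: the paper's is shorter and delegates all propositional manipulation to adequacy, while yours produces an explicit closed realizer without invoking adequacy and is robust to either reading of the equality.
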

\begin{proof}
In the model ${\mathcal M}$, the unary functional symbol Cl denotes the \emph{transitive closure}.

We show \ $\III\,\force\pt x\pt y(\lbr x\in\mbox{Cl}[y]\rbr\ne1\to x\neps y)$:
let $\xi\force\lbr x\in\mbox{Cl}[y]\rbr\ne1$ and $\pi\in\|x\neps y\|$.

Then $(x,\pi)\in y$, therefore $x\in\mbox{Cl}(y)$. It follows that $\xi\force\bot$.
\end{proof}

\begin{lem}\label{ax-in-aX-gea}
$\III\,\force\pt X\pt a^{\gl2}\pt x(\lbr ax\in aX\rbr\ge a\hto ax\eps\Phi[X])$ where $\Phi$ is the functional symbol
defined in ${\mathcal M}$ by $\,\Phi[X]=(X\cup\{0\})\fois\Pi$ i.e.\ $\gl(X\cup\{0\})$.
\end{lem}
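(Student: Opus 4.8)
The plan is to reduce the statement to a two-case analysis on the value of $a\eps\gl2$, which ranges over $\{0,1\}$ in ${\mathcal M}$. Recall that $\|\pt a^{\gl2}G(a)\|=\|G(0)\|\cup\|G(1)\|$ by the definition of the $\gl X$-quantifier (and as noted in the proof of Lemma~\ref{fax-afx}, $\xi\force\pt a^{\gl2}G(a)$ means $\xi\force G(0)$ and $\xi\force G(1)$), while the ordinary quantifiers $\pt X$ and $\pt x$ contribute unions of falsity values over all parameters. Hence a closed term realizes $\pt X\pt a^{\gl2}\pt x(\ldots)$ precisely when it realizes every instance $\lbr ax\in aX\rbr\ge a\hto ax\eps\Phi[X]$ with $a\in\{0,1\}$, so it suffices to show that $\III$ realizes each such instance.

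First I would record the elementary fact that $\|A\|=\Pi$ implies $\III\,\force A\to\bot$: for any $t\force A$ and any $\pi$ we have $\III\star t\ps\pi\succ t\star\pi\in\bbot$, since $\pi\in\Pi=\|A\|$. Applying this with $A=(ax\neps\Phi[X])$, and unfolding $\Phi[X]=\gl(X\cup\{0\})$ via the computation $\|u\neps\gl Y\|=\Pi$ if $u\in Y$ and $\vide$ otherwise, yields the only realizability input needed: whenever $ax\in X\cup\{0\}$ holds in ${\mathcal M}$, we have $\|ax\neps\Phi[X]\|=\Pi$ and therefore $\III\,\force ax\eps\Phi[X]$.

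It then remains to run the two cases, using that $\|C\hto F\|=\|F\|$ when the (decidable, ${\mathcal M}$-internal) condition $C$ holds and $\|C\hto F\|=\vide$ otherwise. If $a=0$, then $ax=aX=\vide$, so the condition $\lbr\vide\in\vide\rbr\ge0$ holds trivially, while $ax=0\in X\cup\{0\}$; the displayed fact then gives $\III\,\force 0\eps\Phi[X]$, so $\III$ realizes the instance. If $a=1$, then $ax=x$ and $aX=X$, so the condition reads $\lbr x\in X\rbr\ge1$, i.e.\ $x\in X$. When $x\notin X$ the condition fails, the instance has empty falsity value, and it is realized vacuously; when $x\in X$ we have $x\in X\cup\{0\}$, and the same fact again gives $\III\,\force x\eps\Phi[X]$.

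I expect no genuine obstacle here: the content is entirely the bookkeeping of the $\gl$-notation together with the degenerate branch $a=0$, where $ax$ collapses to $\vide$ and membership is controlled automatically. The one point to get right is the correct unfolding of $\|ax\neps\Phi[X]\|$ through $\Phi[X]=\gl(X\cup\{0\})$, which is exactly what forces the extra element $0$ into the set and makes the $a=0$ case go through.
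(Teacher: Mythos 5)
Your proof is correct and follows essentially the same route as the paper: the paper reduces the statement to the two claims $x\in X\To\III\force x\neps\Phi[X]\to\bot$ and $\III\force 0\neps\Phi[X]\to\bot$ (your $a=1$ and $a=0$ cases) and declares both trivial, which is exactly the case analysis and unfolding of $\Phi[X]=\gl(X\cup\{0\})$ that you carry out in detail.
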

The notation $b\ge a$ for $a,b\eps\gl2$ means, of course, $ab=a$.

\begin{proof}
This amounts to show:

  \begin{enumerate}
    \item $x\in X\To\,\III\force x\neps\Phi[X]\to\bot$;
    \item $\III\force 0\neps\Phi[X]\to\bot$.
  \end{enumerate}

Both are trivial.
\end{proof}

\begin{lem}\label{choix_pi}
Let $F(x,\vec{y})$ be a formula in \ZFe. Then:

\centerline{$\III\,\force\pt\vec{y}(\pt\varpi^{\gl\Pi}F(f[\varpi,\vec{y}],\vec{y})\to\pt x\,F(x,\vec{y}))$}

for some functional symbol $f$ defined in ${\mathcal M}$.
\end{lem}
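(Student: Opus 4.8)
The plan is to reduce the statement to a single set-theoretic inclusion between falsity values and then invoke the fact that $\III$ realizes any implication whose premise has the larger falsity value. Write $A=\pt\varpi^{\gl\Pi}F(f[\varpi,\vec{y}],\vec{y})$ and $B=\pt x\,F(x,\vec{y})$. Unfolding the definitions, $\III\force A\to B$ means precisely that $\III\star t\ps\pi\in\bbot$ for every $t\force A$ and every $\pi\in\|B\|$; since $\III\star t\ps\pi\succ t\star\pi$, and $t\force A$ kills every stack of $\|A\|$, this holds as soon as $\|B\|\subseteq\|A\|$. Because the realizer $\III$ does not depend on $\vec{y}$, it suffices to establish this inclusion for each fixed choice of parameters $\vec{y}$: realizing the matrix $A\to B$ for every instance gives $\III\force\pt\vec{y}(A\to B)$, as $\|\pt\vec{y}(A\to B)\|=\bigcup_{\vec{b}}\|A[\vec{b}]\to B[\vec{b}]\|$.

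Next I would construct $f$. For a fixed formula $F$, the map $(a,\vec{y})\mapsto\|F(a,\vec{y})\|$ is a class function definable in the ground model ${\mathcal M}$, with values in the set ${\mathcal P}(\Pi)$. Using the global well-ordering of ${\mathcal M}$ (available since ${\mathcal M}\models$ ZFL), I would define the functional symbol $f$ by letting $f[\varpi,\vec{y}]$ be the $<_L$-least name $a$ such that $\varpi\in\|F(a,\vec{y})\|$ when such a name exists, and $f[\varpi,\vec{y}]=\vide$ otherwise. This is a genuine functional defined in ${\mathcal M}$, sending a stack $\varpi\in\Pi$ and the parameters $\vec{y}$ to a name.

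It then remains to check the inclusion $\|B\|\subseteq\|A\|$. By the clause $\|\pt x\,F(x,\vec{y})\|=\bigcup_a\|F(a,\vec{y})\|$ and the definition of the quantifier $\pt\varpi^{\gl\Pi}$, we have $\|A\|=\bigcup_{\varpi\in\Pi}\|F(f[\varpi,\vec{y}],\vec{y})\|$. Given $\pi\in\|B\|$, there is a name $a$ with $\pi\in\|F(a,\vec{y})\|$; hence the set of witnesses is nonempty and, by the construction of $f$, $\pi\in\|F(f[\pi,\vec{y}],\vec{y})\|$. Taking $\varpi=\pi$ places $\pi$ in $\|A\|$, which proves the inclusion and hence the lemma.

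The only delicate point is the definition of $f$: for a given stack $\varpi$ the class of witnessing names may be a proper class, so selecting one name uniformly is exactly where the (definable) global choice of the ground model is used — equivalently, where working under V = L is convenient. Everything else is a routine unfolding of the falsity-value definitions, so I expect no further obstacle.
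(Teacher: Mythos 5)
Your proposal is correct and follows essentially the same route as the paper: both reduce the claim to the inclusion $\|\pt x\,F(x,\vec{y})\|\subseteq\bigcup_{\varpi\in\Pi}\|F(f[\varpi,\vec{y}],\vec{y})\|$, obtained by defining $f[\varpi,\vec{y}]$ via a global choice principle in ${\mathcal M}$ (the paper just writes ``choice principle in ${\mathcal M}$'' where you spell out the $<_L$-least witness), and then observe that $\III$ realizes the implication. Your added remark on why a uniform selection needs global choice rather than mere AC is a fair gloss on the paper's tacit assumption.
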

\begin{proof}
Let $\vec{a}=(a_1,\ldots,a_k)$ in ${\mathcal M}$. Then, we have:

$\pi\in\|\pt x\,F(x,\vec{a})\|\Dbfl{\mathcal M}\models\ex x(\pi\in\|F(x,\vec{a})\|)
\Dbfl{\mathcal M}\models\pi\in\|F(f[\pi,\vec{a}],\vec{a})\|$

where $f$ is a functional defined in ${\mathcal M}$, (choice principle in ${\mathcal M}$).

Thus we have $\|\pt x\,F(x,\vec{a})\|\subset\bigcup_{\varpi\in\Pi}\|F(f[\varpi,\vec{a}],\vec{a})\|$
hence the result.
\end{proof}

\begin{lem}\label{eps-trans}
If $X\in{\mathcal M}$ is transitive, then $\gl X$ is $\eps$-transitive, i.e.:

$\,\III\,\force\pt x^{\gl X}\pt y(y\neps\gl X\to y\neps x)$.
\end{lem}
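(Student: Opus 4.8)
The plan is to unfold both quantifiers and the implication straight from their falsity-value definitions, reducing the whole claim to a single set-theoretic consequence of the transitivity of $X$. By the definition of the quantifier $\pt x^{\gl X}$, of the ordinary quantifier $\pt y$, and of $\|F\to F'\|$, the falsity value of the formula is
\[
\|\pt x^{\gl X}\pt y(y\neps\gl X\to y\neps x)\|=\bigcup_{x\in X}\bigcup_{y}\{t\ps\pi\;;\;t\force y\neps\gl X,\ \pi\in\|y\neps x\|\}.
\]
A generic element is therefore a stack $t\ps\pi$ for which there are $x\in X$ and $y$ with $t\force y\neps\gl X$ and $(y,\pi)\in x$ (recalling $\|y\neps x\|=\{\pi\;;\;(y,\pi)\in x\}$). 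Since $\III\star t\ps\pi\succ t\star\pi$, it suffices to prove $t\star\pi\in\bbot$ for every such $t\ps\pi$, and for this I would show that $\pi$ already lies in $\|y\neps\gl X\|$.

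The key step is precisely this inclusion $\pi\in\|y\neps\gl X\|$, i.e.\ $(y,\pi)\in\gl X=X\fois\Pi$, which reduces to $y\in X$. Here I use transitivity of $X$: from $(y,\pi)\in x\in X$ I get $(y,\pi)\in X$, and then, unwinding the pair as $(y,\pi)=\{\{y\},\{y,\pi\}\}$ and applying transitivity twice more, I obtain $\{y\}\in X$ and finally $y\in X$. Since $\pi\in\Pi$, this gives $(y,\pi)\in\gl X$, hence $\pi\in\|y\neps\gl X\|$ as wanted.

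To conclude, $t\force y\neps\gl X$ means $t\star\sigma\in\bbot$ for every $\sigma\in\|y\neps\gl X\|$; instantiating at $\sigma=\pi$ yields $t\star\pi\in\bbot$, hence $\III\star t\ps\pi\in\bbot$. As this holds for every element of the falsity value, $\III$ realizes the formula. The only non-bookkeeping point is the iterated use of transitivity through the Kuratowski encoding of the pair, which I would state explicitly; once $y\in X$ is secured, the realizability part is immediate from the reduction rule for $\III$.
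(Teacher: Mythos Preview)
Your argument is correct and follows exactly the paper's approach: unfold the falsity value, pick $x\in X$, $t\force y\neps\gl X$, $\pi$ with $(y,\pi)\in x$, and use transitivity to get $y\in X$, whence $\|y\neps\gl X\|=\Pi$ and $t\star\pi\in\bbot$. The only difference is that you spell out the passage from $(y,\pi)\in x\in X$ to $y\in X$ via the Kuratowski encoding (three applications of transitivity), which the paper compresses into the single phrase ``since $X$ is transitive, $y\in X$''; your added care is justified and costs nothing.
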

\begin{proof}
Let $x\in X$, $\varpi\in\|y\neps x\|$, i.e.\ $(y,\varpi)\in x$ and $\xi\in\LLbd$ such that $\xi\force y\neps\gl X$.
Since $X$ is transitive, we have $y\in X$ and therefore $\xi\force\bot$.
\end{proof}

\begin{thm}\label{V_alpha}
Let $\mathfrak{L}$ be the language $\{\veps,\in,\subset\}$ of \ZFe, with a symbol for each
functional definable in ${\mathcal M}$. Then, there exists an $\veps$-transitive  $\mathfrak{L}$-elementary substructure
$\wt{\mathcal N}$ of ${\mathcal N}$ such that, for all $a$ in $\wt{\mathcal N}$, there is an ordinal $\alpha$
of ${\mathcal M}$ such that $\wt{\mathcal N}\models a\eps\gl V_\alpha$.
\end{thm}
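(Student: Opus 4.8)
The plan is to take $\wt{\mathcal N}$ to be exactly the class of $\veps$-bounded elements,
$\wt{\mathcal N}=\{a\in{\mathcal N}\;;\;{\mathcal N}\models a\eps\gl V_\alpha\text{ for some ordinal }\alpha\text{ of }{\mathcal M}\}$,
and to verify directly the three things required: that it is a genuine $\mathfrak{L}$-substructure (closed under every functional symbol), that it is $\veps$-transitive, and that it is $\mathfrak{L}$-elementary in ${\mathcal N}$. The boundedness clause of the conclusion then holds essentially by definition, transported along elementarity (note that each relevant $\alpha$ itself lies in $\wt{\mathcal N}$, since ${\mathcal M}\models\alpha\in V_{\alpha+1}$ gives $\force\alpha\eps\gl V_{\alpha+1}$).

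First I would dispose of $\veps$-transitivity and substructure closure, which are two faces of the same easy computation. For transitivity: if $a\eps\gl V_\alpha$ and ${\mathcal N}\models b\veps a$, then, $V_\alpha$ being transitive in ${\mathcal M}$, Lemma~\ref{eps-trans} makes $\gl V_\alpha$ $\veps$-transitive, so $b\eps\gl V_\alpha$ and $b\in\wt{\mathcal N}$. For closure under a $k$-ary functional symbol $g$: given $\vec a\eps\gl V_\alpha$, replace $g$ by its restriction $g'=g\restriction V_\alpha^{k}$, which is a set-function of ${\mathcal M}$ whose range lies in some $V_\beta$ by replacement in ${\mathcal M}$; by Lemma~\ref{fun-gl} the extension of $g'$ satisfies $g':\gl V_\alpha^{k}\to\gl V_\beta$, and by Theorem~\ref{idterm} the extensions of $g$ and $g'$ agree on arguments $\veps\gl V_\alpha$, so $g[\vec a]\eps\gl V_\beta$ and $g[\vec a]\in\wt{\mathcal N}$. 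Thus $\wt{\mathcal N}$ is a well-defined $\mathfrak{L}$-substructure.

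The heart of the argument is the Tarski--Vaught test for elementarity, and here Lemma~\ref{choix_pi} is the engine producing \emph{bounded} witnesses. Given an $\mathfrak{L}$-formula $\psi(x,\vec y)$ and parameters $\vec a\in\wt{\mathcal N}$ with $\vec a\eps\gl V_\alpha$, suppose ${\mathcal N}\models\ex x\,\psi(x,\vec a)$. Applying Lemma~\ref{choix_pi} to $\neg\psi$ furnishes a functional $f$ of ${\mathcal M}$ such that, in ${\mathcal N}$, $\ex x\,\psi(x,\vec a)$ entails $(\ex\varpi\eps\gl\Pi)\,\psi(f[\varpi,\vec a],\vec a)$; hence $w=f[\varpi,\vec a]$ is a witness for some $\varpi\eps\gl\Pi$. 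Running exactly the bounding argument of the previous paragraph---restrict $f$ to $\Pi\fois V_\alpha^{k}$, bound its range by some $V_\beta$, and invoke Lemma~\ref{fun-gl} together with Theorem~\ref{idterm}---gives $w\eps\gl V_\beta$, i.e.\ $w\in\wt{\mathcal N}$. So $\wt{\mathcal N}$ contains a witness for every satisfiable existential, Tarski--Vaught yields $\wt{\mathcal N}\prec_{\mathfrak{L}}{\mathcal N}$, and for each $a\in\wt{\mathcal N}$ the (true in ${\mathcal N}$) formula $a\eps\gl V_\alpha$ passes to $\wt{\mathcal N}\models a\eps\gl V_\alpha$ by elementarity.

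The step I expect to be most delicate is the uniform passage from ``$f$ a class functional of ${\mathcal M}$'' to ``$f$ a set-function with bounded range,'' so that Lemma~\ref{fun-gl} genuinely applies to the ${\mathcal N}$-extension; this is precisely where replacement in ${\mathcal M}$ and the agreement-of-extensions principle of Theorem~\ref{idterm} must be combined with care, and it is reused identically for closure, for transitivity, and for the witnesses. Everything else is a routine repackaging of the downward L\"owenheim--Skolem theorem, made possible only because Lemma~\ref{choix_pi} guarantees that existential witnesses in ${\mathcal N}$ can always be taken of the form $f[\varpi,\vec y]$ with $\varpi\eps\gl\Pi$, hence $\veps$-bounded.
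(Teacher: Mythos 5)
Your proof is correct and follows essentially the same route as the paper's: the same definition of $\wt{\mathcal N}$, Lemma~\ref{eps-trans} for $\veps$-transitivity, and Lemma~\ref{choix_pi} combined with Lemma~\ref{fun-gl} (bounding the range of the witness functional inside some $V_\beta$ by replacement in ${\mathcal M}$) for the Tarski--Vaught step, which the paper phrases dually as an induction on $F$ for universally quantified formulas. The only superfluous detour is your passage through the restricted function $g'$ and Theorem~\ref{idterm} (whose hypotheses are purely equational and would not directly cover the guard $\vec x\in V_\alpha^{k}$): Lemma~\ref{fun-gl} applies directly to the class functional once one notes that it maps $V_\alpha^{k}$ (resp.\ $\Pi\fois V_\alpha^{k}$) into $V_\beta$, which is exactly what the paper does.
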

\begin{proof}
$\wt{\mathcal N}$ is made up of the elements $a$ of ${\mathcal N}$ such that $a\eps\gl V_\alpha$ for some ordinal $\alpha$
of ${\mathcal M}$ (note that it is not a class defined in ${\mathcal N}$). By Lemma~\ref{eps-trans}, each $\gl V_\alpha$ is
$\veps$-transitive and therefore $\wt{\mathcal N}$ is also.

Let $F(x,\vec{y})$ be a formula of $\mathfrak{L}$ and $\vec{b}=(b_1,\ldots,b_k$) be in $\wt{\mathcal N}$. 
We assume $\wt{\mathcal N}\models\pt x\,F(x,\vec{b})$ and we have to show that ${\mathcal N}\models\pt x\,F(x,\vec{b})$
which we do by induction on $F$. By Lemma~\ref{choix_pi}, it suffices to show that
${\mathcal N}\models\pt\varpi^{\gl\Pi}F(f[\varpi,\vec{b}],\vec{b})$.
Thus let $\pi\eps\gl\Pi$; by definition of $\wt{\mathcal N}$, there exists $X$ in ${\mathcal M}$ such that $\vec{b}\eps\gl X$.
Now, there exists $V_\alpha$ in ${\mathcal M}$ such that  $f:\Pi\fois X\to V_\alpha$ and therefore, in ${\mathcal N}$,
we have $f:\gl\Pi\fois\gl X\to\gl V_\alpha$. It follows that $f[\pi,\vec{b}]\eps\gl V_\alpha$ and therefore
$f[\pi,\vec{b}]$ is in $\wt{\mathcal N}$.

Thus $\wt{\mathcal N}\models F(f[\pi,\vec{b}],\vec{b})$, hence ${\mathcal N}\models F(f[\pi,\vec{b}],\vec{b})$
by the induction hypothesis.
\end{proof}
Replacing ${\mathcal N}$ by this elementary substructure, we shall suppose from now on:

\vspace{1ex}
\cercle{1}\hspace{2em}\emph{For all $a$ in ${\mathcal N}$, there is an ordinal $\alpha$ of ${\mathcal M}$ such that
${\mathcal N}\models a\eps\gl V_\alpha$.}

\bigskip\noindent
Theorem~\ref{aNprcl} below is not really useful in the following, but it gives a welcome information on the
(very complex) structure of the r.m.\ ${\mathcal N}\!.$

\begin{thm}\label{aNprcl}
$a{\mathcal N}$ is a proper class for all $a\eps\gl2,a\ne0$; i.e.:

$\III\,\force\pt x(\pt a^{\gl2}(\pt y(ay\eps x)\to a=0))$.
\end{thm}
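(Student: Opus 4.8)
The plan is to verify directly, from the definition of the falsity value, that $\III$ realizes the displayed formula. Write $F(a)$ for $\pt y(ay\eps x)\to a=0$. Since $\|\pt x\,G(x)\|=\bigcup_x\|G(x)\|$ and, as recorded in the proof of Lemma~\ref{fax-afx}, $\xi\force\pt a^{\gl2}F(a)$ amounts to $\xi\force F(0)$ together with $\xi\force F(1)$, it is enough to fix an arbitrary name $x$ and show $\III\force F(0)$ and $\III\force F(1)$. In each branch $F(a)$ is an implication, so its stacks have the shape $u\ps\pi$ with $u\force\pt y(ay\eps x)$ and $\pi\in\|a=0\|$; because $\III\star u\ps\pi\succ u\star\pi$, I would reduce everything to checking $u\star\pi\in\bbot$.

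The common device is that any $\pi\in\|a=0\|$ is a push $t\ps\varpi$, so it suffices to exhibit a name $y_0$ with $t\force(ay_0)\neps x$: this gives $t\ps\varpi\in\|(ay_0)\eps x\|\subseteq\|\pt y(ay\eps x)\|$, whence $u\star t\ps\varpi\in\bbot$. For $a=0$ one has $ay_0=\vide$ for every $y_0$, and since $0=0$ abbreviates $0\ne0\to\bot$ with $\|0\ne0\|=\|\bot\|=\Pi$, the head $t$ realizes $\bot$; as a realizer of $\bot$ realizes every formula, $t\force\vide\neps x$ and we are done, with no choice of $y_0$ needed. For $a=1$ one has $ay_0=y_0$, and $1=0$ abbreviates $1\ne0\to\bot$ with $\|1\ne0\|=\|\top\|=\vide$, so $\|1=0\|=\{t\ps\varpi\;;\;t\in\Lbd,\varpi\in\Pi\}$ and $t$ is completely arbitrary; here I must instead find a single name $y^*$ that $x$ does not mention, i.e.\ with $\|y^*\neps x\|=\vide$, for which $t\force y^*\neps x$ holds vacuously for every $t$.

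The only non-routine point — and the genuine content of the theorem — is the existence of this $y^*$. As $x$ is a name, it is a set of ${\mathcal M}$ whose elements are pairs $(y,\rho)$ with $\rho\in\Pi$; by replacement in ${\mathcal M}$ its domain $\{y\;;\;\ex\rho\,(y,\rho)\in x\}$ is a set, and since the universe of ${\mathcal M}$ is a proper class there is some $y^*\in{\mathcal M}$ outside it, so that $\|y^*\neps x\|=\{\rho\;;\;(y^*,\rho)\in x\}=\vide$. This is exactly the assertion that no set $x$ can contain the whole class $1{\mathcal N}={\mathcal N}$, i.e.\ that $a{\mathcal N}$ is a proper class for $a\ne0$; the remainder is pure pole-bookkeeping, namely $\III\star u\ps\pi\succ u\star\pi=u\star t\ps\varpi\in\bbot$ followed by closure of $\bbot$ under $\succ$, which yields $\III\star u\ps\pi\in\bbot$ and hence $\III\force F(0)$, $\III\force F(1)$ for every $x$, establishing the theorem. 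I expect the cardinality step producing $y^*$ to be the crux and everything around it to be immediate.
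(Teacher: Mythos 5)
Your proof is correct, and the $a=0$ branch coincides with the paper's. Where you diverge is in the witness used to kill the $a=1$ branch: the paper does not hunt for a name $y^*$ outside the domain of $x$; it instantiates the quantifier $\pt y(ay\eps x)$ at $y:=x$ itself and proves the formally stronger statement $\III\force\pt x\,\pt a^{\gl2}(ax\eps x\to a=0)$, which suffices because $\|\pt y(ay\eps x)\to a=0\|\subseteq\|ax\eps x\to a=0\|$. The crux there is that $\|x\neps x\|=\{\pi\in\Pi\;;\;(x,\pi)\in x\}=\vide$ by foundation in ${\mathcal M}$ (a set cannot contain a pair having itself as first component), so $\|x\eps x\|=\|\top\to\bot\|=\|1=0\|$ and $\III$ does the job. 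Your replacement-plus-properness argument producing a fresh $y^*$ with $\|y^*\neps x\|=\vide$ is equally valid and arguably closer to the intended reading ``$a{\mathcal N}$ is a proper class'': a name has set-sized domain, hence omits some name. What the paper's choice buys is a single uniform witness, no appeal to the size of the universe, and the sharper by-product that no nonzero $a$-part of a name is an $\veps$-element of that name.
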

\begin{proof}
Let us show, in fact, that $\,\III\force\pt x(\pt a^{\gl2}(ax\eps x\to a=0))$; in other words:

$\III\force0\eps x\to0=0$ and $\III\force x\eps x\to 1=0$.

We have $(0\eps x)\equiv(0\neps x\to\bot)$ and $(0=0)\equiv(\bot\to\bot)$ hence the first result.

Furthermore, we have $\|x\neps x\|=\{\pi\in\Pi\;;\;(x,\pi)\in x\}=\vide$; thus:

$\|x\eps x\|=\|x\neps x\to\bot\|=\|\top\to\bot\|$ and $\|1=0\|=\|\top\to\bot\|$.

Hence the second result.
\end{proof}

\section{Extensional generic extensions}\label{e_ge}
In this section we build some tools in order to manage generic extensions ${\mathcal N}_\in[G]$ of the
\emph{extensional} model ${\mathcal N}_\in$. We define a new r.a.\ and give, in this r.a., a new way to compute
the truth value of ZF-formulas in ${\mathcal N}_\in[G]$.

Let $\VV$ be fixed in ${\mathcal M}$. We have $\|x\neps\gl\VV\|=\|\lbr x\in\VV\rbr\ne1\|$ and therefore:

\centerline{$\III\,\force\pt x(x\eps\gl\VV\dbfl\lbr x\in\VV\rbr=1)$.}
In other words, the $\veps$-elements of $\gl\VV$ are exactly the elements of $\VV$ in the Boolean
model~${\mathcal M}_{\gl2}$. In fact, the formula $\pt x^{\gl\mbox{\scriptsize\bf V}}F(x)$ is the same as
$\pt x(\lbr x\in\VV\rbr=1\hto F(x))$.

Remember also the important (and obvious) equivalence:

\centerline{$\III\force\pt x\pt y(x=y\dbfl\lbr x=y\rbr=1)$}

which follows from $\|x\ne y\|=\|\lbr x=y\rbr\ne1\|$ and which identifies the r.m.\ ${\mathcal N}$ with
the Boolean model ${\mathcal M}_{\gl2}$ (Remark~\ref{mod_N}).

\begin{thm}\label{e_par_fin}
${\mathcal N}\models\pt a\pt n\indi\left(a\eps(\gl\VV)^n\to
\ex!b^{\gl(\mbox{{\scriptsize\bf V}}^n)}\pt i\indi(i<n\to a(i)=b[i])\right)$.

In other words, each finite sequence of $\;\gl\VV$ in ${\mathcal N}$ is represented by a unique finite sequence
of~$\VV$ in the boolean model ${\mathcal M}_{\gl2}$.
\end{thm}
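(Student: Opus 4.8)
The plan is to fix a standard integer via $\pt n\indi$ and argue by induction on $n$, using the induction principle of Theorem~\ref{entiers}; for each fixed $n$ the inner quantifier $\pt i\indi(i<n\to\cdots)$ collapses to a finite conjunction over $i=0,\ldots,n-1$, so the statement becomes a finite assertion about the first $n$ coordinates. Throughout I would exploit the identification of ${\mathcal N}$ with the Boolean model ${\mathcal M}_{\gl2}$ recalled at the start of Section~\ref{e_ge}, together with the equivalence $\III\force\pt x(x\eps\gl\VV\dbfl\lbr x\in\VV\rbr=1)$: this lets me read each value $a(i)\eps\gl\VV$ as an element of $\VV$ in ${\mathcal M}_{\gl2}$, so that it becomes a legitimate argument of functional symbols defined on $\VV$ in the ground model~${\mathcal M}$.

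For the existence half I would build $b$ by tupling. In ${\mathcal M}$ one has the concatenation functional $\VV^n\fois\VV\to\VV^{n+1}$ and, dually, the evaluation functional $(c,i)\mapsto c[i]$ on $\VV^n\fois\NN\to\VV$. Given $a\eps(\gl\VV)^{n+1}$, the induction hypothesis applied to the restriction $a\restriction n$ yields a unique $b'\eps\gl(\VV^n)$ with $a(i)=b'[i]$ for $i<n$; I then set $b$ to be the concatenation of $b'$ with $a(n)\eps\gl\VV$. By Lemma~\ref{fun-gl} the $\gl$-extension of concatenation sends $\gl(\VV^n)\fois\gl\VV$ into $\gl(\VV^{n+1})$, so indeed $b\eps\gl(\VV^{n+1})$. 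The coordinate identities $c[i]=x_i$ for the concatenated tuple hold in~${\mathcal M}$ between terms built from functional symbols, so Theorem~\ref{idterm} gives $\III\force\pt\vec x(\cdots)$; instantiating the universally quantified variables at $a(0),\ldots,a(n)$ yields $a(i)=b[i]$ for all $i\le n$, the required equality of Leibnitz-equal elements of~${\mathcal N}$.

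Uniqueness is the mirror image: an element of $\gl(\VV^{n+1})$ is recovered from its coordinates by the ${\mathcal M}$-identity $\mathrm{tup}(c[0],\ldots,c[n])=c$, which again transfers through Theorem~\ref{idterm}. Hence if $b,b'\eps\gl(\VV^{n+1})$ both satisfy $a(i)=b[i]$ and $a(i)=b'[i]$ for every $i\le n$, then $b$ and $b'$ have the same coordinates and are therefore equal in the sense $\lbr b=b'\rbr=1$, i.e.\ $b=b'$ in~${\mathcal N}$. The base case $n=0$ is immediate, the empty sequence being the unique element of both $(\gl\VV)^0$ and $\gl(\VV^0)$ with a vacuous coordinate condition.

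I expect the real difficulty to be neither the tupling nor the functional-identity transfers (which are routine once set up), but the passage between the single ${\mathcal N}$-internal object $a$ — a function whose graph lives in ${\mathcal N}$ on the integer $n$ — and the finite list $a(0),\ldots,a(n-1)$ of its values, each certified to lie $\eps\gl\VV$. Making this extraction work \emph{uniformly in $n$}, so as to genuinely realize the quantifier $\pt n\indi$ rather than merely verify the statement for each external integer, is where the numeral and recursion machinery of Theorem~\ref{entiers} and the collapse of $\pt i\indi(i<n\to\cdots)$ to a finite conjunction must be handled carefully; everything involving $\gl$ then reduces, through Lemmas~\ref{glX} and~\ref{fun-gl} and Theorem~\ref{idterm}, to transferring elementary facts about finite tuples from~${\mathcal M}$ to~${\mathcal M}_{\gl2}$.
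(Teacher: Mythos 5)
Your proposal is correct and follows essentially the same route as the paper: induction on $n$, with existence obtained by extending to ${\mathcal N}$ the ${\mathcal M}$-functional that appends an element to a finite sequence and applying it to the $b'$ given by the induction hypothesis for a restriction of $a$ to $n$. The only (inessential) difference is in the uniqueness half, where the paper argues directly from the extensionality of the Boolean model ${\mathcal M}_{\gl2}$ --- two elements of $\gl(\VV^n)$ with the same coordinates satisfy $\lbr b=b'\rbr=1$, hence $b=b'$ --- rather than pushing a tupling identity through Theorem~\ref{idterm}.
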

\begin{proof}
Unicity. Note first that, since there is no extensionality in ${\mathcal N}$, you may have two sequences
$a\ne a'\eps(\gl\VV)^n$ such that $a(i)=a'(i)$ for $i<n$. Now suppose $b,b'\eps\gl(\VV^n)$ be such
that $b[i]=b'[i]$ for $i<n$. Then, we have $\lbr b,b'\in\VV^n\rbr=1$ and $\lbr(\pt i<n)(b[i]=b'[i])\rbr=1$.
Since the boolean model ${\mathcal M}_{\gl2}$ satisfies extensionality, we get $\lbr b=b'\rbr=1$ that is $b=b'$.

\smallskip\noindent
Existence. Proof by induction on $n$. This is trivial if $n=0$: take $b=\vide$.

Now, let $a\eps(\gl\VV)^{n+1}$ and $a'$ be a restriction of $a$ to $n$. Let $b'\eps\gl(\VV^n)$ such that $a'(i)=b'[i]$
for $i<n$ (induction hypothesis).

In the ground model ${\mathcal M}$, we define the binary functional $+$ as follows:

if $u$ is a finite sequence $(u_0,\ldots,u_{n-1})$, then $u+v$ is the sequence $(u_0,\ldots,u_{n-1},v)$.

We extend it to ${\mathcal N}$ and we set $b=b'+a(n)$ i.e.\ $\lbr b=b'+a(n)\rbr=1$. Therefore
$\lbr b[i]=b'[i]\rbr=1$ for $i<n$ and $\lbr b[n]=a(n)\rbr=1$, i.e.\ $a(i)=b[i]$ for $i<n$ and $a(n)=b[n]$.
\end{proof}
Consider an arbitrary ordered set $(C,\le)$ in the model ${\mathcal N}$. By Theorem~\ref{V_alpha} and
property~\cercle{1} (Section~\ref{sect_rm}), we may suppose that $(C,\le)\eps\gl\VV$ with $\VV=V_\alpha$ in ${\mathcal M}$.

$\gl\VV$ is $\veps$-transitive, by Lemma~\ref{eps-trans}.

As a set of forcing conditions, $C$~is equivalent to the set $\CF$ of finite subsets $X$ of $\gl\VV$ such that $X\cap C$
has a lower bound in $C$, $\CF$~being ordered by inclusion.

Thus, we can define $\CF$ by the following formula of \ZFe:

\centerline{\emph{$\CF(u)\equiv u\eps(\gl\VV)^{<\omega}\land(\mbox{Im}(u)\cap C)$ has a lower bound in $C$}}

where Im$(u)\subseteq\gl\VV$ is the (finite) image of the finite sequence $u$.

We have ${\mathcal N}\models\CF(u)\to u\eps(\gl\VV)^{<\omega}$ and therefore
${\mathcal N}\models\CF(u)\to u\eps(\gl\VV^{<\omega})$ by Theorem~\ref{e_par_fin}.
Moreover, we have ${\mathcal N}\models\CF(\vide)$.

In ${\mathcal M}$, the function $(u,v)\mapsto uv$, from $(\VV^{<\omega})^2$  into $\VV^{<\omega}$,
which is the concatenation of sequences, is associative with~$\vide$ as neutral element, also denoted by $\1$ (monoïd).

This function  extends to ${\mathcal N}$ into an application of $\gl(\VV^{<\omega})^2$ into $\gl(\VV^{<\omega})$
with the same properties. Thus, we write $uvw$ for $u(vw),(uv)w$, etc.

The formula $\CF(uv)$ of \ZFe\ means that $u,v$ are two compatible finite sequences of elements of~$C$, i.e.\ the
union of their images has a lower bound in $C$.
Thus $\CF$ becomes a set of forcing conditions equivalent to $C$ by means of this compatibility relation.

This formula has the following properties:

\centerline{$\force\CF(puvq)\to\CF(pvuq),\force\CF(puvq)\to\CF(puq), \force\CF(puq)\to\CF(puuq)$.}

It will be convenient to have only one formula and to use simply the following consequence:

\smallskip
\cercle{2}\hspace{3.2em}\emph{There exists a proof-like term $\mathfrak{c}$ such that} \
$\mathfrak{c}\force\CF(pqrtuvw)\to\CF(ptruuv)$.

\smallskip
Consider now, in the ground model ${\mathcal M}$, a r.a.\ ${\mathcal A}_0$ which gives the r.m.\ ${\mathcal N}$.

We suppose to have an operation $(\pi,\tau)\mapsto\pi^\tau$ from $\Pi\fois\Lbd$ into $\Pi$ such that:

\centerline{$(\xi\ps\pi)^\tau=\xi\ps\pi^\tau$ for every $\xi,\tau\in\Lbd$ and $\pi\in\Pi$.}

and  two new combinators $\chi$~\emph{(read)} and $\chi'$ \emph{(write)} such that:

\centerline{$\chi\star\xi\ps\pi^\tau\succcc\xi\star\tau\ps\pi$ (i.e.\ $
\xi\star\tau\ps\pi\in\bbot\Fl\chi\star\xi\ps\pi^\tau\in\bbot$)}

\centerline{$\chi'\star\tau\ps\xi\ps\pi\succcc\xi\star\pi^\tau$ (i.e.\ $\xi\star\pi^\tau\in\bbot\Fl
\chi'\star\tau\ps\xi\ps\pi\in\bbot$).}

Moreover, we suppose that $\chi,\chi'$ may be used to form proof-like terms.

Intuitively, $\pi^\tau$ is obtained by putting the term $\tau$ \emph{at the end} of the stack $\pi$, in the
same way that $\tau\ps\pi$ is obtained by putting $\tau$ \emph{at the top} of $\pi$.

We define now, in the ground model ${\mathcal M}$, a new r.a.\ ${\mathcal A}_1$; its r.m.\ will be called \emph{the extension
of~${\mathcal N}$ by a  $\CF$-generic (or a $C$-generic)}.

We define the terms $\BBB^*,\CCC^*,\III^*,\KKK^*,\WWW^*,\Ccc^*$ and $\kk^*_\pi$ by the conditions:

\cercle{3}\hspace{2em}
\begin{tabular}{l}
$\BBB^*=\BBB$; $\III^*=\III$;\\
$\CCC^*\star\xi\ps\eta\ps\zeta\ps\pi^\tau\succ\xi\star\zeta\ps\eta\ps\pi^{\mathfrak{c}\tau}$;
i.e.\ $\CCC^*=\lbd x\lbd y\lbd z(\chi)\lbd t((\chi')(\mathfrak{c})t)xzy$;\\
$\KKK^*\star\xi\ps\eta\ps\pi^\tau\succ\xi\star\pi^{\mathfrak{c}\tau}$;
i.e.\ $\KKK^*=\lbd x\lbd y(\chi)\lbd t((\chi')(\mathfrak{c})t)x$;\\
$\WWW^*\star\xi\ps\eta\ps\pi^\tau\succ\xi\star\eta\ps\eta\ps\pi^{\mathfrak{c}\tau}$;
i.e.\ $\WWW^*=\lbd x\lbd y(\chi)\lbd t((\chi')(\mathfrak{c})t)xyy$;\\
$\kk^*_\pi\star\xi\ps\varpi^\tau\succ\xi\star\pi^{\mathfrak{c}\tau}$;
i.e.\ $\kk^*_\pi=\lbd x(\chi)\lbd t(\kk_\pi)((\chi')(\mathfrak{c})t)x$;\\
$\Ccc^*\star\xi\ps\pi^\tau\succ\xi\star\kk^*_\pi\ps\pi^{\mathfrak{c}\tau}$; i.e.\\
$\Ccc^*=\lbd x(\chi)\lbd t(\Ccc)\lbd k(((\chi')(\mathfrak{c})t)x)\lbd x'(\chi)\lbd t'(k)((\chi')(\mathfrak{c})t')x'$.
\end{tabular}

When checking below the axioms of r.a., the property needed for each combinator is:

\cercle{4}\hspace{2em}
\begin{tabular}{l}
for $\CCC^*$: $\mathfrak{c}\force\CF(prtv)\to\CF(ptrv)$; 
for $\KKK^*$: $\mathfrak{c}\force\CF(pqr)\to\CF(pr)$;\\
for $\WWW^*$: $\mathfrak{c}\force\CF(puv)\to\CF(puuv)$;
for $\Ccc^*$: $\mathfrak{c}\force\CF(pu)\to\CF(puu)$;\\
for $\kk^*_\pi$: $\mathfrak{c}\force\CF(rtw)\to\CF(tr)$.
\end{tabular}

We get them replacing by $\1$ some of the variables $p,q,r,t,u,v,w$ in the definition~\cercle{2}
of~$\mathfrak{c}$.

We define the r.a.\ ${\mathcal A}_1$ by setting $\LLbd=\Lbd\fois\VV^{<\omega}$; $\PPi=\Pi\fois\VV^{<\omega}$;
$\LLbd\star\PPi=(\Lbd\star\Pi)\fois\VV^{<\omega}$.

$(\xi,u)\ps(\pi,v)=(\xi\ps\pi,uv)$;

$(\xi,u)\star(\pi,v)=(\xi\star\pi,uv)$;

$(\xi,u)(\eta,v)=(\xi\eta,uv)$.

The pole $\bbbot$ of ${\mathcal A}_1$ is defined by:

\centerline{$(\xi\star\pi,u)\in\bbbot\Dbfl(\pt\tau\in\Lbd)(\tau\force\CF(u)\Fl\xi\star\pi^\tau\in\bbot)$.}

The combinators are:

$\BBb=(\BBB,\1),\CCb=(\CCC^*,\1),\IIb=(\III,\1),\KKb=(\KKK^*,\1),\WWb=(\WWW^*,\1),\ccb=(\Ccc^*,\1)$;

$\kkk_{(\pi,u)}=(\kk^*_\pi,u)$; $\mbox{\PL}_{{\mathcal A}_1}$ is
$\{(\theta,\1)\;;\;\theta\in\mbox{\PL}_{{\mathcal A}_0}\}$.

\begin{thm}
${\mathcal A}_1$ is a coherent r.a.
\end{thm}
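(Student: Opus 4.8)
The plan is to verify the axioms of a realizability algebra for ${\mathcal A}_1$ and then its coherence. The typing of the four operations is immediate from $\LLbd=\Lbd\fois\VV^{<\omega}$, $\PPi=\Pi\fois\VV^{<\omega}$, $\LLbd\star\PPi=(\Lbd\star\Pi)\fois\VV^{<\omega}$ together with the componentwise definitions of application, push, process and continuation. Likewise the six combinators lie in $\LLbd$, and $\mbox{\PL}_{{\mathcal A}_1}=\{(\theta,\1)\;;\;\theta\in\mbox{\PL}_{{\mathcal A}_0}\}$ contains them and is closed under application, since $(\theta,\1)(\theta',\1)=(\theta\theta',\1)$ and each starred combinator is a proof-like term of ${\mathcal A}_0$ (it is built from $\chi,\chi',\mathfrak{c}$ by $\lbd$-abstraction, and $\chi,\chi'$ may form proof-like terms while $\mathfrak{c}$ is one). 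Reading the execution preorder of ${\mathcal A}_1$ as the one generated by the eight combinatory rules with the combinators $\BBb,\CCb,\IIb,\KKb,\WWb,\ccb,\kkk$, the whole content then reduces to two points: (a) the pole $\bbbot$ is saturated, i.e.\ closed under anti-reduction; (b) ${\mathcal A}_1$ is coherent.

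For (a) it suffices, by reflexivity and transitivity, to treat each generating rule. Unfolding the pole, $(\xi\star\pi,w)\in\bbbot$ means $\xi\star\pi^\tau\in\bbot$ for every $\tau\force\CF(w)$, where one uses $(\eta\ps\pi)^\tau=\eta\ps\pi^\tau$ throughout. For $\BBb=(\BBB,\1)$ and $\IIb=(\III,\1)$ the two sides of the rule carry the same word of $\VV^{<\omega}$, so saturation reduces verbatim to the matching reduction of ${\mathcal A}_0$, with no role for $\mathfrak{c}$. For the five remaining combinators the combinatory rule permutes, erases or duplicates entries of the $\VV^{<\omega}$-component; I would handle $\CCb$ as the model case. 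The rule is
\[
\CCb\star(\xi,u)\ps(\eta,v)\ps(\zeta,w)\ps(\pi,s)\succ(\xi,u)\star(\zeta,w)\ps(\eta,v)\ps(\pi,s),
\]
whose two members unfold to $(\CCC^*\star\xi\ps\eta\ps\zeta\ps\pi,\,uvws)$ and $(\xi\star\zeta\ps\eta\ps\pi,\,uwvs)$. Assuming the right member lies in $\bbbot$, take any $\sigma\force\CF(uvws)$; the defining reduction \cercle{3} gives $\CCC^*\star\xi\ps\eta\ps\zeta\ps\pi^\sigma\succ\xi\star\zeta\ps\eta\ps\pi^{\mathfrak{c}\sigma}$, and by the clause for $\CCC^*$ in \cercle{4} we have $\mathfrak{c}\force\CF(uvws)\to\CF(uwvs)$, so $\mathfrak{c}\sigma\force\CF(uwvs)$ and the reduct $\xi\star\zeta\ps\eta\ps\pi^{\mathfrak{c}\sigma}$ lies in $\bbot$; hence the left member is in $\bbbot$. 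The cases of $\KKb,\WWb,\ccb$ and $\kkk_{(\pi,u)}$ are identical, each time reading off from \cercle{4} exactly the rewriting of the trailing word performed by $\mathfrak{c}$ that matches the permutation, erasure or duplication of the combinatory rule.

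For (b), let $(\theta,\1)\in\mbox{\PL}_{{\mathcal A}_1}$, i.e.\ $\theta\in\mbox{\PL}_{{\mathcal A}_0}$. Since ${\mathcal N}\models\CF(\vide)$, fix a proof-like term $\tau_0\force\CF(\1)$ and consider the proof-like term $(\chi')\tau_0\theta$ of ${\mathcal A}_0$. By coherence of ${\mathcal A}_0$ there is a stack $\pi$ with $(\chi')\tau_0\theta\star\pi\notin\bbot$; by the push rule this forces $\chi'\star\tau_0\ps\theta\ps\pi\notin\bbot$, whence $\theta\star\pi^{\tau_0}\notin\bbot$ by the defining property of $\chi'$. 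As $\tau_0\force\CF(\1)$, this says precisely $(\theta\star\pi,\1)\notin\bbbot$, i.e.\ $(\theta,\1)\star(\pi,\1)\notin\bbbot$, which is coherence.

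The main obstacle is bookkeeping rather than ideas: for every combinatory rule one must check that the permutation, erasure or duplication it performs on the $\VV^{<\omega}$-word is compensated exactly by the rewriting that $\mathfrak{c}$ applies to the trailing term of the stack. This matching is precisely what the consequence \cercle{2} of the properties of $\CF$, the definitions \cercle{3} of the starred combinators, and the resulting list \cercle{4} were arranged to supply; the only genuinely delicate point is to keep the order of concatenation in $\PPi$ consistent with the direction of anti-reduction when reading off membership in $\bbbot$.
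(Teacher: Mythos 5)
Your proof is correct and follows essentially the same route as the paper: coherence via a term combining $\chi'$ with some $\tau_0\force\CF(\1)$ and the coherence of ${\mathcal A}_0$, and saturation of $\bbbot$ checked rule by rule, with $\BBb,\IIb$ trivial and the remaining combinators handled exactly as your model case $\CCb$ by matching the rewriting of the trailing word performed by $\mathfrak{c}$ (via \cercle{3} and \cercle{4}) against the permutation, erasure or duplication of the stack. Your order of arguments $(\chi')\tau_0\theta$ in the coherence argument is in fact the one consistent with the stated rule $\chi'\star\tau\ps\xi\ps\pi\succcc\xi\star\pi^\tau$ (the paper writes $\chi'\theta\tau_0$, which appears to be a transposition).
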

\begin{proof}
Let us prove first that ${\mathcal A}_1$ is coherent: let $\theta\in\mbox{\PL}_{{\mathcal A}_0}$; by definition of the
formula $\CF$, there exists $\tau_0\in\mbox{\PL}_{{\mathcal A}_0}$ such that $\tau_0\force\CF(\1)$.

Since $\chi'\theta\tau_0\in\mbox{\PL}_{{\mathcal A}_0}$, there exists $\pi\in\Pi$ such that
$\chi'\theta\tau_0\star\pi\notin\bbot$, thus $\theta\star\pi^{\tau_0}\notin\bbot$, therefore
$(\theta,\1)\star(\pi,\1)\notin\bbbot$.

We check now that ${\mathcal A}_1$ is a r.a.:
\begin{itemize}
\item$(\xi,u)\star(\eta,v)\ps(\pi,w)\in\bbbot\Fl(\xi,u)(\eta,v)\star(\pi,w)\in\bbbot$.

By hypothesis, we have $(\xi\star\eta\ps\pi,uvw)\in\bbbot$, therefore:

$(\pt\tau\in\Lbd)(\tau\force\CF(uvw)\to\xi\star\eta\ps\pi^\tau\in\bbot)$ and therefore:

$(\pt\tau\in\Lbd)(\tau\force\CF(uvw)\to\xi\eta\star\pi^\tau\in\bbot)$ hence the result.
\item $(\xi,u)\star(\eta,v)(\zeta,w)\ps(\pi,z)\in\bbbot\Fl
(\BBB,\1)\star(\xi,u)\ps(\eta,v)\ps(\zeta,w)\ps(\pi,z)\in\bbbot$.

By hypothesis, we have $(\xi\star\eta\zeta\ps\pi,uvwz)\in\bbbot$ therefore:

$(\pt\tau\in\Lbd)(\tau\force\CF(uvwz)\to\xi\star\eta\zeta\ps\pi^\tau\in\bbot)$ thus:

$(\pt\tau\in\Lbd)(\tau\force\CF(uvwz)\to\BBB\star\xi\ps\eta\ps\zeta\ps\pi^\tau\in\bbot)$.
\item $(\xi,u)\star(\zeta,w)\ps(\eta,v)\ps(\pi,z)\in\bbbot\Fl
(\CCC^*,\1)\star(\xi,u)\ps(\eta,v)\ps(\zeta,w)\ps(\pi,z)\in\bbbot$.

By hypothesis, we have $(\xi\star\zeta\ps\eta\ps\pi,uwvz)\in\bbbot$, therefore:

$(\pt\tau\in\Lbd)(\mathfrak{c}\tau\force\CF(uwvz)\to\xi\star\zeta\ps\eta\ps\pi^{\mathfrak{c}\tau}\in\bbot)$
therefore, by definition~\cercle{3} of $\CCC^*$:

$(\pt\tau\in\Lbd)(\mathfrak{c}\tau\force\CF(uwvz)\to\CCC^*\star\xi\ps\eta\ps\zeta\ps\pi^\tau\in\bbot)$.

But, by the property~\cercle{4} of $\mathfrak{c}$, we have:

$\tau\force\CF(uvwz)\to\mathfrak{c}\tau\force\CF(uwvz)$, hence the result.
\item $(\xi,u)\star(\pi,v)\in\bbbot\Fl(\III,\1)\star(\xi,u)\ps(\pi,v)\in\bbbot$.

Immediate.
\item $(\xi,u)\star(\pi,w)\in\bbbot\Fl(\KKK^*,\1)\star(\xi,u)\ps(\eta,v)\ps(\pi,w)\in\bbbot$.

By hypothesis, we have $(\xi\star\pi,uw)\in\bbbot$, therefore:

$(\pt\tau\in\Lbd)(\mathfrak{c}\tau\force\CF(uw)\to\xi\star\pi^{\mathfrak{c}\tau}\in\bbot)$,
therefore by the definition~\cercle{3} of $\KKK^*$:

$(\pt\tau\in\Lbd)(\mathfrak{c}\tau\force\CF(uw)\to\KKK^*\star\xi\ps\eta\ps\pi^\tau\in\bbot)$.
But, by~\cercle{4}, we have:

$\tau\force\CF(uvw)\to\mathfrak{c}\tau\force\CF(uw)$, hence the result.
\item $(\xi,u)\star(\eta,v)\ps(\eta,v)\ps(\pi,w)\in\bbbot\Fl
(\WWW^*,\1)\star(\xi,u)\ps(\eta,v)\ps(\pi,w)\in\bbbot$.

By hypothesis, we have $(\xi\star\eta\ps\eta\ps\pi,uvvw)\in\bbbot$, therefore:

$(\pt\tau\in\Lbd)(\mathfrak{c}\tau\force\CF(uvvw)\to\xi\star\eta\ps\eta\ps\pi^{\mathfrak{c}\tau}\in\bbot)$,
thus, by the definition~\cercle{3} of $\WWW^*$:

$(\pt\tau\in\Lbd)(\mathfrak{c}\tau\force\CF(uvvw)\to\WWW^*\star\xi\ps\eta\ps\pi^\tau\in\bbot)$.
Now, by~\cercle{4}, we have:

$\tau\force\CF(uvw)\to\mathfrak{c}\tau\force\CF(uvvw)$, hence the result:
$\tau\force\CF(uvw)\to\WWW^*\star\xi\ps\eta\ps\pi^\tau\in\bbot$.
\item $(\xi,v)\star(\pi,u)\in\bbbot\Fl(\kk^*_\pi,u)\star(\xi,v)\ps(\varpi,w)\in\bbbot$.

By hypothesis, we have $(\xi\star\pi,vu)\in\bbbot$, therefore:\
$(\pt\tau\in\Lbd)(\mathfrak{c}\tau\force\CF(vu)\to\xi\star\pi^{\mathfrak{c}\tau}\in\bbot)$,
thus, by the definition~\cercle{3} of $\kk^*_\pi$:

$(\pt\tau\in\Lbd)(\mathfrak{c}\tau\force\CF(vu)\to\kk^*_\pi\star\xi\ps\varpi^\tau\in\bbot)$.
Now, by~\cercle{4}, we have:

$\tau\force\CF(uvw)\to\mathfrak{c}\tau\force\CF(vu)$, hence the result:
$\tau\force\CF(uvw)\to\kk^*_\pi\star\xi\ps\varpi^\tau\in\bbot$.
\item $(\xi,u)\star(\kk^*_\pi,v)\ps(\pi,v)\in\bbbot\Fl(\Ccc^*,\1)\star(\xi,u)\ps(\pi,v)\in\bbbot$.

By hypothesis, we have $(\xi\star\kk^*_\pi\ps\pi,uvv)\in\bbbot$, therefore:

$(\pt\tau\in\Lbd)(\mathfrak{c}\tau\force\CF(uvv)\to\xi\star\kk^*_\pi\ps\pi^{\mathfrak{c}\tau}\in\bbot)$,
thus, by the definition~\cercle{3} of $\Ccc^*$:

$(\pt\tau\in\Lbd)(\mathfrak{c}\tau\force\CF(uvv)\to\Ccc^*\star\xi\ps\pi^\tau\in\bbot)$.
But, by~\cercle{4}, we have:

$\tau\force\CF(uv)\to\mathfrak{c}\tau\force\CF(uvv)$, hence the result:
$\tau\force\CF(uv)\to\Ccc^*\star\xi\ps\pi^\tau\in\bbot$. \qedhere
\end{itemize}
\end{proof}

\subsection*{The \texorpdfstring{$\mathbf{C}$}{C}-forcing defined in
\texorpdfstring{${\mathcal N}\hspace{-0.83em}{\mathcal N}$}{N}}
Let $F(\vec{a})$  be a closed formula of the language of ZF with parameters in ${\mathcal M}$.

We define the formula $p\forcec F(\vec{a})$
(read ``$p$ {\em forces} $F(\vec{a})$'') \emph{as the formula of \ZFe\ which expresses the $C$-forcing
on ${\mathcal N}_\in$}. In this formula, the variable $p$ is restricted to $\VV^{<\omega}$.

We define a subset $\llp F(\vec{a})\rrp$ of $\PPi$ by
$\llp F(\vec{a})\rrp=\{(\pi,p)\;;\;\pi\in\|p\nforcec\neg F(\vec{a})\|\}$.

\begin{lem}\label{force_fforce}
For every formula $F(\vec{x})$ of ZF\/, there exist two proof-like terms $\pF_F,\pF'_F$ of ${\mathcal A}$ such that:

i)~~$\xi\force(p\forcec F(\vec{a}))\Fl(\pF_F\xi,p)\fforce\llp F(\vec{a})\rrp$

ii)~~$(\xi,p)\fforce\llp F(\vec{a})\rrp\Fl\pF'_F\xi\force(p\forcec F(\vec{a}))$

for every $\xi\in\Lbd$ and $p\in\VV^{<\omega}$.
\end{lem}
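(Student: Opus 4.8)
The plan is to unfold both membership assertions all the way down to the base pole $\bbot$ of ${\mathcal A}$, and then to reduce the stated equivalence to two standard theorems of forcing, realized by proof-like terms of ${\mathcal A}$ via Theorem~\ref{ZFe}. The starting observation is that $\|q\nforcec\neg F(\vec a)\|=\|(q\forcec\neg F(\vec a))\to\bot\|=\{\eta\ps\pi'\;;\;\eta\force(q\forcec\neg F(\vec a)),\,\pi'\in\Pi\}$, and that $(\eta\ps\pi')^\tau=\eta\ps(\pi')^\tau$ by the defining property of $\pi\mapsto\pi^\tau$. Unfolding the definition of $\llp\cdot\rrp$, of the product of ${\mathcal A}_1$, and of the pole $\bbbot$, I get that $(\zeta,p)\fforce\llp F(\vec a)\rrp$ holds exactly when, for every $q$, every $\eta\force(q\forcec\neg F(\vec a))$, every $\pi'\in\Pi$ and every $\tau\force\CF(pq)$, one has $\zeta\star\eta\ps(\pi')^\tau\in\bbot$. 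So the whole content of the lemma is a dictionary between the end-of-stack condition $\tau$, which is shuffled by $\chi$ and $\chi'$, and the premises of the forcing theorems, with $\xi$ in the role of a realizer of $p\forcec F(\vec a)$.

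For a fixed $F$ (so the terms may depend on $F$), the two facts I would use are theorems of ZF about the relation $\forcec$, hence realized by proof-like terms $\gamma,\gamma'$ of ${\mathcal A}$: \emph{coherence}, $\gamma\force\pt p\pt q(\CF(pq),(p\forcec F),(q\forcec\neg F)\to\bot)$, and the \emph{density/maximality principle}, $\gamma'\force\pt p((\pt q(\CF(pq),(q\forcec\neg F)\to\bot))\to(p\forcec F))$ (conditions ranging over $\VV^{<\omega}$). For direction~(i) I would take $\pF_F=\lbd x\lbd y(\chi)\lbd t(\gamma)txy$. Given $\xi\force(p\forcec F)$, together with $q$, $\eta\force(q\forcec\neg F)$, $\pi'\in\Pi$ and $\tau\force\CF(pq)$, the process $\pF_F\xi\star\eta\ps(\pi')^\tau$ consumes $\eta$ and then calls $\chi$, whose rule $\chi\star\xi\ps\pi^\tau\succcc\xi\star\tau\ps\pi$ brings $\tau$ from the end of the stack to the top and reduces the process to $\gamma\star\tau\ps\xi\ps\eta\ps\pi'$. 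This lies in $\bbot$ because $\gamma$ realizes coherence and $\tau,\xi,\eta$ realize its three premises ($\pi'\in\|\bot\|=\Pi$). Since $\succ$ and $\succcc$ both preserve membership in $\bbot$ backwards, $\pF_F\xi\star\eta\ps(\pi')^\tau\in\bbot$, which by the unfolding above is precisely $(\pF_F\xi,p)\fforce\llp F\rrp$.

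For direction~(ii) I would take $\pF'_F=\lbd x(\gamma')\lbd s\lbd y(\chi')s(xy)$ and assume $(\xi,p)\fforce\llp F\rrp$, i.e.\ $\xi\star\eta\ps(\pi')^\tau\in\bbot$ whenever $\eta\force(q\forcec\neg F)$, $\pi'\in\Pi$ and $\tau\force\CF(pq)$. It suffices to check that $\zeta:=\lbd s\lbd y(\chi')s(\xi y)$ realizes $\pt q(\CF(pq),(q\forcec\neg F)\to\bot)$, for then, for every $\rho\in\|p\forcec F\|$, the process $\pF'_F\xi\star\rho$ reduces to $\gamma'\star\zeta\ps\rho\in\bbot$. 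For the check, given $\sigma\force\CF(pq)$, $\eta\force(q\forcec\neg F)$ and $\pi'\in\Pi$, the process $\zeta\star\sigma\ps\eta\ps\pi'$ reduces to $\chi'\star\sigma\ps(\xi\eta)\ps\pi'$, and the rule $\chi'\star\tau\ps\xi\ps\pi\succcc\xi\star\pi^\tau$ moves $\sigma$ to the end of the stack, giving $\xi\star\eta\ps(\pi')^\sigma$, which is in $\bbot$ by the hypothesis (with $\tau:=\sigma$). Hence $\zeta$ has the required behaviour and (ii) follows.

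The main obstacle is not the $\lbd$-calculus bookkeeping but the identification and realization of the two forcing facts: one must verify that coherence and the density principle hold for the specific \ZFe-formula $\forcec$ chosen to express $C$-forcing (a weak forcing, so that classical logic is respected) and that they are provable in ZF, so that Theorem~\ref{ZFe} supplies $\gamma,\gamma'$. The second delicate point is keeping the compatibility datum $\CF(pq)$ — which in ${\mathcal A}_1$ is hidden at the bottom of the stack as the second component $pq$ and accessed only through $\tau\force\CF(pq)$ — in exact correspondence with the premise of those theorems; this is exactly what the read/write combinators $\chi$ and $\chi'$ accomplish. Once $\gamma$ and $\gamma'$ are in hand, the remaining work is the routine reductions sketched above.
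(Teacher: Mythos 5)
Your proof is correct and follows essentially the same route as the paper: both rest on the ZF-provable (hence realized) characterization $p\forcec F\dbfl\pt q(\CF(pq)\to q\nforcec\neg F(\vec a))$ — your ``coherence'' and ``density'' realizers $\gamma,\gamma'$ are just the two directions of this equivalence with the premises reordered — and both then use $\chi$ and $\chi'$ to shuttle the compatibility witness $\tau\force\CF(pq)$ between the end and the top of the stack. The only (immaterial) difference is that you explicitly decompose the stack in $\|q\nforcec\neg F(\vec a)\|$ as $\eta\ps\pi'$, whereas the paper keeps it abstract, so your proof-like terms pop one extra argument.
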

\begin{proof}
By a well known property of forcing \cite{TJST,kunen}, the formula $p\forcec F(\vec{x})$ is equivalent, in \ZFe,
to the formula $\pt q(\CF(pq)\to q\nforcec\neg F(\vec{x}))$.

It follows that there are two proof-like terms $\mathfrak{q}_F,\mathfrak{q}_F'$ such that:

(1)~~~~$\mathfrak{q}_F\force(p\forcec F(\vec{a}))\to\pt q(\CF(pq)\to q\nforcec\neg F(\vec{a}))$;

(2)~~~~$\mathfrak{q}'_F\force\pt q(\CF(pq)\to q\nforcec\neg F(\vec{a}))\to(p\forcec F(\vec{a}))$.

{
\renewcommand{\theenumi}{\roman{enumi}}%
    \begin{enumerate}
\item By applying (1) to the hypothesis, we obtain
$\mathfrak{q}_F\xi\force\pt q(\CF(pq)\to q\nforcec\neg F(\vec{a}))$ that is:

$\pt\pi\pt q\pt\tau(\tau\force\CF(pq),\pi\in\|q\nforcec\neg F(\vec{a})\|\to
\mathfrak{q}_F\xi\star\tau\ps\pi\in\bbot)$. It follows that:

$\pt\pi\pt q(\pi\in\|q\nforcec\neg F(\vec{a})\|\to
\pt\tau(\tau\force\CF(pq)\to(\chi)(\mathfrak{q}_F)\xi\star\pi^\tau\in\bbot))$.

Now $\pi\in\|q\nforcec\neg F(\vec{a})\|$ is the same as $(\pi,q)\in\llp F(\vec{a})\rrp$ and

$\pt\tau(\tau\force\CF(pq)\to(\chi)(\mathfrak{q}_F)\xi\star\pi^\tau\in\bbot)$ is the same as
$(\pF_F\xi,p)\star(\pi,q)\in\bbbot$

with $\pF_F=\lbd x(\chi)(\mathfrak{q}_F)x$. Thus we obtain $(\pF_F\xi,p)\fforce\llp F(\vec{a})\rrp$.

\item The hypothesis gives $\pt\pi\pt q((\pi,q)\in\llp F(\vec{a})\rrp\to(\xi\star\pi,pq)\in\bbbot)$ that is:

$\pt\pi\pt q\pt\tau(\pi\in\|q\nforcec\neg F(\vec{a})\|,\tau\force\CF(pq)\to\xi\star\pi^\tau\in\bbot)$ and therefore:

$\pt\pi\pt q\pt\tau(\tau\force\CF(pq),\pi\in\|q\nforcec\neg F(\vec{a})\|\to\chi'\xi\star\tau\ps\pi\in\bbot)$.

It follows that $\chi'\xi\force\pt q(\CF(pq)\to q\nforcec\neg F(\vec{a}))$ and therefore, by (2):

$(\mathfrak{q}_F')(\chi')\xi\force(p\forcec F(\vec{a}))$. We set $\pF'_F=\lbd x(\mathfrak{q}_F')(\chi')x$.\qedhere
\end{enumerate}}
\end{proof}

In the general theory of classical realizability, we define a truth value for the formulas of 
\ZFe\ and therefore, in particular, for the formulas of ZF\/. We will define here directly a new truth value
$\vv F(a_1,\ldots,a_n)\vv$ \emph{for a formula of ZF} with parameters in ${\mathcal M}$ for the r.a.~${\mathcal A}_1$.

To this aim, we first define the truth values $\vv a\notin b\vv,\vv a\subset b\vv$ of the
atomic formulas of~ZF; then that of $F(a_1,\ldots,a_n)$, by induction on the length of the formula.

Theorem~\ref{adeq} (adequacy theorem) remains valid (cf.~Remark~\ref{remadeq}):

$\vv a\notin b\vv=~\llp a\notin b\rrp$; \ $\vv a\subset b\vv=\llp a\subset b\rrp$;

$\vv F\to F'\vv=\{(\xi\ps\pi,pq)\;;\;(\xi,p)\fforce F,(\pi,q)\in\vv F'\vv\}$;

$\vv \pt x\,F(x,a_1,\ldots,a_n)\vv=\bigcup_a\vv F(a,a_1,\ldots,a_n)\vv$.

Of course $\fforce$ is defined by: $(\xi,p)\fforce F\Dbfl(\pt(\pi,q)\in\vv F\vv)((\xi,p)\star(\pi,q)\in\bbbot)$.

\begin{rem} Be careful, as we said before, these are not the truth values, in the r.a.\ ${\mathcal A}_1$, of $a\notin b$ and $a\subset b$ \emph{considered as formulas of \ZFe}. We seek here to define directly the $C$-generic model on ${\mathcal N}_\in$ without going through a model of \ZFe.
\end{rem}

Theorem~\ref{e_forcec-fforce} below may be considered as a generalization of the well known result about
iteration of forcing: the r.a.\ ${\mathcal A}_1$, which is a kind of product of ${\mathcal A}_0$ by $\CF$, gives
the same r.m.\ as the $\CF$-generic extension of ${\mathcal N}_\in$.

\begin{thm}\label{e_forcec-fforce}
For each closed formula $F$ of ZF with parameters in the model ${\mathcal N}$, there exist two proof-like terms
$\chi_{_F},\chi'_{_F}$, which only depend on the propositional structure of $F$, such that we have,
for every $\xi\in\Lbd$ and $p\in\VV^{<\omega}$:

$\xi\force(p\forcec F)$ \ $\Fl$ \ $(\chi_{_F}\xi,p)\fforce F$ \ and \ $(\xi,p)\fforce F$ \ $\Fl$ \ $\chi'_{_F}\xi\force(p\forcec F)$.
\end{thm}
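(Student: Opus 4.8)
The plan is to argue by induction on the propositional structure of $F$, establishing the two implications simultaneously. The base case, where $F$ is an atomic formula $a\notin b$ or $a\subset b$, is already settled: by the definitions $\vv a\notin b\vv=\llp a\notin b\rrp$ and $\vv a\subset b\vv=\llp a\subset b\rrp$, Lemma~\ref{force_fforce} furnishes exactly the two required implications, so I would set $\chi_{_F}=\pF_F$ and $\chi'_{_F}=\pF'_F$. For the inductive steps I would first record, from the usual theory of forcing, the clauses $p\forcec\pt x\,G(x)\dbfl\pt x(p\forcec G(x))$ and $p\forcec(G\to G')\dbfl\pt q((q\forcec G)\to(pq\forcec G'))$ as theorems of \ZFe; the second is the convenient reformulation of the standard implication clause (it is equivalent to ``for every extension $pr$ of $p$, $pr\forcec G$ implies $pr\forcec G'$'', using that $\CF$ depends only on $\mathrm{Im}$ and the monotonicity of forcing). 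By Theorem~\ref{adeq} each clause is realized by fixed proof-like terms, and since these clauses involve only the outermost connective, the combinators produced at each step depend only on the propositional structure of $F$, as required.

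The universal case is immediate. Since $\vv\pt x\,F(x)\vv=\bigcup_a\vv F(a)\vv$, a pair $(\eta,p)$ realizes $\pt x\,F(x)$ iff it realizes $F(a)$ for every $a$; likewise, by the clause above, $\xi\force(p\forcec\pt x\,F(x))$ holds iff $\xi\force(p\forcec F(a))$ for every $a$. As the propositional structure of $F(x)$ does not depend on $x$, the induction hypothesis lets me take $\chi_{_{\pt x F}}=\chi_{_F}$ and $\chi'_{_{\pt x F}}=\chi'_{_F}$, composed with the realizers of the clause.

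The implication case is the heart of the argument. For the first implication, assume $\xi\force(p\forcec(F\to F'))$; I must produce $\eta$ with $(\eta,p)\fforce(F\to F')$, i.e.\ $(\eta\star\xi_0\ps\pi_0,\,pp'q')\in\bbbot$ for all $(\xi_0,p')\fforce F$ and $(\pi_0,q')\in\vv F'\vv$. Unfolding $\bbbot$, this amounts to $\eta\star\xi_0\ps\pi_0^\tau\in\bbot$ whenever $\tau\force\CF(pp'q')$. I would build $\eta$ as follows: from $(\xi_0,p')\fforce F$ the induction hypothesis~(ii) for $F$ gives $\chi'_{_F}\xi_0\force(p'\forcec F)$; applying the clause-term $\mathfrak{j}$ (realizing $(p\forcec(F\to F'))\to\pt q((q\forcec F)\to(pq\forcec F'))$) to $\xi$, instantiating $q=p'$, and feeding in $\chi'_{_F}\xi_0$ yields $\zeta:=\mathfrak{j}\,\xi\,(\chi'_{_F}\xi_0)\force(pp'\forcec F')$; finally the induction hypothesis~(i) for $F'$ gives $(\chi_{_{F'}}\zeta,pp')\fforce F'$. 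Setting $\eta=\chi_{_{F\to F'}}\xi$ with $\chi_{_{F\to F'}}=\lbd x\lbd x_0\,\chi_{_{F'}}(\mathfrak{j}\,x\,(\chi'_{_F}x_0))$, weak head reduction gives $\eta\star\xi_0\ps\pi_0^\tau\succ\chi_{_{F'}}\zeta\star\pi_0^\tau$, because $(\xi_0\ps\pi_0)^\tau=\xi_0\ps\pi_0^\tau$ leaves the tail $\tau$ untouched while $\xi_0$ is peeled off. Since $(\chi_{_{F'}}\zeta,pp')\fforce F'$ applied to $(\pi_0,q')\in\vv F'\vv$ and $\tau\force\CF(pp'q')$ gives $\chi_{_{F'}}\zeta\star\pi_0^\tau\in\bbot$, backward closure of $\bbot$ yields $\eta\star\xi_0\ps\pi_0^\tau\in\bbot$. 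The converse implication is symmetric: from $(\xi,p)\fforce(F\to F')$ and a hypothetical $\zeta_0\force(q\forcec F)$ one obtains $(\chi_{_F}\zeta_0,q)\fforce F$ by~(i), then $(\xi(\chi_{_F}\zeta_0),pq)\fforce F'$, then $\chi'_{_{F'}}(\xi(\chi_{_F}\zeta_0))\force(pq\forcec F')$ by~(ii); this being uniform in $q$ realizes the clause-form, and composing with the other clause-term $\mathfrak{j}'$ gives $\chi'_{_{F\to F'}}\xi\force(p\forcec(F\to F'))$.

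The main obstacle I expect is precisely this bookkeeping of forcing conditions in the implication step: one must verify that the finite sequences concatenate as $pp'q'$ so that the condition $pp'$ at which $F'$ is forced matches the $\VV^{<\omega}$-component carried by $(\chi_{_{F'}}\zeta,pp')$, and that the tail realizer $\tau$ of $\CF(pp'q')$ — the device by which the pole $\bbbot$ of ${\mathcal A}_1$ reads off the forcing condition — is threaded untouched through the weak head reduction that consumes $\xi_0$. This is exactly where the identity $(\xi\ps\pi)^\tau=\xi\ps\pi^\tau$ and the concatenation form of the forcing clause are indispensable; everything else reduces to the base case of Lemma~\ref{force_fforce} and routine verification.
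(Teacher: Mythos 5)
Your proposal is correct and follows essentially the same route as the paper: induction on the formula, the atomic case delegated to Lemma~\ref{force_fforce}, the universal case immediate from the union-definition of the truth values, and the implication case handled by composing $\chi_{_{F'}}$, $\xi$ and $\chi'_{_{F}}$ and threading the tail realizer $\tau$ of $\CF(pp'q')$ through the weak head reduction exactly as in the paper (where $\chi_{F'\to F''}=\lbd x\lbd y(\chi_{F''})(x)(\chi'_{F'})y$). The only cosmetic difference is your insertion of explicit adequacy-realizers $\mathfrak{j},\mathfrak{j}'$ for the clause $p\forcec(G\to G')\dbfl\pt q(q\forcec G\to pq\forcec G')$, which the paper treats as a syntactic identity of formulas.
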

The \emph{propositional structure} of $F$ is the propositional formula built with the connective $\to$ and only
two atoms $O_\in,O_\subset$, which is obtained from $F$ by deleting all quantifiers and by identifying all atomic formulas $t\notin u,t\subset u$ respectively with $O_\in,O_\subset$.

For instance, the propositional structure of the formula:

$\pt X(\pt x(\pt y((x,y)\notin Y\to y\notin X)\to x\notin X)\to\pt x(x\notin X))$ is:

$((O_\in\to O_\in)\to O_\in)\to O_\in$.

\begin{proof} By recurrence on the length of $F$.
\begin{itemize}
\item If $F$ is atomic, we have $F\equiv a\notin b$ or $a\subset b$. Apply Lemma~\ref{force_fforce}
with $F(\vec{a})\equiv a\notin b$ or $F(\vec{a})\equiv a\subset b$.
\item ~If $F\equiv\pt x\,F'$, then $p\forcec F\equiv\pt x(p\forcec F')$. Therefore
$\xi\force p\forcec F\equiv\pt x(\xi\force(p\forcec F'))$.

Moreover, \ $(\xi,p)\fforce F\equiv\pt x((\xi,p)\fforce F')$.

The result is immediate, from the recurrence hypothesis.
\item ~If $F\equiv F'\to F''$, we have \ $p\forcec F\equiv\pt q(q\forcec F'\to pq\forcec F'')$ and therefore:

\bigskip
\cercle{5}\hspace{2em}
$\xi\force(p\forcec F)$ $\Fl$ $\pt\eta\pt q(\eta\force(q\forcec F')\to\xi\eta\force(pq\forcec F''))$.

\bigskip
Suppose that \ $\xi\force(p\forcec F)$ \ and set \ $\chi_F=\lbd x\lbd y(\chi_{F''})(x)(\chi'_{F'})y$.

We must show \ $(\chi_F\xi,p)\fforce F'\to F''$; thus, let \ $(\eta,q)\fforce F'$ and $(\pi,r)\in\vv F''\vv$.

We must show \ $(\chi_F\xi,p)\star(\eta,q)\ps(\pi,r)\in\bbbot$ \ that is \
$(\chi_F\xi\star\eta\ps\pi,pqr)\in\bbbot$.

Thus, let \ $\tau\force\mathfrak{C}(pqr)$; we must show \ $\chi_F\xi\star\eta\ps\pi^\tau\in\bbot$ \
or else \ $\chi_F\star\xi\ps\eta\ps\pi^\tau\in\bbot$.

From the recurrence hypothesis applied to $(\eta,q)\fforce F'$, we have \ $\chi'_{F'}\eta\force(q\forcec F')$.

From \cercle{5}, we have therefore \ $(\xi)(\chi'_{F'})\eta\force(pq\forcec F'')$.

Applying again the recurrence hypothesis, we get:

$((\chi_{F''})(\xi)(\chi'_{F'})\eta,pq)\fforce F''$. But since $(\pi,r)\in\vv F''\vv$, we have:

$((\chi_{F''})(\xi)(\chi'_{F'})\eta,pq)\star(\pi,r)\in\bbbot$, \ that is \
$((\chi_{F''})(\xi)(\chi'_{F'})\eta\star\pi,pqr)\in\bbbot$.

Since \ $\tau\force\mathfrak{C}(pqr)$, we have \ $(\chi_{F''})(\xi)(\chi'_{F'})\eta\star\pi^\tau\in\bbot$.

But, by definition of $\chi_F$, we have
$\chi_F\star\xi\ps\eta\ps\pi^\tau\succ(\chi_{F''})(\xi)(\chi'_{F'})\eta\star\pi^\tau$

which gives the desired result: \ $\chi_F\star\xi\ps\eta\ps\pi^\tau\in\bbot$.

Suppose now that \ $(\xi,p)\fforce F'\to F''$; we set \
$\chi'_F=\lbd x\lbd y(\chi'_{F''})(x)(\chi_{F'})y$.

We must show \ $\chi'_F\xi\force(p\forcec F'\to F'')$ \ that is \
$\pt q(\chi'_F\xi\force(q\forcec F'\to pq\forcec F''))$.

Thus, let \ $\eta\force q\forcec F'$ and $\pi\in\|pq\forcec F''\|$; we must show \
$\chi'_F\xi\star\eta\ps\pi\in\bbot$.

By the  recurrence hypothesis, we have \ $(\chi_{F'}\eta,q)\fforce F'$, therefore $(\xi,p)(\chi_{F'}\eta,q)\fforce F''$

or else, by definition of the algebra ${\mathcal A}_1$: \
$((\xi)(\chi_{F'})\eta,pq)\fforce F''$.

Applying again the recurrence hypothesis, we have \
$(\chi'_{F''})(\xi)(\chi_{F'})\eta\force(pq\forcec F'')$

and therefore $(\chi'_{F''})(\xi)(\chi_{F'})\eta\star\pi\in\bbot$. But we have, by definition of $\chi'_F$:

$\chi'_F\xi\star\eta\ps\pi\succ\chi'_F\star\xi\ps\eta\ps\pi\succ
(\chi'_{F''})(\xi)(\chi_{F'})\eta\star\pi$;

the desired result $\chi'_F\xi\star\eta\ps\pi\in\bbot$ follows. \qedhere
\end{itemize}
\end{proof}

\begin{thm}\label{theta_A}
For each axiom $A$ of ZF\/, there exists a proof like term $\Theta\!_A$ of the r.a.\ ${\mathcal A}_0$ 
such that $(\Theta\!_A,\1)\fforce A$.
\end{thm}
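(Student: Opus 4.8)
The plan is to combine the classical forcing theorem with the transfer result of Theorem~\ref{e_forcec-fforce}, so that essentially no new work is required beyond invoking those two facts.

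\emph{Step 1: force each axiom at the weakest condition.} I would first recall the fundamental theorem of forcing: it is provable in ZF that the weakest condition $\1$ forces every axiom of ZF. Since the relation $p\forcec F$ has been introduced as a genuine formula of \ZFe\ expressing $C$-forcing over ${\mathcal N}_\in$, the assertion $\1\forcec A$ is, for each axiom $A$ of ZF, a consequence of ZF and hence of \ZFe. By Theorem~\ref{ZFe} it is therefore realized by a proof-like term of ${\mathcal A}_0$: there exists $\theta_A\in\mbox{\PL}_{{\mathcal A}_0}$ with $\theta_A\force(\1\forcec A)$.

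\emph{Step 2: transfer from $\forcec$ to $\fforce$.} I would then feed this realizer into the bridge between $C$-forcing on ${\mathcal N}_\in$ and realizability in the product algebra ${\mathcal A}_1$. Applying the first implication of Theorem~\ref{e_forcec-fforce} to $F=A$ with $\xi=\theta_A$ and $p=\1$ yields at once $(\chi_{_A}\theta_A,\1)\fforce A$, where $\chi_{_A}$ is the proof-like term attached to the propositional structure of $A$. It then suffices to set $\Theta_A=\chi_{_A}\theta_A$; this is a proof-like term of ${\mathcal A}_0$ because $\chi_{_A}$ is proof-like (built from $\chi,\chi'$ and the terms supplied by Lemma~\ref{force_fforce}, all of which are admitted in proof-like terms), $\theta_A$ is proof-like, and $\mbox{\PL}_{{\mathcal A}_0}$ is closed under application. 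This gives $(\Theta_A,\1)\fforce A$, as required.

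The only genuine content lies in Step~1, namely the availability of $\force(\1\forcec A)$, which is precisely the statement that every generic extension satisfies ZF, specialised to the empty condition and read inside \ZFe; this is classical and is quoted rather than reproved. The rest is a mechanical application of Theorem~\ref{e_forcec-fforce}, and in particular no induction on the structure of $A$ is carried out here, since that induction has already been performed once and for all in the proof of Theorem~\ref{e_forcec-fforce}.
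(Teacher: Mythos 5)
Your proof is correct and follows essentially the same route as the paper: obtain a proof-like realizer of $\1\forcec A$ from the fact that the $C$-generic extension of ${\mathcal N}_\in$ satisfies ZF, then transfer it via Theorem~\ref{e_forcec-fforce} by setting $\Theta_A=\chi_A\theta_A$. The paper's own proof is exactly this two-step argument, stated more tersely.
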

\begin{proof}
Indeed, if we denote by ${\mathcal N}_\in[G]$ the $C$-generic model over ${\mathcal N}_\in$, with $G\subseteq C$ being
the generic set, we have ${\mathcal N}_\in[G]\models$ ZF\/. Therefore, ${\mathcal N}\models(\1\forcec A)$, which
means that there is a proof-like term~$\Theta'\!\!_A$ such that $\Theta'\!\!_A\force(\1\forcec A)$.
By Theorem~\ref{e_forcec-fforce}, we can take $\Theta\!_A=\chi_A\Theta'\!\!_A$.
\end{proof}

\section{The algebra \texorpdfstring{$\mathfrak{A}_0$}{A0}}\label{alg_A0}
We define a r.a.\ $\mathfrak{A}_0$ which gives a very interesting r.m.\ ${\mathcal N}$.
In the following, we use only this r.a.\ and a generic extension $\mathfrak{A}_1$.

The terms of $\mathfrak{A}_0$ are finite sequences of symbols:

\centerline{$),(,\BBB,\CCC,\III,\KKK,\WWW,\Ccc,\aaa,\pp,\gamma,\kappa,\ee,\chi,\chi',h_0,h_1,\ldots,h_i,\ldots$}

$\Lbd$ is the least set which contains these symbols (except parentheses) and is such that:

$t,u\in\Lbd\Fl(t)u\in\Lbd$.

A stack is a finite sequence of terms, separated by the symbol $\ps\,$ and terminated by the symbol~$\pi_0$
\emph{(the empty stack)}.

$\Pi$ is therefore the least set such that $\pi_0\in\Pi$ and $t\in\Lbd,\pi\in\Pi\Fl t\ps\pi\in\Pi$.

$\kk_\pi$ is defined by recurrence: $\kk_{\pi_0}=\aaa$; $\kk_{t.\pi}=\lbd x(\kk_\pi)(x)t=((\CCC)(\BBB)\kk_\pi)t$.

The application $(\tau,\pi)\mapsto\pi^\tau$ from $\Lbd\fois\Pi$ into $\Pi$ consists in replacing $\pi_0$
by $\tau\ps\pi_0$.

It is therefore recursively defined by: $\pi_0^\tau=\tau\ps\pi_0$;
$(t\ps\pi)^\tau=t\ps\pi^\tau$.

$\Lbd\star\Pi$ is $\Lbd\fois\Pi$.

$\bbot$ is the least subset of $\Lbd\star\Pi$ satisfying the conditions:

\begin{enumerate}
\item $\pp\star\pi\in\bbot$ for every stack $\pi\in\Pi$ \emph{(stop)};
\item $\xi\star\pi_0\in\bbot\Fl\aaa\star\xi\ps\pi\in\bbot$ for every $\xi\in\Lbd,\pi\in\Pi$ \emph{(abort)};
\item If at least two out of $\xi\star\pi,\eta\star\pi,\zeta\star\pi$ are in $\bbot$, then
$\gamma\star\xi\ps\eta\ps\zeta\ps\pi\in\bbot$ \emph{(fork)};
\item $\xi\star\pi\in\bbot\Fl\ee\star h_i\ps h_i\ps\eta\ps\xi\ps\pi\in\bbot$ for every $\xi,\eta\in\Lbd$
and $i\in\NN$ \emph{(elimination of constants)};
\item $\xi\star\pi\in\bbot\Fl\ee\star h_i\ps h_j\ps\xi\ps\eta\ps\pi\in\bbot$ for every $\xi,\eta\in\Lbd$ and
$i,j\in\NN,i\ne j$ \emph{(elimination of constants)};
\item $\xi\star h_n\ps\pi\in\bbot\Fl\kappa\star\xi\ps\pi\in\bbot$ if $h_n$ does not appear in $\xi,\pi$
\emph{(introduction of constants)};
\end{enumerate}
and also the general axiomatic conditions for $\BBB,\CCC,\III,\KKK,\WWW,\Ccc,\chi,\chi'$ and the application:
\begin{enumerate}
    \setcounter{enumi}{6}
\item $\xi\star\eta\ps\pi\in\bbot\Fl(\xi)\eta\star\pi\in\bbot$ \emph{(push)};
\item $\xi\star\pi\in\bbot\Fl\III\star\xi\ps\pi\in\bbot$ \emph{(no operation)};
\item $\xi\star\pi\in\bbot\Fl\KKK\star\xi\ps\eta\ps\pi\in\bbot$ \emph{(delete)};
\item $\xi\star\eta\ps\eta\ps\pi\in\bbot\Fl\WWW\star\xi\ps\eta\ps\pi\in\bbot$ \emph{(copy)};
\item $\xi\star\zeta\ps\eta\ps\pi\in\bbot\Fl\CCC\star\xi\ps\eta\ps\zeta\ps\pi\in\bbot$
\emph{(switch)};
\item $\xi\star(\eta)\zeta\ps\pi\in\bbot\Fl\BBB\star\xi\ps\eta\ps\zeta\ps\pi\in\bbot$ \emph{(apply)};
\item $\xi\star\kk_\pi\ps\pi\in\bbot\Fl\Ccc\star\xi\ps\pi\in\bbot$ \emph{(save the stack)};
\item $\xi\star\tau\ps\pi\in\bbot\Fl\chi\star\xi\ps\pi^\tau\in\bbot$
\emph{(read the end of the stack)};
\item $\xi\star\pi^\tau\in\bbot\Fl\chi'\star\tau\ps\xi\ps\pi\in\bbot$ \emph{(write at the end of the stack)}.
\end{enumerate}

The property

$\xi\star\pi\in\bbot\Fl\kk_\pi\star\xi\ps\varpi\in\bbot$ \emph{(restore the stack)}

now follows easily from the definition of $\kk_\pi$.

A term is defined to be \emph{proof-like} if it contains neither $\aaa,\pp$ nor any $h_i$.

If $\xi\in$~\PL\ (the set of proof-like terms), then the process $\xi\star\pi_0$ does not contain the symbol~$\pp$;
thus $\xi\star\pi_0\notin\bbot$. It follows that \emph{the algebra $\mathfrak{A}_0$ is coherent.}

\begin{thm}
The Boolean algebra $\gl2$ has at most 4 elements.
\end{thm}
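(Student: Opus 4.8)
The plan is to prove the bound by \emph{realizing} the first-order sentence that expresses ``$\gl2$ has at most $4$ elements'', namely
$$\pt x_0\cdots\pt x_4\Big(\textstyle\bigwedge_{i\le 4}x_i\eps\gl2\to\bigvee_{0\le i<j\le 4}x_i=x_j\Big),$$
and then to invoke Theorems~\ref{adeq} and~\ref{ZFe}: once a proof-like term realizes this closed formula, it holds in the realizability model ${\mathcal N}$, hence in the Boolean model ${\mathcal M}_{\gl2}={\mathcal N}$ as well. The first thing to stress is \emph{why this is not trivial}. An easy computation shows that, for a \emph{parameter} $a$ of ${\mathcal M}$, the membership $a\eps\gl2$ is realized only when $a\in\{0,1\}$; but a single proof-like term cannot exploit this ground-model dichotomy, since it has no way to inspect the identities of the $x_i$ and can only \emph{run} the given realizers $\xi_i\force x_i\eps\gl2$. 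Moreover the equality in the conclusion is \emph{Leibniz} equality $=$, strictly finer than the extensional equality of ${\mathcal M}_{\gl2}$. So the whole content lies in what a term can \emph{compute} from a realizer of $x\eps\gl2$.

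Accordingly, the core step I would isolate is a ``value-reading'' lemma: from any $\xi\force x\eps\gl2$ one computes, by a fixed proof-like term, a canonical value $v(x)$ lying in a four-element set, in such a way that $v(x)=v(y)$ entails $\force x=y$ (via the falsity value of a true equality given by Lemma~\ref{eq}). Concretely I expect this term to interrogate $\xi$ on test stacks manufactured from the fresh constants $h_i$ and the continuation operator $\Ccc$: the instruction $\kappa$ introduces a fresh $h_n$, the instruction $\ee$ compares two such constants and branches on whether they are equal, and the fork $\gamma$ runs the two relevant continuations ``in parallel'', keeping the majority outcome; the instructions $\aaa$ (abort) and $\pp$ (stop) pin down the two extreme answers. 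The resulting value records \emph{one bit per evaluation thread}, and there turn out to be exactly two independent threads, so $v(x)$ ranges over the four pairs, i.e.\ over the four-element algebra $\{0,1,a_0,a_1\}$ with $0=(0,0)$, $1=(1,1)$, $a_0=(1,0)$, $a_1=(0,1)$.

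Granting the lemma, the theorem follows by a pigeonhole argument internal to the term: interrogate $x_0,\dots,x_4$, obtain five values $v(x_0),\dots,v(x_4)$ in a four-element set, locate (computationally, using $\ee$ and $\gamma$) a pair $i<j$ with $v(x_i)=v(x_j)$, and return the realizer of $x_i=x_j$ supplied by Lemma~\ref{eq}. This packaging is routine once the reading is in place.

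The main obstacle is precisely the value-reading lemma, and within it the claim that there are \emph{only two} threads — equivalently, that the characteristic algebra has at most two atoms. Establishing correctness of the interrogation term against the pole $\bbot$, and ruling out any third independent outcome, is where the fork instruction $\gamma$ together with the constant-management instructions $\ee,\kappa$ must be used in an essential way: one has to show that the least pole closed under the majority rule~(3) and the constant rules~(4)--(6) cannot partition $\1$ into three disjoint nonzero pieces, for otherwise one could assemble a proof-like term contradicting the coherence of $\mathfrak{A}_0$. I would attack this by analysing how the elimination rules for the $h_i$ force any interrogation to collapse into one of two stable regimes, and by using the majority property of $\gamma$ to reconcile the two disjuncts of $x=0\ou x=1$ into a single bit within each regime. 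This combinatorial analysis of $\bbot$ is the real work; the reduction above is comparatively mechanical.
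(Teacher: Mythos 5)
There is a genuine gap here, and the approach is off target in a way that matters. Your central ``value-reading lemma'' presupposes that a term can extract, from a realizer of $x\eps\gl2$, a tag $v(x)$ ranging over four values. But in the semantics of this paper the quantifier in $\pt x(x\eps\gl2\to\cdots)$ is evaluated at \emph{ground-model parameters} $x$, and $\|x\neps\gl2\|$ is $\Pi$ for $x\in\{0,1\}$ and $\vide$ otherwise; so the only parameters that realizably belong to $\gl2$ are $0$ and $1$. The extra elements $a_0,a_1$ of $\gl2$ are individuals of ${\mathcal N}$ with no name in ${\mathcal M}$: they exist because the sentence ``$\gl2$ has exactly two elements'' fails to be realized, not because there are further parameters to interrogate. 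Hence there is nothing four-valued to ``read'' from a realizer, and the ground-model dichotomy you dismiss as unusable is exactly what Lemma~\ref{glX} hands you for free: $\xi\force\pt x^{\gl2}F(x)$ means nothing more than $\xi\force F(0)$ and $\xi\force F(1)$. The real task is therefore not a combinatorial analysis of $\bbot$ against interrogation threads, but the choice of a first-order property that bounds the size of a Boolean algebra by $4$ while having only a handful of propositional instances at $x_i\in\{0,1\}$, all realizable by one term.

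The paper realizes $\pt x^{\gl2}\pt y^{\gl2}(xy\ne0,\,y\ne1,\,xy\ne y\to\bot)$, i.e.\ every element $\ne1$ is $0$ or an atom, which forces at most two atoms and hence at most four elements. Its four instances reduce to the three propositional formulas $(\bot,\bot,\top\to\bot)$, $(\bot,\top,\bot\to\bot)$, $(\top,\bot,\bot\to\bot)$, and the single term $\gamma$ realizes all three precisely because rule~3 of the definition of $\bbot$ puts $\gamma\star\xi\ps\eta\ps\zeta\ps\pi$ in $\bbot$ whenever two of $\xi\star\pi,\eta\star\pi,\zeta\star\pi$ are --- and in each instance two of the three arguments realize $\bot$. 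That is the entire proof; $\kappa$, $\ee$ and the constants $h_i$ play no role here (they are used for the countability of $\gl\NN$, not for the bound on $\gl2$). Your five-variable pigeonhole sentence, by contrast, has $32$ instances in which \emph{different} disjuncts of the ten-fold disjunction are the true ones, so a single realizer would need a considerably more delicate majority circuit built from $\gamma$; you do not construct it, and the ``value-reading'' route you propose in its place cannot get off the ground for the reason above.
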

\begin{proof}
Let us show that $\gamma\force\pt x^{\gl2}\pt y^{\gl2}(xy\ne0,y\ne1,xy\ne y\to\bot)$.
This means exactly that $\gamma$ realizes the three formulas
$(\bot,\bot,\top\to\bot),\,(\bot,\top,\bot\to\bot)$ and $(\top,\bot,\bot\to\bot)$, which follows
immediately from rule~3 of the definition of $\bbot$.
\end{proof}
If $\gl2$ is trivial, everything in the following is also (cf.~Remark~\ref{rem_theta_AC}).
Therefore, we now assume that $\gl2$ has 4 elements.

\smallskip\noindent
Let $a_0,a_1$ be the atoms of $\gl2$. Then ${\mathcal M}_{a_0}=a_0{\mathcal N}$ and
${\mathcal M}_{a_1}=a_1{\mathcal N}$ are classes in the r.m.~${\mathcal N}$ respectively defined by the formulas $x=a_0x$
and $x=a_1x$.

We define the binary functional $\sqcup$ in ${\mathcal N}$ as the extension of the functional $(x,y)\mapsto x\cup y$
on~${\mathcal M}$. We do not use the symbol $\cup$, because it already denotes the union. For instance, we have
$\gl\{0\}\sqcup\gl\{1\}=\gl2$ but $\gl\{0\}\cup\gl\{1\}=\{0,1\}$.

The identity $x=a_0x\sqcup a_1x$ gives a bijection from ${\mathcal N}$ onto ${\mathcal M}_{a_0}\fois{\mathcal M}_{a_1}$.

We have ${\mathcal M}\prec{\mathcal M}_{a_0},{\mathcal M}_{a_1}$. Let us show that one of them, say ${\mathcal M}_{a_0}$, is well founded
in ${\mathcal N}$ and therefore:

\centerline{\em ${\mathcal M}_{a_0}={\mathcal M}_{\mathcal D}$ and the class of ordinals {\rm On} is defined
in ${\mathcal N}$.}

\begin{lem}\label{wellf}
The relation $\lbr x\in y\rbr=1$ is well founded.
\end{lem}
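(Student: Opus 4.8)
The plan is to prove well foundedness of $\lbr x\in y\rbr=1$ by transporting the genuine well foundedness of the rank function of the ground model ${\mathcal M}$ into the realizability model, using property~\cercle{1} as the essential bridge. Recall that $\lbr x\in y\rbr=1$ is precisely the membership relation of the Boolean model ${\mathcal M}_{\gl2}={\mathcal N}$ at the value $1$, and that every ZF-theorem $G$ satisfies $\force\lbr G\rbr=1$, since ${\mathcal M}_{\gl2}$ is an elementary extension of ${\mathcal M}$ and the truth value of each closed ZF-formula is the same in both. I shall use freely that this is stable under (Boolean) modus ponens: from $\force\lbr G\to G'\rbr=1$ and $\force\lbr G\rbr=1$ one gets $\force\lbr G'\rbr=1$. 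I will establish well foundedness in the ``every non-empty set has a minimal element'' form, so as to avoid any appeal to choice.

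First I would reduce an arbitrary instance to a bounded one. Given a non-empty set $S$, property~\cercle{1} furnishes an ordinal $\alpha$ of ${\mathcal M}$ with ${\mathcal N}\models S\eps\gl V_\alpha$. Since $V_\alpha$ is transitive in ${\mathcal M}$, Lemma~\ref{eps-trans} gives that $\gl V_\alpha$ is $\veps$-transitive, and the ZF-theorem ``$y\in V_\alpha\land x\in y\to x\in V_\alpha$'', transferred to ${\mathcal M}_{\gl2}$ with value $1$, shows that every $\lbr\cdot\in\cdot\rbr=1$-predecessor of a member of $S$ again satisfies $\lbr\cdot\in V_\alpha\rbr=1$. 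Thus the entire relevant part of the relation sits below the single class $\gl V_\alpha$, where the ground-model rank is available as a bounded, well ordered measure.

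On this region I would use the rank functional $\mathrm{rk}$ of ${\mathcal M}$, extended to ${\mathcal N}$. The ZF-theorem $\pt x\pt y(x\in y\to\mathrm{rk}(x)<\mathrm{rk}(y))$ transfers with value $1$, so $\lbr x\in y\rbr=1$ forces $\lbr\mathrm{rk}(x)<\mathrm{rk}(y)\rbr=1$, while on $\gl V_\alpha$ the value $\lbr\mathrm{rk}(a)\rbr$ is an ordinal strictly below $\alpha$. Choosing in $S$ an element $x^*$ whose rank $\lbr\mathrm{rk}(x^*)\rbr$ is minimal—legitimate because ${\mathcal M}_{\gl2}\models$ ZF and so its ordinals are well ordered—one sees that no $y\in S$ can satisfy $\lbr y\in x^*\rbr=1$, for otherwise $\lbr\mathrm{rk}(y)<\mathrm{rk}(x^*)\rbr=1$ would contradict minimality. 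Hence $x^*$ is a minimal element, and $\lbr x\in y\rbr=1$ is well founded.

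The step I expect to require the most care is the interplay between the two membership relations carried by ${\mathcal N}$: the set $S$ and the quantifier ``$y\in S$'' refer to the model ${\mathcal N}_\in$ of ZF, whereas the relation under study and the rank $\lbr\mathrm{rk}\rbr$ refer to the Boolean model ${\mathcal M}_{\gl2}$, and these do not a priori agree. It is exactly property~\cercle{1} that reconciles them, by confining $S$ below a standard $\gl V_\alpha$ and thereby bounding the rank of each member of $S$ by the ordinal $\alpha$ of ${\mathcal M}$; without such a bound one could not exclude an infinite $\lbr x\in y\rbr=1$-descent running through ordinals of ${\mathcal M}_{\gl2}$ that lie below a fixed standard ordinal, so the real content is that on $\gl V_\alpha$ the ground-model rank is genuinely well ordered and strictly decreasing along $\lbr x\in y\rbr=1$. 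Finally, through the decomposition ${\mathcal N}=a_0{\mathcal N}\fois a_1{\mathcal N}$, well foundedness of $\lbr x\in y\rbr=1$ yields that at least one of ${\mathcal M}_{a_0},{\mathcal M}_{a_1}$ is well founded, which is the use the paper makes of the lemma.
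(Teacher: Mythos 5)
Your argument does not go through, and the obstacle is precisely the step you flag as delicate. To pick $x^*$ in $S$ of minimal rank you need the relation $\lbr\beta\in\beta'\rbr=1$, restricted to the ordinals of ${\mathcal M}_{\gl2}$ below $\alpha$, to be well founded \emph{as seen from ${\mathcal N}$}. The fact that ${\mathcal M}_{\gl2}\models$ ``the ordinals are well ordered'' only says that this statement has Boolean value $1$, i.e.\ holds internally to ${\mathcal M}_{\gl2}$; it does not give external well-foundedness, which is exactly the kind of assertion the lemma is about. The bound $\lbr\mbox{rk}(x)<\alpha\rbr=1$ does not rescue this: below a fixed standard $\alpha$ the Boolean model still contains non-standard ordinals, and the factor relation $\lbr a_1\beta\in a_1\beta'\rbr=a_1$ on $a_1{\mathcal N}$ really is ill founded (Remark~\ref{notwf}); only the $a_0$-component saves the product relation, and that is the content of the lemma. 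So the minimal-rank selection is circular. There is also a structural problem: in this framework ``the relation is well founded'' means that the corresponding foundation \emph{scheme} is realized by a proof-like term. A purely semantic argument quantifying over arbitrary elements of ${\mathcal N}$ cannot deliver that: it produces no realizer, and not every element of ${\mathcal N}$ is the interpretation of a name, so ``the ground-model rank of a member of $S$'' need not be a genuine ordinal of ${\mathcal M}$.

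The sound kernel of your idea --- that $\lbr x\in y\rbr=1$ forces a genuine drop of ground-model rank --- is exactly what the paper exploits, but at the level of names and of the induction scheme rather than of minimal elements of sets. For names $x,y$ in ${\mathcal M}$ one has $\lbr x\in y\rbr=1$ iff ${\mathcal M}\models x\in y$, so the falsity value $\|\lbr x\in y_0\rbr=1\hto F(x)\|$ is non-trivial only for actual elements $x$ of $y_0$ in ${\mathcal M}$, which have strictly smaller rank. An \emph{external} induction on $\mbox{rk}(y_0)$ in ${\mathcal M}$ then shows directly that a fixed-point combinator $\Y$ realizes $\pt y(\pt x(\lbr x\in y\rbr=1\hto F(x))\to F(y))\to\pt y\,F(y)$ for every formula $F$ of \ZFe. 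No appeal to property~\cercle{1}, to $\gl V_\alpha$, or to the ordering of the ordinals of ${\mathcal M}_{\gl2}$ is needed; if you tried to repair your route you would have to realize this foundation scheme first anyway, at which point the detour through minimal elements of sets becomes superfluous.
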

\begin{proof}
We show $\Y\force\pt y(\pt x(\lbr x\in y\rbr=1\hto F(x))\to F(y))\to\pt y\,F(y)$ for any formula $F$ of \ZFe.
Let $\xi\force\pt y(\pt x(\lbr x\in y\rbr=1\hto F(x))\to F(y))$. We show, by induction on
$\mbox{rk}(y_0)$ that $\Y\xi\force F(y_0)$. Thus we have $(\pt x\in y_0)(\Y\xi\force F(x))$ i.e.
$\Y\xi\force\pt x(\lbr x\in y_0\rbr=1\hto F(x))$ and therefore $(\xi)(\Y)\xi\force F(y_0)$ by hypothesis on $\xi$.
Hence the result since $\Y\xi\succ(\xi)(\Y)\xi$.
\end{proof}\noindent
Now, the relation $\lbr x\in y\rbr=1$ is the product of the relations $\lbr a_0x\in a_0y\rbr=a_0$ and
$\lbr a_1x\in a_1y\rbr=a_1$ respectively defined on $a_0{\mathcal N}$ and $a_1{\mathcal N}$. Since their product is
well founded, one of them must be.

\begin{rem} If the ground model ${\mathcal M}$ satisfies V = L, then ${\mathcal M}_{a_0}={\mathcal M_{\mathcal D}}$
is isomorphic to L$^{{\mathcal N}_\in}$, the class of constructible sets of ${\mathcal N}_\in$.
\end{rem}

\subsection*{Execution of processes}
If a given process $\xi\star\pi$ is in $\bbot$, it is obtained by applying precisely one of the rules of definition of $\bbot$:
if $\xi$ is an application, it is rule~7, else $\xi$ is an instruction and there is one and only one corresponding rule.

In this way, we get a finite tree, which is the proof that $\xi\star\pi\in\bbot$; it is linear, except in the case of rule~3
(instruction $\gamma$) where there is a triple branch. This tree is called \emph{the execution of the process $\xi\star\pi$}.

We can, of course, build this tree for any process; it may then be infinite.

\subsection*{Computing with the instruction \texorpdfstring{$\gamma\hspace{-0.47em}\gamma$}{gamma}}
Let us consider, in the ground model ${\mathcal M}$, some functions $f_i:\NN^{k_i}\to\NN$, satisfying a set of
axioms of the form:

$(\pt\vec{x}\in\NN^k)(t_0[\vec{x}]=u_0[\vec{x}],\ldots,t_{n-1}[\vec{x}]=u_{n-1}[\vec{x}]\to t[\vec{x}]=u[\vec{x}])$ \newline
where $t_i,u_i$ are terms built with the symbols $f_i$ and the variables $\vec{x}$.

Moreover, one of these axioms concerns a particular function $f$, and says that $f$ has at most one zero:
$(\pt x,y\in\NN)(f[x]=f[y]=0\to x=y)$.

These function symbols are also interpreted in ${\mathcal N}$ where we have $f_i:\gl\NN^{k_i}\to\gl\NN$ and in
this way, we have a set ${\mathcal E}$ of realized axioms:

\centerline{$\III\force\pt\vec{x}^{\gl\NN^k}(t_0[\vec{x}]=u_0[\vec{x}],\ldots,t_{n-1}[\vec{x}]=u_{n-1}[\vec{x}]
\hto t[\vec{x}]=u[\vec{x}])$}

and in particular:

\centerline{$\III\force\pt x^{\gl\NN}\pt y^{\gl\NN}(f[x]=0,f[y]=0\hto x=y)$.}

Now suppose that, with the axioms ${\mathcal E}$ and some other axioms realized in ${\mathcal N}\!$, which do not contain
the symbols $f_i$, like for instance:

\centerline{${\mathcal E}$ + \ZFe + \emph{$\gl2$ has at most 4 elements} + \emph{$\gl\NN$ is countable + $\cdots$}}

we can prove $\ex n\indi(f[n]=0)$. Since all these axioms are realized, we get, by Theorem~\ref{adeq},
a proof-like term $\theta$ such that $\theta\force\pt n\indi(f[n]\ne0)\to\bot$ and $\theta$ may contain
the instructions $\ee,\kappa$ and, above all, $\gamma$.

We shall show how $\theta$ allows to compute the (unique) solution $n_0$ of the equation $f[n]=0$.

Note that \emph{we do not assume that the $f_i$ (in particular $f$) are recursive} (in fact, it is not even necessary
that their domain or range is $\NN$). On the other hand, we may add symbols for all recursive  functions,
since they are defined by axiom systems of this form.

Let us add a term constant $\delta$ and the rule \
$\delta\star\ul{n}_0\ps\pi\in\bbot$ in the definition of~$\bbot$ (of course, without knowing the actual value of $n_0$).
Thus, we have $\delta\force\pt n\indi(f[n]\ne0)$ and therefore $\theta\star\delta\ps\pi_0\in\bbot$.

But any process $\xi\star\pi\in\bbot$ which does not contain $\pp$ (but possibly containing $\delta$)
computes~$n_0$. We show this by recurrence on the number of applications of rules for $\bbot$ used to build it:

If this number is 1, the process is $\delta\star\ul{n}_0\ps\pi$. Else, the only non trivial case is
when the process is $\gamma\star\xi\ps\eta\ps\zeta\ps\pi$ and when two out of the three
processes $\xi\star\pi,\eta\star\pi,\zeta\star\pi$ are in~$\bbot$ (but we do not know which).
By the recurrence hypothesis, at least two of them will give the integer~$n_0$ which is therefore
determined as the only integer obtained at least two times.
Example: $\gamma\star(((\gamma)(\delta)\ul{n})(\delta)\ul{n})(\delta)\ul{n}_0
\ps(((\gamma)(\delta)\ul{n}_0)(\delta)\ul{n})(\delta)\ul{n}_0
\ps(((\gamma)(\delta)\ul{n})(\delta)\ul{n}_0)(\delta)\ul{n}_0\ps\pi$.

\begin{rem}
Since $f$ is not supposed recursive, the trivial method to find $n_0$ by trying
  $0,1,2,\ldots$
may not be available. A working algorithm is by proving $f[n_0]=0$ by equational deduction from~${\mathcal E}$.
Both are, in general, very heavy or totally impracticable. Moral: it is better to use the powerful set theoretical
axioms than only the weak equational ones to get a working program.

If $f$ may have several zeroes, we can easily define $f'$ by a system of equations such that $f'$ has only one
zero which is the first zero of $f$, and apply the method to $f'$. This supposes $f$ to be recursive.
\end{rem}

\subsection*{\texorpdfstring{$\gl\hspace{-0.37em}\gl$}{gimel}2 is not (always) trivial}
Although this will not be used in the following, it is interesting to show that there is a r.m.
of $\mathfrak{A}_0$ in which $\gl2$ is not trivial and is therefore the 4-elements Boolean algebra.

\begin{lem}\label{intersec}
If $\theta\force\pt x^{\gl2}(x\ne0,x\ne1\to\bot)$ then, for any formulas $F,G$:

$\theta'\force F,G\to F$ and $\theta'\force F,G\to G$ with $\theta'=\lbd x\lbd y(\Ccc)\lbd k((\theta)(k)x)(k)y$.
\end{lem}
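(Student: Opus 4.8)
The plan is to decode the hypothesis on $\theta$ operationally, isolate one small continuation trick, and then just run $\theta'$. First I would unfold $\theta\force\pt x^{\gl2}(x\ne0,x\ne1\to\bot)$. Since $\|\pt x^{\gl2}H(x)\|=\|H(0)\|\cup\|H(1)\|$, the term $\theta$ realizes both the instance $x=0$ and the instance $x=1$. Using the definition of $\ne$ (so $\|a\ne a\|=\|\bot\|=\Pi$ and $\|a\ne b\|=\|\top\|=\vide$ when $a\ne b$), the instance $x=0$ reads $\bot\to(\top\to\bot)$ and the instance $x=1$ reads $\top\to(\bot\to\bot)$. Reading off the two falsity values, this says exactly that $\theta\star\xi\ps\eta\ps\pi\in\bbot$ as soon as \emph{at least one} of $\xi,\eta$ realizes $\bot$: the $x=0$ instance covers the case $\xi\force\bot$ (with $\eta$ arbitrary), and the $x=1$ instance covers the case $\eta\force\bot$ (with $\xi$ arbitrary).

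Next I would record the standard continuation fact: for any closed formula $H$, if $t\force H$ and $\pi\in\|H\|$ then $(\kk_\pi)t\force\bot$. Indeed, for every stack $\varpi$ we have $(\kk_\pi)t\star\varpi\succ\kk_\pi\star t\ps\varpi\succ t\star\pi$ by push followed by the restore-the-stack rule, and $t\star\pi\in\bbot$ by assumption; hence $(\kk_\pi)t\star\varpi\in\bbot$ for every $\varpi$, i.e.\ $(\kk_\pi)t\force\bot$.

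Finally I would run $\theta'$. For $t,u\in\Lbd$ and $\pi\in\Pi$, a short computation using weak head reduction, push, and the save-the-stack rule for $\Ccc$ gives $\theta'\star t\ps u\ps\pi\succ(\Ccc)\lbd k((\theta)(k)t)(k)u\star\pi\succ\lbd k((\theta)(k)t)(k)u\star\kk_\pi\ps\pi\succ\theta\star(\kk_\pi)t\ps(\kk_\pi)u\ps\pi$. To get $\theta'\force F,G\to F$, take $t\force F$, $u\force G$ and $\pi\in\|F\|$; then $(\kk_\pi)t\force\bot$ by the continuation fact, so by the first paragraph the right-hand process lies in $\bbot$. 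Symmetrically, for $\theta'\force F,G\to G$ one takes $\pi\in\|G\|$ and uses $u\force G$ to obtain $(\kk_\pi)u\force\bot$. In both cases one of the two arguments fed to $\theta$ realizes $\bot$, which is precisely the trigger established in the first step.

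I expect the only real content to be the first step, namely correctly turning the realizability statement for $\theta$ into the symmetric \emph{``fires as soon as either argument is a $\bot$-realizer''} behaviour; this is where the degeneracy of $\gl2$ is used. Combined with the continuation trick $(\kk_\pi)t\force\bot$ (which is what lets $\Ccc$ convert a realizer of $F$ together with a stack in $\|F\|$ into a realizer of $\bot$), the reduction of $\theta'$ itself is routine bookkeeping.
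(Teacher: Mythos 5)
Your proof is correct and follows essentially the same route as the paper's: both decode the hypothesis into the two degenerate implications $\theta\force\bot,\top\to\bot$ and $\theta\force\top,\bot\to\bot$, and both use $\Ccc$ together with the continuation $\kk_\pi$ to convert a realizer of $F$ (resp.\ $G$) and a stack in $\|F\|$ (resp.\ $\|G\|$) into a $\bot$-realizer to feed to $\theta$. The only difference is presentational: the paper packages the second step as the typing judgment $x{:}F,\,y{:}G,\,k{:}\neg F\vdash((\theta)(k)x)(k)y:\bot$ and appeals to the adequacy lemma, whereas you verify the head reduction of $\theta'$ and the fact $(\kk_\pi)t\force\bot$ by hand; both are sound.
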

\begin{proof}
The hypothesis means that $\theta\force\bot,\top\to\bot$ and $\theta\force\top,\bot\to\bot$.

We have $x:F,y:G,k:\neg F\vdash kx:\bot,(\theta)(k)x:\top\to\bot,((\theta)(k)x)(k)y:\bot$.

It follows that $(\Ccc)\lbd k((\theta)(k)x)(k)y:F$ hence the first result.

We have $x:F,y:G,k:\neg G\vdash ky:\bot,(\theta)(k)x:\bot\to\bot,((\theta)(k)x)(k)y:\bot$.

It follows that $(\Ccc)\lbd k((\theta)(k)x)(k)y:G$ hence the second result.
\end{proof}
Suppose that, in every r.m.\ for the r.a.\ $\mathfrak{A}_0$, the Boolean algebra $\gl2$ is trivial.

Then the hypothesis of Lemma~\ref{intersec} is satisfied for some proof-like term $\theta$.

Choose the formulas $F\equiv\ex n\indi(n=0)$ and $G\equiv\ex n\indi(n=1)$. Then, we have
$\lbd x\,x\ul{0}\force F$ and $\lbd x\,x\ul{1}\force G$. By Lemma~\ref{intersec}, it follows
that $\theta''=((\theta')\lbd x\,x\ul{0})\lbd x\,x\ul{1}\force F$ and also $\force G$.

Then, by the algorithm given above, the proof-like term $\theta''$ computes simultaneously $0$ and~$1$,
which is a contradiction.

\subsection*{\texorpdfstring{$\gl\hspace{-0.37em}\gl\NN\hspace{-0.60em}\NN$}{N} is countable}
We define two sets, in the ground model ${\mathcal M}$:

$\HH=\{h_i\;;\;i\in\NN\}$, $\HHH=\{(h_i,h_i\ps\pi)\;;\;i\in\NN,\pi\in\Pi\}$
\newline
and also the bijection $\hh:\NN\to\HH$ such that $\hh[i]=h_i$ for every $i\in\NN$.

This bijection extends to the model ${\mathcal N}$ into a bijection $\hh:\gl\NN\to\gl\HH$. Moreover, we have
trivially $\force\HHH\subseteq\gl\HH$; in fact $\III\force\pt x(x\neps\gl\HH\to x\neps\HHH$).

\begin{lem}
$\force\pt i^{\gl\NN}\pt j^{\gl\NN}(\hh[i]\eps\HHH,\hh[j]\eps\HHH,\lbr i=j\rbr\ne0\to i=j)$.
\end{lem}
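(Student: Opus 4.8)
The plan is to produce a single proof-like term $\theta$, built only from $\Ccc$, the instruction $\ee$ and the usual combinators, which realizes the formula uniformly in $i,j\in\NN$; since $\|\pt i^{\gl\NN}\pt j^{\gl\NN}G(i,j)\|=\bigcup_{i,j\in\NN}\|G(i,j)\|$, it suffices to check $\theta\star\sigma\in\bbot$ for every $i,j\in\NN$ and every $\sigma$ in the falsity value of the implication. Fixing $i,j\in\NN$ we have $\hh[i]=h_i$, $\hh[j]=h_j$, and such a $\sigma$ has the shape $a\ps b\ps c\ps\xi\ps\pi$ with $a\force h_i\eps\HHH$, $b\force h_j\eps\HHH$, $c\force(\lbr i=j\rbr\ne0)$ and $\xi\ps\pi\in\|i=j\|$. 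First I would split on whether $i=j$. If $i=j$, then $\lbr i=j\rbr=1$, so $\|\lbr i=j\rbr\ne0\|=\vide$ and $c$ is arbitrary, while by Lemma~\ref{eq} the hypothesis $\xi\ps\pi\in\|i=j\|$ means exactly $\xi\star\pi\in\bbot$. If $i\ne j$, then $\lbr i=j\rbr=0$, so $\|\lbr i=j\rbr\ne0\|=\Pi$, i.e.\ $c\force\bot$ (so $c\star\varpi\in\bbot$ for every $\varpi$), while $\|i=j\|$ is everything. In both cases the goal is just $\theta\star a\ps b\ps c\ps\xi\ps\pi\in\bbot$.

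The heart of the argument is to drive this process to $\ee\star h_i\ps h_j\ps c\ps\xi\ps\pi$ and close it with rules~4 and~5 of the definition of $\bbot$. Indeed, if $i=j$ (so $h_i=h_j$) rule~4 selects the \emph{fourth} argument $\xi$ against the tail $\pi$, and $\xi\star\pi\in\bbot$; if $i\ne j$ rule~5 selects the \emph{third} argument $c$ against the tail $\pi$, and $c\star\pi\in\bbot$ since $c\force\bot$. So the instruction $\ee$ performs precisely the equality test on the indices that the lemma needs. It then remains to put the constants $h_i,h_j$ in front of $c\ps\xi\ps\pi$ while recovering \emph{the exact} tail $\xi\ps\pi$.

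To extract the constants I use that $a\force h_i\eps\HHH$ is $a\force(h_i\neps\HHH\to\bot)$ and $\|h_i\neps\HHH\|=\{h_i\ps\varpi\;;\;\varpi\in\Pi\}$: whenever $T\star h_i\ps\varpi\in\bbot$ for all $\varpi$ we get $T\force h_i\neps\HHH$, hence $a\star T\ps\sigma\in\bbot$ for any $\sigma$. Feeding $a$ a continuation $\lbd w(\cdots)$ thus binds $h_i$ to $w$, and nesting a second such continuation on $b$ binds $h_j$. The obstacle is that these continuations are tested against an \emph{arbitrary} tail $\varpi$, whereas the equal-branch knows only $\xi\star\pi\in\bbot$, not $\xi\star\varpi\in\bbot$; so before extracting I save the stack $\xi\ps\pi$ with $\Ccc$ and restore it afterwards, discarding $\varpi$. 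Concretely I take
\[\theta=\lbd a\lbd b\lbd c(\Ccc)\lbd k(a)\lbd w(b)\lbd w'(k)((\ee)ww'c).\]
Running $\theta\star a\ps b\ps c\ps\xi\ps\pi$ consumes $a,b,c$, then $\Ccc\star(\lbd k\cdots)\ps\xi\ps\pi$ binds $k:=\kk_{\xi\ps\pi}$, leaving $a\star T_1\ps\xi\ps\pi$ with $T_1=\lbd w(b)\lbd w'(\kk_{\xi\ps\pi})((\ee)ww'c)$. I verify $T_1\force h_i\neps\HHH$: $T_1\star h_i\ps\varpi\succ b\star T_2\ps\varpi$ with $T_2=\lbd w'(\kk_{\xi\ps\pi})((\ee)h_iw'c)$; and $T_2\force h_j\neps\HHH$ because $T_2\star h_j\ps\varpi'\succ\kk_{\xi\ps\pi}\star((\ee)h_ih_jc)\ps\varpi'\succ(\ee)h_ih_jc\star\xi\ps\pi\succ\ee\star h_i\ps h_j\ps c\ps\xi\ps\pi\in\bbot$ by the case analysis, the arbitrary $\varpi'$ being thrown away by the jump. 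Hence $b\star T_2\ps\varpi\in\bbot$, so $T_1\force h_i\neps\HHH$, so $a\star T_1\ps\xi\ps\pi\in\bbot$, giving $\theta\star a\ps b\ps c\ps\xi\ps\pi\in\bbot$. Since $\theta$ contains no $\aaa$, no $\pp$ and no $h_i$, it is proof-like, which proves the statement. The main difficulty, which this term is designed to overcome, is exactly the tension noted above: extraction of the constants necessarily exposes an uncontrolled tail, while the equal-branch must execute $\xi$ against the original $\pi$, and the $\Ccc/\kk$ save-and-restore reconciles the two.
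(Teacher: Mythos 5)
Your proof is correct and takes essentially the same route as the paper: everything reduces to driving the process to $\ee\star h_i\ps h_j\ps c\ps\xi\ps\pi$ and closing it with rules 4 and 5 of the definition of $\bbot$, which is exactly what the paper does via $\|\hh[i]\neps\HHH\|=\|\{h_i\}\to\bot\|$ and the reduction to $\ee\force\{h_i\},\{h_j\},\lbr i=j\rbr\ne0,i\ne j\to\bot$. The only difference is that you write out explicitly the proof-like term extracting the constants, where the paper delegates that bookkeeping to the extended-formula notation and adequacy; your $\Ccc/\kk$ save-and-restore is a harmless extra precaution, since with the paper's definition of $i=j$ as $i\ne j\to\bot$ the term $\xi$ realizes $\bot$ in the equal case and would succeed against an arbitrary tail anyway.
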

\begin{proof}
By definition of $\HHH$, we have $\|\hh[i]\neps\HHH\|=\|\{h_i\}\to\bot\|$.

Therefore, it suffices to show that $\ee\force\{h_i\},\{h_j\},\lbr i=j\rbr\ne0,i\ne j\to\bot$.

Let $t\force\lbr i=j\rbr\ne0$ and $u\force i\ne j$. We must show
$\ee\star h_i\ps h_j\ps t\ps u\ps\pi\in\bbot$, which follows immediately from
the execution rule of $\ee$.
\end{proof}\noindent
Thus we have ${\mathcal N}\models\pt i^{\gl\NN}\pt j^{\gl\NN}(\hh[i]\eps\HHH,\hh[j]\eps\HHH,ia_0=ja_0\to i=j)$.
It follows that:

\centerline{${\mathcal N}\models$ \emph{($\HHH$ is countable)}.}

Indeed, if $i\eps\gl\NN$ and $\hh[i]\eps\HHH$, then $i$ is determined by $ia_0$, which is an integer of
${\mathcal M}_{a_0}$ and therefore an integer of ${\mathcal N}$.

Define the function symbols pr$_0$, pr$_1:\NN\to\NN$ by:

$n=\mbox{pr}_1[n]+\frac{1}{2}(\mbox{pr}_0[n]+\mbox{pr}_1[n])(\mbox{pr}_0[n]+\mbox{pr}_1[n]+1)$
(bijection from $\NN^2$ onto $\NN$).

\begin{thm}\label{glnh}
i)~~$\kappa\force\pt\nu^{\gl\NN}\ex n^{\gl\NN}\{\hh[n]\eps\HHH,\nu=\mbox{pr}_1[n]\}$.

ii)~~$\force\gl\NN$ is countable.
\end{thm}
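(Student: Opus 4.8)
The plan is to derive~(ii) from~(i) in a few lines and to concentrate on~(i), whose only genuinely new ingredient is rule~6 (introduction of constants) for $\kappa$.

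\emph{Part (i).} I fix $\nu\in\NN$ and unfold the relativised existential; writing the equality conjunct with $\hto$ instead of $\to$ (licit by the equivalence $\force((t=u\hto F)\dbfl(t=u\to F))$), the formula to realise at $\nu$ is $H_\nu\to\bot$, where $H_\nu\equiv\pt n^{\gl\NN}\bigl(\hh[n]\eps\HHH\to((\nu=\mbox{pr}_1[n])\hto\bot)\bigr)$. So I take $\xi\force H_\nu$ and an arbitrary $\pi\in\Pi$, and I must show $\kappa\star\xi\ps\pi\in\bbot$. By rule~6 it suffices to exhibit an index $m$ with $h_m$ occurring in neither $\xi$ nor $\pi$ and with $\xi\star h_m\ps\pi\in\bbot$.

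The decisive choice is to take $m$ inside the fibre $\{n:\mbox{pr}_1[n]=\nu\}$. This fibre is infinite, since $\mbox{pr}_1$ is one coordinate of a bijection $\NN^2\to\NN$; as $\xi$ and $\pi$ are finite they contain only finitely many constants $h_i$, so a fresh $m$ with $\mbox{pr}_1[m]=\nu$ exists. I then read off membership in $\bbot$ from the $n=m$ component of $\|H_\nu\|$: since $\mbox{pr}_1[m]=\nu$ holds in ${\mathcal M}$, the guard collapses, $\|(\nu=\mbox{pr}_1[m])\hto\bot\|=\Pi$, and hence $\|\hh[m]\eps\HHH\to((\nu=\mbox{pr}_1[m])\hto\bot)\|=\{y\ps\rho\,;\,y\force\hh[m]\eps\HHH,\ \rho\in\Pi\}$. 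Thus any realiser $y$ of $\hh[m]\eps\HHH$, put in front of $\pi$, satisfies $\xi\star y\ps\pi\in\bbot$. The remaining task is to see that the \emph{bare} fresh constant $h_m$ plays the role of such a $y$, which yields $\xi\star h_m\ps\pi\in\bbot$ and therefore $\kappa\star\xi\ps\pi\in\bbot$.

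\emph{Part (ii).} By~(i) the assignment $\hh[n]\mapsto\mbox{pr}_1[n]$, well defined because $\hh$ is injective, maps $\HHH$ onto $\gl\NN$ in ${\mathcal N}$. We have already shown that $\HHH$ is countable in ${\mathcal N}$, and in ZF a surjective image of a countable (hence well orderable) set is countable, by sending each value to the least of its preimages. Therefore ${\mathcal N}\models\gl\NN$ is countable; as this is a ZF consequence of formulas already realised, a proof-like realiser is furnished by the adequacy theorem (Theorem~\ref{adeq}).

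\emph{Main obstacle.} Everything rests on the last step of~(i): reconciling the operational behaviour of $\kappa$, which pushes the \emph{bare} constant $h_m$, with the falsity value of the existential, which formally asks for a realiser of the membership $\hh[m]\eps\HHH$ rather than for $h_m$ itself. The point is that $h_m$ is fresh — it occurs neither in $\xi$ nor in $\pi$ — so that $\xi$ is uniform in it and cannot, on its own, produce $h_m$ in order to inspect the $\HHH$-membership of its argument (this is the comparison mechanism of $\ee$ used in the preceding lemma, and precisely what $\kappa$ is designed to license). Making this genericity argument rigorous, in the presence of continuations ($\Ccc$, $\kk_\pi$) that may capture the fresh $h_m$ during execution, is the delicate heart of the proof; the combinatorial input that turns a ``fresh element of $\HHH$'' into a ``preimage of the prescribed $\nu$'' is exactly the choice of the fresh index inside the infinite fibre $\mbox{pr}_1^{-1}(\nu)$.
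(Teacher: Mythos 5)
Your part (ii) and the overall shape of part (i) --- fix $\nu$, take $\xi$ realizing the universally quantified premise, pick a fresh index $m$ in the infinite fibre $\mbox{pr}_1^{-1}(\nu)$, and invoke rule~6 for $\kappa$ --- match the paper. But the step you isolate as ``the remaining task'' is not a delicate genericity argument still to be carried out: as you have set things up, it is simply false. The bare constant $h_m$ does \emph{not} realize $\hh[m]\eps\HHH$. Indeed $\hh[m]\eps\HHH$ is $\hh[m]\neps\HHH\to\bot$, so $h_m\force\hh[m]\eps\HHH$ would require $h_m\star t\ps\rho\in\bbot$ for every $t\force\hh[m]\neps\HHH$ and every $\rho$ (and such $t$ exist, e.g.\ $\lbd x\,\pp$); but no clause of the inductive definition of $\bbot$ in $\mathfrak{A}_0$ ever places a process with head $h_i$ into $\bbot$, so $h_m\star t\ps\rho\notin\bbot$. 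Hence from $\xi\force H_\nu$ you cannot conclude $\xi\star h_m\ps\pi\in\bbot$, and rule~6 does not fire. No reasoning about $\xi$ being ``uniform in $h_m$'', nor about continuations capturing $h_m$, can repair this: the failure occurs already at the level of the falsity value of $\hh[m]\eps\HHH$, before any execution takes place.

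The paper avoids the problem by reading the existential in the dual form $\pt n^{\gl\NN}(\nu=\mbox{pr}_1[n]\hto\hh[n]\neps\HHH)\to\bot$, i.e.\ with the atomic formula $\hh[n]\neps\HHH$ in \emph{conclusion} position. Since $\HHH=\{(h_i,h_i\ps\pi)\;;\;i\in\NN,\pi\in\Pi\}$ was defined precisely so that $\|\hh[n]\neps\HHH\|=\{h_n\ps\pi\;;\;\pi\in\Pi\}$, the hypothesis on $\xi$ then literally states $\xi\star h_n\ps\pi\in\bbot$ for every $n$ in the fibre and every $\pi$, and rule~6 applies at once. If you insist on your reading, the repair is to observe that $\lbd k(k)h_m$ --- not $h_m$ --- realizes $\hh[m]\eps\HHH$, because for $t\force\hh[m]\neps\HHH$ one has $\lbd k(k)h_m\star t\ps\rho\succ t\star h_m\ps\rho\in\bbot$; one then hands $\kappa$ the term $\lbd h(\xi)\lbd k(k)h$ instead of $\xi$, so the realizer becomes $\lbd x(\kappa)\lbd h(x)\lbd k(k)h$ up to the usual proof-like glue, rather than bare $\kappa$. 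Either way, the missing ingredient is the computation of $\|\hh[m]\neps\HHH\|$ from the definition of $\HHH$, not a freshness argument about the behaviour of $\xi$.
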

\begin{proof}
i)~Let $\nu\in\NN,\pi\in\Pi$ and $\xi\force\pt n^{\gl\NN}\{\nu=\mbox{pr}_1[n]\hto\hh[n]\neps\HHH\}$.
Therefore, we have $\xi\star h_n\ps\pi\in\bbot$ for all $n\in\NN$ such that $\nu=\mbox{pr}_1[n]$. There is an
infinity of such $n$, so that we can choose one such that $h_n$ does not appear in $\xi,\pi$. It follows that
$\kappa\star\xi\ps\pi\in\bbot$.

ii)~~Since $\hh:\gl\NN\to\gl\HH$ is a bijection, we obtain a surjection from $\HHH$ onto $\gl\NN$. It follows that~~
${\mathcal N}\models$ (\emph{$\gl\NN$ is countable}).
\end{proof}

\begin{thm}\label{NEPC}
${\mathcal N}\models$ NEPC (the non extensional principle of choice).
\end{thm}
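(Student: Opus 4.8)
The plan is to realize the (closed) formula expressing NEPC by a single proof-like term built from the fresh-constant instructions $\kappa$ and $\ee$, in the same spirit as the proof of Theorem~\ref{glnh}; by the equivalence between $\force F$ and ${\mathcal N}\models F$ recorded after Theorem~\ref{ZFe}, it then suffices to exhibit a proof-like realizer. First I would unfold NEPC into the assertion that, for every $\veps$-nonempty set, one can produce an $\veps$-element uniformly. Using Lemma~\ref{neg} and the definition of $(\ex x\eps a)$ as $\neg\pt x(\cdots\to x\neps a)$, realizing this reduces to the following situation: given an arbitrary term $\xi$ testing the candidate witnesses against an arbitrary stack context $\pi$, one must find a witness that $\xi$ cannot reject. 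This is exactly what rule~6 of the definition of $\bbot$ for $\mathfrak{A}_0$ provides: if $\xi\star h_n\ps\pi\in\bbot$ for some $n$ with $h_n$ not occurring in $\xi,\pi$, then $\kappa\star\xi\ps\pi\in\bbot$, and there are infinitely many admissible $n$.

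The structural input that makes the chosen witness legitimate is that the choice is confined to a countable, hence enumerable, arena. By property~\cercle{1} of Section~\ref{sect_rm} every element lies $\veps$-below some $\gl V_\alpha$, and by Lemma~\ref{eps-trans} each $\gl V_\alpha$ is $\veps$-transitive; Theorem~\ref{glnh} supplies the surjection of $\HHH$ onto $\gl\NN$ and the bijection $\hh:\gl\NN\to\gl\HH$ with $\HHH\subseteq\gl\HH$. I would use these to view each fresh constant $h_n$ produced by $\kappa$ as a genuine element of the model carrying the integer "address" $n$, so that the choice among the infinitely many available $h_n$ is performed by the freshness clause while remaining inside a countable set whose ordinals are those of ${\mathcal M}_{a_0}={\mathcal M}_{\mathcal D}$.

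The delicate point, and the one I expect to be the main obstacle, is \emph{functionality}: a choice principle must be single-valued, whereas ${\mathcal N}$ has no extensionality, so distinct $\veps$-sets may share their $\veps$-elements (as already exploited in the proof of Theorem~\ref{e_par_fin}), and two evaluations could \emph{a priori} return different constants $h_i\ne h_j$. This is precisely what the elimination rules~4 and~5 for $\ee$ are designed to collapse: rule~5 forces $\ee\star h_i\ps h_j\ps\xi\ps\eta\ps\pi\in\bbot$ whenever $i\ne j$, absorbing any attempt to distinguish two candidate witnesses into the pole, while rule~4 handles the coincident case $i=j$. I would therefore guard the $\kappa$-produced witness with $\ee$, exactly as in the unnamed lemma preceding Theorem~\ref{glnh} where one proves $\ee\force\{h_i\},\{h_j\},\lbr i=j\rbr\ne0,i\ne j\to\bot$, and read off single-valuedness of the choice from this identification of addresses.

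Finally I would assemble these pieces into one term $\theta_{\mathrm{NEPC}}$ containing $\kappa$ and $\ee$ but neither $\pp$, $\aaa$, nor any bare $h_i$, so that it is proof-like, and verify by the adequacy lemma (Theorem~\ref{adeq}) together with rules~4, 5 and~6 that $\theta_{\mathrm{NEPC}}\force\mathrm{NEPC}$; since ${\mathcal N}$ satisfies every realized closed formula, this yields ${\mathcal N}\models\mathrm{NEPC}$. The only genuinely non-routine bookkeeping is to keep the freshness side conditions of $\kappa$ compatible with the $\ee$-guards, so that the witness is simultaneously new enough for rule~6 and correctly identified by rules~4--5.
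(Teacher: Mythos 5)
Your proposal has the right ingredients in view (Theorem~\ref{glnh}, the instructions $\kappa$ and $\ee$), but it omits the one step that actually makes the argument work, and the plan as written would fail. The paper does not realize NEPC directly by a term built from $\kappa$ and $\ee$; it first invokes Lemma~\ref{choix_pi}, which says that the witnesses of an existential can always be taken of the form $f[x,\varpi]$ with $\varpi\eps\gl\Pi$ and $f$ a functional symbol defined in ${\mathcal M}$. This is the crucial reduction: it confines the candidate witnesses for $\ex y\,R(x,y)$ to a family indexed by $\gl\Pi$, which is equipotent to $\gl\NN$ (because $\Pi$ is countable in ${\mathcal M}$) and hence countable in ${\mathcal N}$ by Theorem~\ref{glnh}. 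The choice relation $\Phi(x,y)$ is then simply ``$y=f[x,\varpi]$ for the \emph{first} $\varpi\eps\gl\Pi$ in the enumeration such that $R(x,f[x,\varpi])$''. Without Lemma~\ref{choix_pi} your ``countable arena'' never materializes: a witness $y$ with $R(x,y)$ is an arbitrary element of ${\mathcal N}$, and rule~6 for $\kappa$ can only produce witnesses that are literally the constants $h_n$, i.e.\ elements of $\HHH$; it gives nothing for a quantifier ranging over the whole model. Property~\cercle{1} and Lemma~\ref{eps-trans} do not substitute for this, since $\gl V_\alpha$ is in general far from countable.

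Your treatment of functionality is also off target. In the paper single-valuedness is automatic once the witnesses are enumerated: ``the first $\varpi$ such that\dots'' is unique by definition, so no instruction is needed to enforce it. The instructions $\ee$ and $\kappa$ enter only indirectly, inside the proof of Theorem~\ref{glnh} --- via the lemma $\ee\force\{h_i\},\{h_j\},\lbr i=j\rbr\ne0,i\ne j\to\bot$, which shows that $\HHH$ is countable, and via $\kappa\force\pt\nu^{\gl\NN}\ex n^{\gl\NN}\{\hh[n]\eps\HHH,\nu=\mbox{pr}_1[n]\}$, which surjects $\HHH$ onto $\gl\NN$. They are not used as ``guards'' on a $\kappa$-produced witness in a realizer for NEPC, and rule~5 does not ``absorb attempts to distinguish two candidate witnesses'': it only identifies the integer addresses of elements of $\HHH$. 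So the two genuinely missing ideas are (i)~Lemma~\ref{choix_pi}, which indexes witnesses by stacks, and (ii)~the observation that functionality then comes for free by taking the least index in the countable enumeration of $\gl\Pi$.
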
\noindent
This means that for any formula $R(x,y)$ of \ZFe, there is a binary relation $\Phi(x,y)$ such that:

$\force\pt x\pt y\pt y'(\Phi(x,y),\Phi(x,y')\to y=y')$ (functional relation);

$\force\pt x\pt y(R(x,y)\to\ex y'\{R(x,y'),\Phi(x,y')\})$ (choice).

This does not give the usual principle of choice in the model ${\mathcal N}_\in$ of ZF\/ because, even
if $R$ is compatible with the extensional equivalence $=_\in$, $\Phi$ is not necessarily so.

\begin{proof}
By Lemma~\ref{choix_pi}, we have 
$\force\pt x\pt y(R(x,y)\to\ex\varpi^{\gl\Pi}R(x,f[x,\varpi]))$ where $f$ is a functional symbol
defined in ${\mathcal M}$. Now, $\Pi$ is countable in ${\mathcal M}$, thus $\gl\Pi$ is equipotent to $\gl\NN$
and therefore countable by Theorem~\ref{glnh}. Therefore, we can define $\Phi(x,y)$ as ``$y=f[x,\varpi]$
for the first $\varpi\eps\gl\Pi$ such that $R(x,f[x,\varpi])$''.
\end{proof}
By the results of \cite{kri2}, it follows also that ${\mathcal N}\models$ (\emph{$\RR$ is not well orderable}).

Note that these results do not use the instruction $\gamma$ and therefore are valid for any $\gl2$.

On the other hand, the following result uses $\gamma$, i.e.\ the fact that $\gl2$ is finite:
\begin{thm}\label{WOC}
${\mathcal N}\models$ the axiom of well ordered choice (WOC) i.e.: The product of a family
of non void sets indexed by a well ordered set is non void.
\end{thm}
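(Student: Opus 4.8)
The plan is to derive WOC from the non-extensional principle of choice NEPC (Theorem~\ref{NEPC}), using the well-ordering of the index set to wash out the non-extensionality of the choice functional. The only ingredient beyond NEPC is the well-founded part ${\mathcal M}_{a_0}={\mathcal M}_{\mathcal D}$, and this is exactly where the instruction $\gamma$ (the finiteness of $\gl2$) is used, in agreement with the remark that WOC, unlike NEPC, relies on $\gamma$.

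First I would reduce to an ordinal index. Since ${\mathcal N}_\in\models$ ZF and the index set is well ordered, it is isomorphic in ${\mathcal N}_\in$ to a unique ordinal $\alpha$; transporting the family along this isomorphism, it suffices to produce, for every family $(A_\beta)_{\beta<\alpha}$ of nonempty sets, a function $f$ with $f(\beta)\in A_\beta$ for all $\beta<\alpha$. Next I would recall that, because $\gl2$ is the four-element algebra, the atom $a_0$ gives the well-founded model ${\mathcal M}_{a_0}={\mathcal M}_{\mathcal D}$ (Lemma~\ref{wellf} and the discussion after it), which is isomorphic to a transitive submodel of ${\mathcal N}_\in$ \emph{with the same ordinals}. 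Hence every ordinal $\beta$ of ${\mathcal N}_\in$ has a canonical representative $c(\beta)$ in ${\mathcal M}_{a_0}$; and since ${\mathcal M}_{a_0}$ is well founded, its elements are rigid, so on ordinals the extensional equality $=_\in$ coincides with the Leibnitz equality $=$. Thus $\beta\mapsto c(\beta)$ is a genuine function on the extensional index set $\alpha$.

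Applying NEPC to the relation $R(\beta,y)\equiv(y\in A_\beta)$ yields a relation $\Phi(x,y)$ which is functional for the Leibnitz equality $=$ and satisfies $R(\beta,y)\to\ex y'\{R(\beta,y'),\Phi(\beta,y')\}$. I would then set $\Psi(\beta,y)\equiv\Phi(c(\beta),y)$. Because $c(\beta)$ depends only on the extensional class of the ordinal $\beta$, the relation $\Psi$ is now functional for $=_\in$, while it still selects an element of $A_\beta$ whenever $A_\beta\ne\vide$. Its graph over $\beta<\alpha$ is a set by the collection scheme in ${\mathcal N}_\in$, and this set is the required choice function, so $\prod_{\beta<\alpha}A_\beta\ne\vide$.

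The main obstacle is precisely this last passage: converting the merely Leibnitz-functional, hence \emph{non}-extensional, $\Phi$ of NEPC into a genuine $=_\in$-functional choice. This is where the well-ordering is essential — it lets us index by ordinals, which are rigid in the well-founded part ${\mathcal M}_{a_0}$, so that feeding $\Phi$ the canonical name $c(\beta)$ makes the choice depend only on the extensional ordinal. The delicate point to verify is that $=_\in$ and $=$ really agree on the ordinals of ${\mathcal N}_\in$, which rests on the well-foundedness of ${\mathcal M}_{a_0}$ and therefore, ultimately, on the fork instruction $\gamma$.
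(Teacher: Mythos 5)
Your proposal is correct and follows exactly the paper's route: the paper's entire proof of Theorem~\ref{WOC} is the one-line remark that WOC ``follows immediately from NEPC and the fact that On is isomorphic to a class of ${\mathcal N}$'', and your argument is precisely the expansion of that line --- reduce the well-ordered index set to an ordinal, use the canonical representation of the ordinals of ${\mathcal N}_\in$ inside the well-founded class ${\mathcal M}_{a_0}$ (hence, ultimately, the instruction $\gamma$), and compose the non-extensional NEPC functional with that representation to obtain an $=_\in$-functional choice. The only point you spell out that the paper leaves implicit is the rigidity of the ordinal representatives, which is indeed the crux of why the well-ordering hypothesis is needed.
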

\begin{proof}
This follows immediately from NEPC and the fact that On is isomorphic to a class of ${\mathcal N}$.
\end{proof}
\begin{rem} In fact, since ${\mathcal N}$ satisfies NEPC, it also satisfies the well ordered principle
of choice (WOPC).
\end{rem}

Theorem~\ref{WOC} has two interesting consequences:

\begin{enumerate}
  \item There exists a proof-like term $\Theta_{WOC}\force{\,}$WOC which means that we now have a program
for the axiom WOC, which is a $\lbd$-term with the instructions $\gamma,\kappa,\ee$.
\item This gives a new proof that AC (and even \emph{``$\,\RR$ is well orderable''}) is not a consequence
of ZF +WOC~\cite{TJAC}.
\end{enumerate}

\section{The algebra \texorpdfstring{$\mathfrak{A}_1$}{A1} and a program for AC (and others)}\label{alg_A1}

\begin{lem}\label{den}
Let $R(x,y)$ be a formula of \ZFe\ such that $R(a_0x,a_1y)$ defines, in ${\mathcal N}$, a~functional from
${\mathcal M}_{a_0}$ into~${\mathcal M}_{a_1}$ or from ${\mathcal M}_{a_1}$ into~${\mathcal M}_{a_0}$.
Then, this functional has a countable image.
\end{lem}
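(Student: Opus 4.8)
The plan is to treat the case of a functional $g$ from ${\mathcal M}_{a_0}$ into ${\mathcal M}_{a_1}$, the case of a functional ${\mathcal M}_{a_1}\to{\mathcal M}_{a_0}$ being entirely symmetric (exchange $a_0$ and $a_1$). For each $u\eps{\mathcal M}_{a_0}$ there is then a unique $v\eps{\mathcal M}_{a_1}$, which I write $g(u)$, such that $R(u,v)$; recall that $u=a_0u$ and $v=a_1v$. My goal is to enclose the \emph{whole} image of $g$ in a single countable family, exploiting the countability of $\gl\Pi$ --- which holds because $\Pi$ is countable in ${\mathcal M}$, so that $\gl\Pi$ is equipotent to $\gl\NN$ and hence countable by Theorem~\ref{glnh}.

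First I would apply Lemma~\ref{choix_pi} to the formula $F(z,u)\equiv\neg R(u,a_1z)$, producing a functional symbol $f$ defined in ${\mathcal M}$ with $\force\pt u(\pt\varpi^{\gl\Pi}\neg R(u,a_1f[\varpi,u])\to\pt z\,\neg R(u,a_1z))$. Reading this through the classical contrapositive yields $\force\pt u(\ex z\,R(u,a_1z)\to\ex\varpi^{\gl\Pi}R(u,a_1f[\varpi,u]))$. Fixing $u\eps{\mathcal M}_{a_0}$, the premise $\ex z\,R(u,a_1z)$ holds (witnessed by $g(u)$, since $g(u)=a_1g(u)$), so I obtain some $\varpi\eps\gl\Pi$ with $R(u,a_1f[\varpi,u])$; as $a_1f[\varpi,u]\eps{\mathcal M}_{a_1}$, functionality of $g$ then forces $a_1f[\varpi,u]=g(u)$.

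The heart of the argument is to strip the parameter $u$ from this witness. By Lemma~\ref{fax-afx}, applied with $a=a_1$ and with $(\varpi,u)$ as the multiplied tuple (empty second block), I have $a_1f[\varpi,u]=a_1f[a_1\varpi,a_1u]$; and since $u=a_0u$ while $a_1a_0=0$, it follows that $a_1u=(a_1a_0)u=0u=\vide$. Hence $g(u)=a_1f[a_1\varpi,\vide]$, a value depending \emph{only} on $a_1\varpi\eps\gl\Pi$. Consequently the image of $g$ is contained in $\{a_1f[a_1\varpi,\vide]\;;\;\varpi\eps\gl\Pi\}$, the image of the countable set $\gl\Pi$ under a function of ${\mathcal N}$, and is therefore countable.

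I expect the main obstacle to be precisely this last uniformity. Lemma~\ref{choix_pi} on its own only guarantees that each individual value $g(u)$ is witnessed by \emph{some} $\varpi\eps\gl\Pi$, and read naively this leaves the parameter $u$ --- ranging over a proper class --- inside the description, yielding no countable bound. It is the identity $af[\vec{x}]=af[a\vec{x}]$ of Lemma~\ref{fax-afx}, together with the fact that $a_1u=\vide$ for every $u\eps{\mathcal M}_{a_0}$, that collapses this parameter and delivers genuine countability.
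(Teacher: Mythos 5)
Your proposal is correct and follows essentially the same route as the paper: apply Lemma~\ref{choix_pi} to $\neg R$, use Lemma~\ref{fax-afx} together with $a_1a_0=0$ to erase the parameter $u$ from the witness $f$, and conclude by the countability of $\gl\Pi$ from Theorem~\ref{glnh}. The only (harmless) cosmetic difference is that you also multiply the argument $\varpi$ by $a_1$, whereas the paper leaves $\varpi$ untouched and obtains the witness set $\{f[\vide,\varpi]\;;\;\varpi\eps\gl\Pi\}$ directly.
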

Remember that ${\mathcal M}_{a_i}(i=0,1)$ is the class defined by $a_ix=x$.
\begin{proof}
Suppose, for instance, that $R(x,y)$ defines a functional from ${\mathcal M}_{a_0}$ into ${\mathcal M}_{a_1}$ i.e.:

${\mathcal N}\models\pt x\pt y\pt y'(R(a_0x,a_1y),R(a_0x,a_1y')\to a_1y=a_1y')$.

Applying Lemma~\ref{choix_pi} to the formula $\neg R(a_0x,a_1y)$, we obtain:

\centerline{$\III\,\force\pt\varpi^{\gl\Pi}\neg R(a_0x,a_1f[a_0x,\varpi])\to\pt y\neg R(a_0x,a_1y)$}

for some functional $f:{\mathcal M}\fois\Pi\to{\mathcal M}$ \emph{defined in ${\mathcal M}$}.

By Lemma~\ref{fax-afx}, we have $a_1f[a_0x,\varpi]=a_1f[a_1a_0x,\varpi]$. Since $a_1a_0=0$, we get:

\centerline{$\III\,\force\pt\varpi^{\gl\Pi}\neg R(a_0x,a_1f[\vide,\varpi])\to\pt y\neg R(a_0x,a_1y)$.}

Now $\Pi$ is countable (in ${\mathcal M}$), thus $\gl\Pi$ is equipotent to $\gl\NN$; therefore $\gl\Pi$ is countable
(in~${\mathcal N}$) by Theorem~\ref{glnh}. Hence we have, ~for some surjection $g$ from $\NN$ onto $\{f[\vide,\varpi]\;;\;\varpi\eps\gl\Pi\}$:

\centerline{$\force\pt n\indi\neg R(a_0x,a_1g(n))\to\pt y\neg R(a_0x,a_1y)$.}

Thus, we have $\force\ex y\,R(a_0x,a_1y)\to\ex n\indi R(a_0x,a_1g(n))$. It follows that $a_1g$
is a surjection from~$\NN$ onto the image of $R$.
\end{proof}
\begin{rem}\label{notwf} This shows that ${\mathcal M}_{a_0}$ and ${\mathcal M}_{a_1}$ cannot be
both well founded: otherwise, their classes of ordinals would be isomorphic, which is excluded
by Lemma~\ref{den} and Theorem~\ref{aNprcl}.
Much more general results are given in~\cite{kri2,kri4}.
\end{rem}

In order to simplify a little, we suppose that the ground model ${\mathcal M}$ satisfy V = L. This is not
really necessary: the important point is the \emph{principle of choice (PC): there is a bijective functional
between ${\mathcal M}$ and} On.

We denote by On$_{a_0}$ the class of ordinals of ${\mathcal M}_{a_0}$, which is order isomorphic to
the class On of ordinals of~${\mathcal N}_\in$.

\begin{lem}\label{borne_On}
${\mathcal N}\models\ex X\pt x\ex y\{x=_\in y,\lbr a_1y\in a_1X\rbr\ge a_1\}$.
\end{lem}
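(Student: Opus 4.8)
The plan is to read the statement through the product decomposition ${\mathcal N}={\mathcal M}_{a_0}\fois{\mathcal M}_{a_1}$ recorded after Lemma~\ref{wellf}, under which an element $x$ corresponds to the pair $(a_0x,a_1x)$. Since both $a_1y$ and $a_1X$ are fixed by multiplication by $a_1$, Lemma~\ref{fax-afx} makes $\lbr a_1y\in a_1X\rbr$ an $a_1$-pure truth value, so $\ge a_1$ just means $=a_1$, i.e.\ the condition is ${\mathcal M}_{a_1}\models a_1y\in a_1X$. Thus the statement asks for a single set $X$ whose $a_1$-part is \emph{one element} of the model ${\mathcal M}_{a_1}$ that, after extensional collapse, meets the $=_\in$-class of every set of ${\mathcal N}$. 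As ${\mathcal M}_{a_1}\models$ ZF there is no universal set inside it, so the point must be that $=_\in$ is strictly coarser than the Leibnitz equality of ${\mathcal N}$ (two product elements differing only in their $a_1$-part can be $=_\in$): we are free to \emph{reshape} $x$ into an extensionally equal $y$ before testing $a_1y\in a_1X$. The whole argument hinges on the fact, from Remark~\ref{notwf}, that ${\mathcal M}_{a_1}$ is \emph{not} well founded; this is exactly what lets a single set $a_1X$ of ${\mathcal M}_{a_1}$ carry, as seen from ${\mathcal N}$, a proper class of members.

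Concretely, I would use the ill-foundedness to pick a ``rank level'' that is nonstandard from the viewpoint of ${\mathcal M}_{a_1}$: an infinite $\in$-descending sequence in ${\mathcal M}_{a_1}$ yields an ordinal $\delta$ of ${\mathcal M}_{a_1}$ lying strictly above its well founded part, and I would take $X$ with $a_1X$ equal to the level $V_\delta$ as computed inside ${\mathcal M}_{a_1}$. For a given $x$, property~\cercle{1} supplies a ground ordinal $\alpha$ with $x\eps\gl V_\alpha$, whence ${\mathcal M}_{a_1}\models a_1x\in a_1\gl V_\alpha$; the remaining task is to push the $a_1$-part of $x$ down into the chosen level while preserving the extensional value, i.e.\ to produce $y=_\in x$ with $a_1y\in a_1X$. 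I would realize the resulting closed formula of \ZFe\ by defining the transformation $x\mapsto y$ as a functional in ${\mathcal M}$, extending it to ${\mathcal N}$, and checking the two conjuncts $x=_\in y$ and $a_1y\in a_1X$ with the identity-term machinery of Theorem~\ref{idterm}, Lemma~\ref{fax-afx} to carry the factor $a_1$ through functional symbols, and Lemma~\ref{eps-trans} for the $\veps$-transitivity of the $\gl$-levels; the countability results of Theorem~\ref{glnh} and Lemma~\ref{den} are the tools that keep the $a_1$-side small enough for the reshaping to be definable.

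The main obstacle, and the heart of the lemma, is obtaining a \emph{uniform} level $\delta$ working for every $x$ at once, rather than one depending on $\alpha$. Naively one wants $\delta$ above the copy in ${\mathcal M}_{a_1}$ of every ground ordinal, but because ${\mathcal M}_{a_1}$ may already be ill founded low down (say at $\omega$) no nonstandard ordinal need dominate all copies of standard ordinals, so the bound cannot come from comparing ordinals directly. The leverage must instead be the interplay between the \emph{internal} boundedness of $a_1X$ in ${\mathcal M}_{a_1}$ and its \emph{external} richness forced by ill-foundedness: one shows that, after using the extensional freedom to lower the internal ${\mathcal M}_{a_1}$-rank of the $a_1$-part below $\delta$, the extensional value is unchanged, so a single nonstandard level already catches a representative of every $=_\in$-class. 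Making this reshaping explicit, checking that it leaves the extensional value fixed, and packaging it as a proof-like realizer in the $\lbd$-calculus (in the style of the fresh-constant argument of Theorem~\ref{glnh}) is where the actual work lies; here the principle of choice (V = L in ${\mathcal M}$) is what lets me well order the candidate $y$'s and select one canonically.
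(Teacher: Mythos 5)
There is a genuine gap, and you have located it yourself: the uniformity of the level $X$ is, as you say, ``the main obstacle, and the heart of the lemma'', and your proposal does not overcome it. Your plan is to take $a_1X$ to be a nonstandard rank level $V_\delta$ computed inside ${\mathcal M}_{a_1}$, obtained from its ill-foundedness; but, as you yourself observe, ${\mathcal M}_{a_1}$ may be ill founded already low down, so no single $\delta$ need dominate the copies of all ground ordinals, and the sentence ``one shows that \ldots\ a single nonstandard level already catches a representative of every $=_\in$-class'' is exactly the assertion to be proved — nothing in your sketch proves it. The direction of the logic is also reversed: the ill-foundedness of ${\mathcal M}_{a_1}$ is a \emph{consequence} (Remark~\ref{notwf}) of the tool that actually does the work here, namely Lemma~\ref{den}, and the proof of the present lemma does not use ill-foundedness at all.

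The missing idea is a pigeonhole argument through Lemma~\ref{den}. By WOPC choose, for each $\alpha$ in On$_{a_0}$, a set $W_\alpha$ extensionally equal to $V_\alpha$. Then $\alpha\mapsto a_1W_\alpha$ is a functional from ${\mathcal M}_{a_0}$ into ${\mathcal M}_{a_1}$, so by Lemma~\ref{den} its image is countable; hence a single value $a_1W_{\alpha_0}$ is attained on an \emph{unbounded} class $U$ of ordinals. Every $x$ belongs to $W_\alpha$ for some $\alpha$ with $U(\alpha)$ (since $W_\alpha=_\in V_\alpha$ and $U$ is unbounded), hence is $=_\in$ to some $y\eps W_\alpha$; Lemma~\ref{eps-Cl} gives $\lbr y\in\mbox{Cl}[W_\alpha]\rbr=1$, and Lemma~\ref{fax-afx} together with $a_1W_\alpha=a_1W_{\alpha_0}$ turns this into $a_1\lbr a_1y\in a_1\mbox{Cl}[a_1W_{\alpha_0}]\rbr=a_1$. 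Thus $X=\mbox{Cl}[a_1W_{\alpha_0}]$ works. Without this use of Lemma~\ref{den} — which you cite only in passing as something that ``keeps the $a_1$-side small'' — there is no reason a single set of ${\mathcal M}_{a_1}$ should meet every extensional class, and your reshaping step has nothing to aim at.
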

\begin{proof}
For each $\alpha$ in On$_{a_0}$, we can choose, by WOPC, an element $W_\alpha$ of the class of sets
extensionally equivalent to $V_\alpha$.
Now, the functional $\alpha\mapsto a_1W_\alpha$ from On$_{a_0}$ into ${\mathcal M}_{a_1}$ has a countable image
by Lemma~\ref{den}. It follows that there exists $\alpha_0$ such that $a_1W_\alpha=a_1W_{\alpha_0}$
for an unbounded class~$U$ of $\alpha$ in On$_{a_0}$.

Now, we have $\pt x\ex\alpha\{U(\alpha),x\in W_\alpha\}$ (in any model of ZF\/, every set belongs to some
$V_\alpha$) and therefore $\pt x\ex y\ex\alpha\{U(\alpha),x=_\in y,y\eps W_\alpha\}$.
By Lemma~\ref{eps-Cl}, it follows that:

$\pt x\ex y\ex\alpha\{U(\alpha),x=_\in y,\lbr y\in\mbox{Cl}[W_\alpha]\rbr=1\}$.
But, by Lemma~\ref{fax-afx}, we have:

$a_1=a_1\lbr y\in\mbox{Cl}[W_\alpha]\rbr=a_1\lbr a_1y\in a_1\mbox{Cl}[a_1W_\alpha]\rbr$ and, by $U(\alpha)$,
$a_1W_\alpha=a_1W_{\alpha_0}$.

Therefore, we have
$\pt x\ex y\ex\alpha\{U(\alpha),x=_\in y,a_1\lbr a_1y\in a_1\mbox{Cl}[a_1W_{\alpha_0}]\rbr=a_1\}$.

Hence the result, with $X=\mbox{Cl}[a_1W_{\alpha_0}]$.
\end{proof}

\begin{thm}\label{gen_ac}
There exists a generic extension ${\mathcal N}_\in[G]$ of ${\mathcal N}_\in$ which satisfies AC.
\end{thm}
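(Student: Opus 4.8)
The plan is to exhibit an explicit \emph{set} forcing $C$ over ${\mathcal N}_\in$ and to check that the $C$-generic extension, which is a model of ZF by Theorem~\ref{theta_A}, in fact satisfies AC. The strategy is to split every set along the decomposition $x=a_0x\sqcup a_1x$ and to control the two sides separately. On the $a_0$-side this is free: since ${\mathcal M}$ satisfies V~=~L, the well founded class ${\mathcal M}_{a_0}={\mathcal M}_{\mathcal D}$ is the constructible universe of ${\mathcal N}_\in$, hence is definable in ${\mathcal N}_\in$ and carries the canonical well-ordering $<_{\mathrm L}$; equivalently it is in definable bijection with On. On the $a_1$-side, Lemma~\ref{borne_On} supplies a single set $X$ such that every set is extensionally equal to some $y$ with $\lbr a_1y\in a_1X\rbr\ge a_1$, i.e.\ whose projection $a_1y$ lies in $X$ inside ${\mathcal M}_{a_1}$. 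Thus, up to $=_\in$, every set of ${\mathcal N}_\in$ is coded by a pair in the class ${\mathcal M}_{a_0}\fois X$, and the only missing ingredient for a global well-ordering is a well-ordering of the \emph{set} $X$.

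First I would fix $C$ to be, for definiteness, the collapse $\mbox{Col}(\omega,X)$ --- the finite partial functions from $\omega$ into $X$, ordered by reverse inclusion. This is genuinely a set, since $X$ is, so it fits the framework of Section~\ref{e_ge}; the associated algebra ${\mathcal A}_1$ yields the $C$-generic extension ${\mathcal N}_\in[G]$, and by Theorem~\ref{theta_A} we have ${\mathcal N}_\in[G]\models$~ZF. In ${\mathcal N}_\in[G]$ the generic surjection $\omega\to X$ makes $X$ countable, hence well-orderable, while ${\mathcal M}_{a_0}$ remains equal to $\mathrm L^{{\mathcal N}_\in[G]}$ with its absolute well-ordering $<_{\mathrm L}$, since forcing adds no constructible sets.

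It then remains to combine the two well-orderings. Ordering ${\mathcal M}_{a_0}\fois X$ lexicographically and selecting, for each $=_\in$-class, its $<$-least representative $y$ with $a_1y\in X$, one obtains an injection of the universe into the now well-ordered class ${\mathcal M}_{a_0}\fois X$; pulling back gives a well-ordering of the universe. Since every set of ${\mathcal N}_\in[G]$ has a name lying in ${\mathcal N}_\in$, and names are themselves coded by ${\mathcal M}_{a_0}\fois X$ together with conditions of the now countable $C$, the same coding well-orders all sets of the extension, including those not present in ${\mathcal N}_\in$. Hence ${\mathcal N}_\in[G]\models$~ZFC, which is the assertion.

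The hard part will be that the decomposition $x=a_0x\sqcup a_1x$ and the projection $x\mapsto a_1x$ are notions of the realizability model ${\mathcal N}$, in the language with $\eps$ and $\gl2$, and are \emph{not} available inside the pure ZF-model ${\mathcal N}_\in$ (recall from Remark~\ref{notwf} that ${\mathcal M}_{a_1}$ is not even well founded). Consequently the coding of Lemma~\ref{borne_On} cannot be applied naively in ${\mathcal N}_\in[G]$: it must be transported through the forcing translation of Section~\ref{e_ge}, namely Lemma~\ref{force_fforce} and Theorem~\ref{e_forcec-fforce}, so that ``$p\forcec$~AC'' is reduced to a $\fforce$-statement about the algebra ${\mathcal A}_1$. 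The main technical burden is to verify that $a_1$-boundedness persists after forcing --- that every set of ${\mathcal N}_\in[G]$, and not merely of ${\mathcal N}_\in$, still admits a representative with $a_1$-part in $X$ --- and that collapsing $X$ leaves the definable well-ordering of ${\mathcal M}_{a_0}$ intact.
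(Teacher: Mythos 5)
Your proposal correctly isolates the difficulty, but it does not overcome it, and the step you defer to the last paragraph is precisely the mathematical content of the theorem. The surjective functional $(a_0x,a_1y)\mapsto a_0x\sqcup a_1y$ from ${\mathcal M}_{a_0}\fois Y$ onto ${\mathcal N}_\in$ (where $Y$ is the set of the $a_1y$ with $\lbr a_1y\in a_1X\rbr\ge a_1$, obtained from Lemmas~\ref{ax-in-aX-gea} and~\ref{borne_On}) is a functional of the realizability model ${\mathcal N}$, written with $\veps$, $a_0$, $a_1$ and $\gl2$; it is \emph{not} a class of ${\mathcal N}_\in$, so it is not carried into ${\mathcal N}_\in[G]$ by the forcing, and your final step ``select the $<$-least representative $y$ with $a_1y\in X$'' cannot be performed inside the extension. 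Your suggested repair --- pushing the $a_1$-decomposition through Lemma~\ref{force_fforce} and Theorem~\ref{e_forcec-fforce} and verifying that ``$a_1$-boundedness persists after forcing'' --- is not viable either: the forcing relation $\forcec$ is a formula of \ZFe\ about formulas of ZF only, and $a_1$, $\veps$ do not occur in the forced language, so there is no such statement to preserve.

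The missing idea is a reduction, carried out entirely \emph{before} any forcing, of the ${\mathcal N}$-class surjection to a fact that ${\mathcal N}_\in$ can express. The paper proceeds as follows: from V~=~L in ${\mathcal M}_{a_0}$ one gets a surjective functional $\Psi:\mbox{On}\fois Y\to{\mathcal N}_\in$ (still only an ${\mathcal N}$-functional); one then observes that the equivalence relations $\simeq_\alpha$ on $Y$ defined by $\Psi(\alpha,y)=_\in\Psi(\alpha,y')$ all lie in ${\mathcal P}(Y^2)$ and hence form a \emph{set}, so that every image $Z_\beta=\Psi[\{\beta\}\fois Y]$ is in bijection with some $Z_\alpha$, $\alpha<\alpha_0$; using NEPC one extracts the single set $Z=\bigcup_{\alpha<\alpha_0}Z_\alpha$ of ${\mathcal N}_\in$ together with surjections $S_\beta:Z\to Z_\beta$ for all $\beta$. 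Since a function between two sets is a set of ${\mathcal N}_\in$ even when it is defined by an $\veps$-formula, this yields a statement internal to ${\mathcal N}_\in$ --- every set is a surjective image of a subset of $\alpha\fois Z$ for some ordinal $\alpha$ --- and only then does one force, making the set $Z$ (not $X$) well ordered; AC in ${\mathcal N}_\in[G]$ then follows by the standard name argument you sketch. Without the $\simeq_\alpha$ step you are collapsing the wrong object and have no ${\mathcal N}_\in$-internal coding to combine with the well-ordering of ${\mathcal M}_{a_0}$.
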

\begin{proof}
Let $Y=\{a_1y\;;\;\lbr a_1y\in a_1X\rbr\ge a_1\}$: such a set exists in ${\mathcal N}$ by Lemma~\ref{ax-in-aX-gea}.

By Lemma~\ref{borne_On}, $(a_0x,a_1y)\mapsto a_0x\sqcup a_1y$ is a surjective functional from
${\mathcal M}_{a_0}\fois Y$ onto the whole model ${\mathcal N}_\in$ of ZF.

Since ${\mathcal M}_{a_0}\models$ V = L, there exists a surjective functional $\Psi:\mbox{On}\fois Y\to{\mathcal N}_\in$
and therefore ${\mathcal N}_\in$ is the union of the $Z_\alpha$ with $\alpha$ in On, where $Z_\alpha$ is the image by $\Psi$
of $\{\alpha\}\fois Y$.

Let us consider, for each ordinal $\alpha$ of ${\mathcal N}_\in$, the equivalence relation $\simeq_\alpha$ on $Y$ defined
by $y\simeq_\alpha y'\Dbfl\Psi(\alpha,y)=_\in\Psi(\alpha,y')$. These equivalence relations form a set
(included in ${\mathcal P}(Y^2)$). Thus, there exist an ordinal $\alpha_0$ and for all $\beta$, a surjection
$S_\beta:Z\to Z_\beta$ with $Z=\bigcup_{\alpha<\alpha_0}Z_\alpha$.

Finally, using NEPC (non extensional principle of choice) we get a surjective functional from $\mbox{On}\fois Z$
onto ${\mathcal N}_\in$ (note that, by definition, $Z$ is a set of ${\mathcal N_\in}$).

Let us now make $Z$ countable (or even only well ordered) by means of a generic $G$ on~${\mathcal N}_\in$;
then ${\mathcal N}_\in[G]\models$ AC.
\end{proof}
In the following, we take for $(C,\le)$ the set of conditions of ${\mathcal N}_\in$ given by
Theorem~\ref{gen_ac}. Applying the constructions of Section~\ref{e_ge}, we obtain a r.a.
$\mathfrak{A}_1$ and a generic model ${\mathcal N}_\in[G]$ which satisfies~AC by Theorem~\ref{gen_ac}.
Therefore, we have ${\mathcal N}_\in\models(\1\forcec\mbox{AC})$.

\begin{rem}\label{rem_theta_AC}
The formula $\CF(u)$ which defines the forcing must specify that, if $\gl2$
is the 2-elements Boolean algebra, the set of forcing conditions is trivial (for instance a singleton).
Indeed, in this case, ${\mathcal N}={\mathcal M}_{\mathcal D}$ is well ordered, therefore ${\mathcal N}_\in\models$ AC, and
there is no need to extend it.
\end{rem}

It follows that $\force(\1\forcec\mbox{AC})$ and finally $\fforce$AC by Theorem~\ref{e_forcec-fforce}.
Hence the:
\begin{thm}\label{theta_AC}
There exists a proof-like term $\Theta\!_{AC}$ of the r.a.\ $\mathfrak{A}_0$ such that
$(\Theta\!_{AC},\1)\fforce\mbox{AC}$.
\end{thm}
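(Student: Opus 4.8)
The plan is to follow, almost verbatim, the argument used for Theorem~\ref{theta_A}; the only genuine difference is that AC is not one of the axioms of ZF, so instead of appealing to the trivial fact that every generic extension of a ZF model satisfies ZF, I will use the \emph{specific} set of conditions $(C,\le)$ delivered by Theorem~\ref{gen_ac}, which is engineered precisely so that the $C$-generic extension ${\mathcal N}_\in[G]$ models AC.

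First I would record that, for this fixed $(C,\le)$, Theorem~\ref{gen_ac} gives ${\mathcal N}_\in[G]\models\mbox{AC}$. Since the construction well orders (indeed makes countable) the set $Z$, while ${\mathcal N}_\in$ already carries a surjective functional from $\mbox{On}\fois Z$ onto ${\mathcal N}_\in$, AC holds in \emph{every} $C$-generic extension; by the forcing theorem this is the same as ${\mathcal N}\models(\1\forcec\mbox{AC})$. The next step is to convert this truth into an explicit realizer. Because the whole derivation leading to Theorem~\ref{gen_ac} rests only on facts that are \emph{realized} in $\mathfrak{A}_0$ --- the principles WOPC and NEPC (Theorem~\ref{NEPC}), the countability of $\gl\Pi$ (via Theorem~\ref{glnh}), the fact that $\gl2$ has four elements, together with the axioms of \ZFe\ --- the forcing statement $\1\forcec\mbox{AC}$ is a consequence of realized formulas. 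By Theorems~\ref{adeq} and~\ref{ZFe} there is therefore a proof-like term $\Theta'_{AC}$ of $\mathfrak{A}_0$ with $\Theta'_{AC}\force(\1\forcec\mbox{AC})$.

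Finally I would invoke Theorem~\ref{e_forcec-fforce} with $F\equiv\mbox{AC}$ and $p=\1$. The implication $\xi\force(p\forcec F)\Fl(\chi_{AC}\xi,p)\fforce F$, applied to $\xi=\Theta'_{AC}$, yields $(\chi_{AC}\Theta'_{AC},\1)\fforce\mbox{AC}$, so that $\Theta_{AC}=\chi_{AC}\Theta'_{AC}$ is the required term; it is proof-like since both $\chi_{AC}$ (furnished by Theorem~\ref{e_forcec-fforce}) and $\Theta'_{AC}$ are.

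The delicate point is the passage from the model-theoretic assertion ${\mathcal N}_\in[G]\models\mbox{AC}$ to the \emph{realized} forcing formula $\force(\1\forcec\mbox{AC})$: one must be confident that the entire chain culminating in Theorem~\ref{gen_ac} can be formalized over \ZFe\ enriched with the realized auxiliary principles, so that the adequacy lemma actually produces $\Theta'_{AC}$. I would also treat separately the degenerate situation flagged in Remark~\ref{rem_theta_AC}: when $\gl2$ is the two-element algebra, ${\mathcal N}={\mathcal M}_{\mathcal D}$ is already well ordered and AC holds outright, so the defining formula $\CF$ must be arranged to make the forcing trivial in that case while keeping $\chi_{AC}$ operative.
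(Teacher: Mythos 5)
Your proposal matches the paper's own argument: the paper likewise fixes $(C,\le)$ as the conditions from Theorem~\ref{gen_ac}, concludes ${\mathcal N}_\in\models(\1\forcec\mbox{AC})$, obtains a proof-like realizer of $\1\forcec\mbox{AC}$ from the fact that the whole chain is realized, and converts it via Theorem~\ref{e_forcec-fforce} exactly as in the proof of Theorem~\ref{theta_A}, with the same caveat from Remark~\ref{rem_theta_AC}. This is correct and essentially the same approach.
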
\noindent
More generally by Theorem~\ref{theta_A}, it follows that for each axiom $A$ of ZFC, there exists a proof-like term
$\Theta\!_A$ of the r.a.\ $\mathfrak{A}_0$ such that $(\Theta\!_A,\1)\fforce A$. Note that $\Theta\!_{AC}$
is the only one which contains the instructions $\gamma,\kappa,\ee$.

\subsection*{Example of computation with \texorpdfstring{$\Theta\!_{AC}\hspace{-1.56em}\Theta\!_{AC}$}{Theta}}
Consider a function $f:\NN\to2$ such that we have a proof of $\ex n\indi(f[n]=0)$ in the theory
ZF + AC + ${\mathcal E}$ where ${\mathcal E}$ is a set of axioms of the form:

$(\pt\vec{x}\in\NN^k)(t_0[\vec{x}]=u_0[\vec{x}],\ldots,t_{n-1}[\vec{x}]=u_{n-1}[\vec{x}]
\to t[\vec{x}]=u[\vec{x}])$
\newline
(cf.~Section~\ref{alg_A0}, \emph{Computing with the instruction $\gamma$}).

We denote by ${\mathcal E}_\in$ the conjunction of the corresponding set, written in the language of~${\mathcal N}_\in$:

$(\pt\vec{x}\in\NN^k)(t_0[\vec{x}]=_\in u_0[\vec{x}],\ldots,t_{n-1}[\vec{x}]=_\in u_{n-1}[\vec{x}]
\to t[\vec{x}]=_\in u[\vec{x}])$.

This proof gives a term $\Phi$ written with the only combinators $\BBB,\CCC,\III,\KKK,\WWW,\Ccc$ such that:

$\vdash\Phi:\mbox{ZFC}_0,{\mathcal E}_\in\to(\ex n\in\NN)(f[n]=_\in0)$
\newline
for some finite conjunction ZFC$_0$ of axioms of ZFC.

Therefore, by theorems~\ref{adeq}, \ref{theta_A} and~\ref{theta_AC}, we have in the r.a.\ $\mathfrak{A}_1$:

$(\Phi^*\Theta_{ZFC_0},\1)\fforce({\mathcal E}_\in\to(\ex n\in\NN)(f[n]=_\in0))$
\newline
(remember that if $t\in\Lbd$, we obtain $t^*$ replacing $\CCC,\KKK,\WWW,\Ccc$ by
$\CCC^*,\KKK^*,\WWW^*,\Ccc^*$).

\smallskip\noindent
By Theorem~\ref{e_forcec-fforce} applied with $F\equiv{\mathcal E}_\in\to(\ex n\in\NN)(f[n]=_\in0)$, it follows that:

\centerline{$(\chi'_F)(\Phi^*)\Theta_{ZFC_0}\force\big(\1\forcec({\mathcal E}_\in\to
(\ex n\in\NN)(f[n]=_\in0))\big)$.}

Since $\forcec$ is a forcing on ${\mathcal N}_\in$, and $F$ is arithmetical, we have:

\centerline{\ZFe\ $\vdash\big(\1\forcec({\mathcal E}_\in\to(\ex n\in\NN)(f[n]=_\in0))\big)\to
\big({\mathcal E}\to\ex n\indi(f[n]=0)\big)$.}

Hence, there is a proof-like term $\Xi$ in the r.a.\ $\mathfrak{A}_0$ such that:

\centerline{$(\Xi)(\chi'_F)(\Phi^*)\Theta_{ZFC_0}\force{\mathcal E}\to\ex n\indi(f[n]=0)$.}

Now we can apply the algorithm of Section~\ref{alg_A0}, \emph{Computing with the instruction $\gamma$}.

\subsection*{Programs for other axioms}
Consider an axiom $A$ that we can prove consistent with ZFC
by forcing. Let $C_0$ be the set of conditions given by Theorem~\ref{gen_ac} and take for $(C,\le)$
the set of conditions which gives the iterated forcing for $A$ over $C_0$. Applying the constructions of
Section~\ref{e_ge},  we obtain a r.a.\ $\mathfrak{A}_1$ and a generic model ${\mathcal N}_\in[G]$ which satisfies~$A$.
It follows that $\force(\1\forcec A)$ and finally $\fforce A$ by Theorem~\ref{e_forcec-fforce}. Hence:

\begin{thm}\label{theta_AC+A}
For every formula $A$ of the language of ZF which can be proved consistent with ZFC by forcing, there exists
a proof-like term $\Theta_A$ of the r.a.\ $\mathfrak{A}_0$ such that $(\Theta_A,\1)\fforce F$.
\end{thm}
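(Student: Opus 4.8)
The plan is to follow, almost verbatim, the proof of Theorem~\ref{theta_A} (and its AC specialization Theorem~\ref{theta_AC}), the only new ingredient being that the single $C$-forcing is replaced by a two-step \emph{iteration}. The point of the hypothesis is that $A$ becomes available by forcing only over a model of ZFC, whereas ${\mathcal N}_\in$ need not itself satisfy AC; so I would first force AC over ${\mathcal N}_\in$ and only then force $A$ over the result.

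Concretely, I would set $C_0$ to be the poset of conditions furnished by Theorem~\ref{gen_ac}, so that the $C_0$-generic extension of ${\mathcal N}_\in$ is a model of ZFC. Let $\dot Q$ be the $C_0$-name for the forcing notion which adds $A$ (legitimate now, since this forcing is performed over a model of ZFC), and put $C = C_0 \ast \dot Q$ for the two-step iteration, ordered as usual; this is the poset to be fed into the constructions of Section~\ref{e_ge} to build $\mathfrak{A}_1$. By the iteration (composition) lemma for forcing, which is a theorem of ZF, the trivial condition $\1$ of $C$ forces $A$ in the $C$-generic extension ${\mathcal N}_\in[G]$, i.e.\ $\1 \forcec A$ is a consequence of ZF. By the corollary to Theorem~\ref{ZFe}, every consequence of ZF is realized by a proof-like term, so there is a proof-like term $\Theta'_A$ of $\mathfrak{A}_0$ with $\Theta'_A \force (\1 \forcec A)$. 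Applying the first implication of Theorem~\ref{e_forcec-fforce} with $p = \1$ then yields $(\chi_A \Theta'_A, \1) \fforce A$, so that $\Theta_A = \chi_A \Theta'_A$ is the required proof-like term of $\mathfrak{A}_0$.

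The hard part — really the only nontrivial point, the rest being bookkeeping over the already-proved Theorems~\ref{gen_ac} and~\ref{e_forcec-fforce} — is to make sure the hypothesis ``$A$ can be proved consistent with ZFC by forcing'' actually delivers a \emph{ZF-provable} forcing statement $\1 \forcec_C A$ for the concrete iteration $C$ above, and not merely a metatheoretic consistency result. The $A$-forcing must be presented uniformly enough that its composition with $C_0$ is a definable poset of ${\mathcal N}_\in$, and the assertion ``$\1$ forces $A$'' must be derived inside ZF through the iteration lemma. Once this is granted, the argument closes exactly as in the AC case. It is also precisely here that the excluded case $V = L$ appears: $V=L$ cannot be forced, hence supplies no such poset $\dot Q$, which is why the theorem is stated only for axioms consistent with ZFC \emph{by forcing}.
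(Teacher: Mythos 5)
Your proposal is correct and follows essentially the same route as the paper: take $C$ to be the iteration of the $A$-forcing over the poset $C_0$ of Theorem~\ref{gen_ac}, deduce $\force(\1\forcec A)$, and conclude via Theorem~\ref{e_forcec-fforce} exactly as in Theorems~\ref{theta_A} and~\ref{theta_AC}. Your added remarks on the definability of the iterated poset and the exclusion of $V=L$ only make explicit points the paper leaves implicit.
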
\noindent
Then we can use $\Theta_A$ in computations as explained above.

\begin{rem}\label{finrem}
In the introdution, I said we consider the problem of turning proofs in ZFC into programs. But we must be more
precise because this may seem rather easy by means of classical realizability; indeed, here is an algorithm to transform a proof in ZFC of an arithmetical formula $A$ into a program:
first, transform it into a proof of ZF $\vdash A$ by restricting its quantifiers to L or HOD.
Then, get a program from this new proof using c.r.

The problem solved in the present article is more difficult and is much more interesting from the point of view of
computer science: consider the program P obtained from a proof of ZF $\vdash (AC\to A$) and,
\emph{forgetting this proof}, transform directly the program~P. This is more difficult because the program P
contains much less information than the proof which gave it.
In fact, the interesting point is that this is possible.

You may compare with the problem of modifying a program, knowing its source code or only its compiled code.
It is well known that the first situation is (very) much easier.
\end{rem}

\bibliography{biblio_AC}{}
\bibliographystyle{alpha}

\end{document}